%
%
%
%
%
%
%
\documentclass[%
 reprint,
nofootinbib,
 amsmath,amssymb,
 aps,
prapplied,
]{revtex4-2}
\usepackage{graphicx, color}
\usepackage{subfigure}
\usepackage{bm}
\usepackage[colorlinks=true,citecolor=blue,linkcolor=magenta]{hyperref}

\usepackage{amsthm}
\theoremstyle{plain}
\newtheorem{theorem}{Theorem}
\newtheorem{lemma}[theorem]{Lemma}
\newtheorem{corollary}[theorem]{Corollary}
\theoremstyle{definition}

\newtheorem{assumption}[theorem]{Assumption}

\begin{document}

\preprint{APS/123-QED}

\title{Trainability Enhancement of Parameterized Quantum Circuits \\ via Reduced-Domain Parameter Initialization}

\author{Yabo Wang}
\affiliation{Key Laboratory of Systems and Control, Academy of Mathematics and Systems Science, Chinese Academy of Sciences, Beijing 100190, P. R. China}
\affiliation{University of Chinese Academy of Sciences, Beijing 100049, P. R. China}
\author{Bo Qi}
\email{qibo@amss.ac.cn}
\affiliation{Key Laboratory of Systems and Control, Academy of Mathematics and Systems Science, Chinese Academy of Sciences, Beijing 100190, P. R. China}
\affiliation{University of Chinese Academy of Sciences, Beijing 100049, P. R. China}
\author{Chris Ferrie}
\affiliation{Center for Quantum Software and Information, University of Technology Sydney, Ultimo NSW 2007, Australia}
\author{Daoyi Dong}
\affiliation{CIICADA Lab, School of Engineering, Australian National University, Canberra ACT 2601, Australia}

\date{\today}
\hyphenpenalty=5000
\tolerance=1000

\begin{abstract}
Parameterized quantum circuits (PQCs) have been widely used as a machine learning model to explore the potential of achieving quantum advantages for various tasks. However, training PQCs is notoriously challenging owing to the phenomenon of plateaus and/or the existence of (exponentially) many spurious local minima. To enhance trainability, in this work we propose an efficient parameter initialization strategy with theoretical guarantees. We prove that by reducing the initial domain of each parameter inversely proportional to the square root of circuit depth, the magnitude of the cost gradient decays at most polynomially with respect to qubit count and circuit depth. Our theoretical results are substantiated through numerical simulations of variational quantum eigensolver tasks. Moreover, we demonstrate  that the reduced-domain initialization strategy can protect specific quantum neural networks from exponentially many spurious local minima. Our results highlight the significance of an appropriate parameter initialization strategy, offering insights to enhance the trainability and convergence of variational quantum algorithms.
\end{abstract}

\maketitle


\section{Introduction}

Variational quantum algorithms (VQAs) with parameterized quantum circuits (PQCs)~\cite{McClean2016,Cerezo2021v,Jerbi2023,liu2021a,Benedetti2019}  have emerged as a leading strategy
to explore the power of quantum computing in the current noisy intermediate-scale quantum era.  VQAs are optimization-based quantum-classical hybrid algorithms. The core idea is exploiting PQCs to delegate classically difficult computation and updating the variational parameters in PQCs with a classical optimizer \cite{Sweke2020,Dong2022,Basheer2022,Lavrijsen2020}.
VQAs have been applied across various practical fields, including machine learning \cite{Biamonte2017,Mitarai2018,Farhi2018,Wu2020,Perrier2020}, error correction \cite{Xu2021vari,Cao2022quantumv,johnson2017qvector}, dynamical simulations \cite{Yuan2019,Altman2021,Bharti2021q,Wen2019,Tabares2023}, combinatorial optimization \cite{Svensson2023,Farhi2014,Crooks2018,Vikst2020}, among others \cite{Wang2021variational,Jiang2018,Liang2022}.


Despite the success in many fields, training VQAs still faces many challenges when scaling up the size of PQCs. One of the key bottlenecks is the magnitude of the cost gradient decaying exponentially with the number of qubits, also known as barren plateaus (BPs)~\cite{McClean2018}. Once BPs occur, training  PQCs is prohibited  unless exponentially many copies of quantum states are provided to identify the update direction through measurement.  BPs have been found arising in many scenarios, such as randomly initialized PQCs \cite{McClean2018,Haug2021c}, deep and problem-agnostic PQCs \cite{McClean2018,Zhao2021}, global cost function \cite{Cerezo2021c,Liuzidu2021}, large expressibility \cite{McClean2018,Holmes2022}, entanglement \cite{Marrero2021,Sharma2022,Patti2021}, and noise \cite{Wangsamon2021}.  In addition to BPs, there is another bottleneck for training. For a wide class of under-parameterized variational quantum models, the landscapes of the cost function are generally swamped with spurious local minima and traps \cite{Anschuetz2022b}. These quantities may be exponentially many in relation to the number of parameters \cite{You2021}, making it extremely hard to find a global minimum~\cite{Bittel2021,Wierichs2020,Ge2022}.

To address the trainability issue, various strategies have been proposed. For instance, certain constructed PQCs, including tree tensor network \cite{Shi2006,Grant2018} and quantum convolutional neural network \cite{Cong2019}, can help avoid BPs \cite{Zhao2021,Zhang2020,Pesah2021,barthel2023}. However, due to their specific architectures, where the circuit depth is logarithmic with respect to the number of qubits, the gradient usually decays exponentially with the circuit depth instead. Ansatzes constructed in a problem-inspired manner \cite{Hadfield2019,Wiersema2020,Lyu2023symmetry,Sharma2022} have been put forward to mitigate BPs by constraining the explored space of unitaries.  Ansatzes can also be trained layer-by-layer \cite{Skolik2021,Campos2021}, or designed using classical reinforcement learning \cite{Lockwood2021}. Since BPs are cost function-dependent, it was proposed to encode the problem into local cost function \cite{Cerezo2021c}.
In addition, over-parametrization has been investigated to address the training issue caused by the presence of many spurious local minima or traps in the cost landscapes far away from the global minimum \cite{Liu2022a,You2022conv, Anschuetz2022c}. However, efficient over-parametrization, i.e., with polynomially many parameters, is achievable only under some conditions for quantum neural networks (QNNs) \cite{Larocca2021}.  Generally, the number of parameters required for over-parametrization is comparable to the system dimension \cite{You2022conv}. This exponential overhead  may incur BPs, especially when considering noises on quantum hardware.


From a practical viewpoint, an effective parameter initialization strategy is crucial for the trainability of PQCs and the convergence of VQAs~\cite{Zhou2020,McClean2018,Holmes2022,Bittel2021,Wiersema2020}.
An identity block initialization strategy has been introduced with the purpose of limiting the effective depth of circuits~\cite{Grant2019}. It was demonstrated that ansatzes with correlated parameters can effectively restrict the optimized parameter space and result in large gradients of cost function~\cite{Volkoff2021}.  A Bayesian learning initialization was employed  to identify a promising region in the parameter space before performing local search~\cite{Rad2022}. Simulations conducted in \cite{Haug2021c} revealed an initial region of parameters that can improve the trainability while maintaining a high level of expressibility. A Gaussian initialization was proposed with theoretical guarantees to escape from BPs in general deep circuits \cite{Zhang2022g}, where the lower bound of the norm of cost gradient decays at most polynomially with the circuit depth. Besides, for correlated parameterized gates, Gaussian initialization guarantees the update direction when training global observables~\cite{Zhang2022g}. However, the theoretical or empirical results of the aforementioned strategies are confined to single Pauli strings.  For complex objectives corresponding to a summation of local Pauli strings,  a  Floquet initialization strategy was proposed to mitigate BPs~\cite{cao2024exploiting}. They demonstrated that by initializing PQCs within the many-body localization phase, the gradient does not vanish as the system size grows.  Nevertheless, the phase transitions between the many-body localization and thermalization (where BPs occur) in Floquet systems need to be characterized through additional simulations.

In this work, we propose an efficient reduced-domain initialization strategy to enhance trainability of PQCs. By focusing on the variational quantum eigensolver (VQE), we theoretically prove that for  an $L$-depth circuit, if the initial domain of each parameter is reduced in proportion to $\frac{1}{\sqrt{L}}$, the norm of the cost gradient decays at most polynomially with $L$ as the number of qubits and the depth increase. Our result is quite different from the belief that BPs occur when the circuit depth is polynomial of the number of qubits. This is because the reduced-domain initialization can help maintain a balance between the trainability and expressibility of PQCs. Via  simulations, we first validate our theoretical results. Then we show that  our reduced-domain initialization can protect QNNs from the phenomenon of exponentially many spurious local minima in cost function landscapes, and help achieve improved performance.
	
The paper is organized as follows. In Section~\ref{notation}, we  describe the generic setup of VQAs. In Section~\ref{main}, we  present the theoretical results of enhancing trainability of PQCs through the reduced-domain parameter initialization strategy. In Section~\ref{experiments}, via simulations we demonstrate the enhanced trainability in VQE tasks, as well as the improved convergence in both VQE and QNN instances. Section~\ref{conclusion} concludes the paper.

\section{\label{notation} Setup \lowercase{of} VQAs}

VQAs provide a general framework for solving a wide range of problems. A typical VQA is comprised of basic elements: cost, ansatz, measurement, optimizer as well as  parameter initialization.
	
\textbf{Cost:} The first task of VQAs is to identify an appropriate cost function which effectively encodes the problem to be solved. Without loss of generality, we consider the expectation value of an observable $O$ as cost function in the form of
	\begin{equation}\label{cost}
	C\bm{(} \boldsymbol{\theta} \bm{)}= \mathrm{Tr}\left[ OU( \boldsymbol{\theta} )\rho U^{\dagger}( \boldsymbol{\theta} )\right].
	\end{equation}
Here, $\rho$ is the input state, and $U( \boldsymbol{\theta} )$ denotes a parameterized quantum unitary  with the parameter vector $\boldsymbol{\theta}$ to be optimized. In VQAs, an implicit assumption  is that the cost function should be classically computationally intensive and the value can be obtained by performing quantum measurements on a quantum circuit with possible classical postprocessing.

In this work, our aim is finding the least eigenvalue of a given Hamiltonian $H$. In the following, given a positive integer $n$, denote by $[n]$ the set $\{1, 2, \cdots, n\}$.  Let
	\begin{equation}\label{paulitensor}
	\boldsymbol{\sigma}_{\boldsymbol{i}}=\boldsymbol{\sigma}_{(i_{1}, i_{2}, \cdots, i_{N})}=\sigma_{i_{1}}\otimes\sigma_{i_{2}}\otimes\cdots\otimes\sigma_{i_{N}}
	\end{equation}
	denote an $N$-qubit Pauli string, where $i_{j}\in \left\lbrace 0,1,2,3\right\rbrace $ for $ j \in [N] $, and $\sigma_0=I$, $\sigma_1=X$, $\sigma_2=Y$, $\sigma_3=Z$ are Pauli matrices. For instance, we denote
	\begin{align}
	\nonumber
	&\boldsymbol{\sigma}_{(3, 0, 0, \cdots, 0)}=Z\otimes I\otimes I \otimes\cdots\otimes I=Z_{1},\
	\\
	\nonumber
	&\boldsymbol{\sigma}_{(3, 3, 0, \cdots, 0)}=Z\otimes Z\otimes I \otimes\cdots\otimes I=Z_{1}Z_{2}.
	\end{align}
Moreover, $\boldsymbol{\sigma}_{\boldsymbol{i}}$ is said to be $S$-local if  there are  $S$ non-identity components in the tensor product of  Eq.~\eqref{paulitensor}.

We choose $\rho=\left| 0\rangle\langle 0\right|$ and set $O=H$ in Eq.~\eqref{cost}. We consider $H$ in the form as
	\begin{equation}\label{generalH}
	H=\sum_{\boldsymbol{i}\in \mathcal{N}}{\boldsymbol{\sigma}_{\boldsymbol{i}}}=\sum_{\boldsymbol{i}\in \mathcal{N}}{\boldsymbol{\sigma}_{(i_{1}, i_{2}, \cdots, i_{N})}},
	\end{equation}
where $\mathcal{N}$ is a subset of  $\mathcal{N}_S=\{\boldsymbol{i}=(i_{1}, i_{2}, \cdots, i_{N}): 1\leq \sum_{j=1}^{N} \mathrm{1}_{i_j\neq 0}\leq S\}$, and $\mathrm{1}_{\omega}$ is the indicator function of the event $\omega$. That is, $H$ is a sum of  $N$-qubit Pauli strings with locality of at most $S$. This model encompasses ground state determination problems for various quantum many-body systems and chemical molecules \cite{Alam2022pra,lewis2024improved,McArdle2020,cao2019quantum}. In addition, the Hamiltonian Eq.~(\ref{generalH}) can be employed to tackle many NP-complete problems~\cite{Lucas2014}.

\begin{figure}[h]
	\flushleft
 \includegraphics[height=4.9cm]{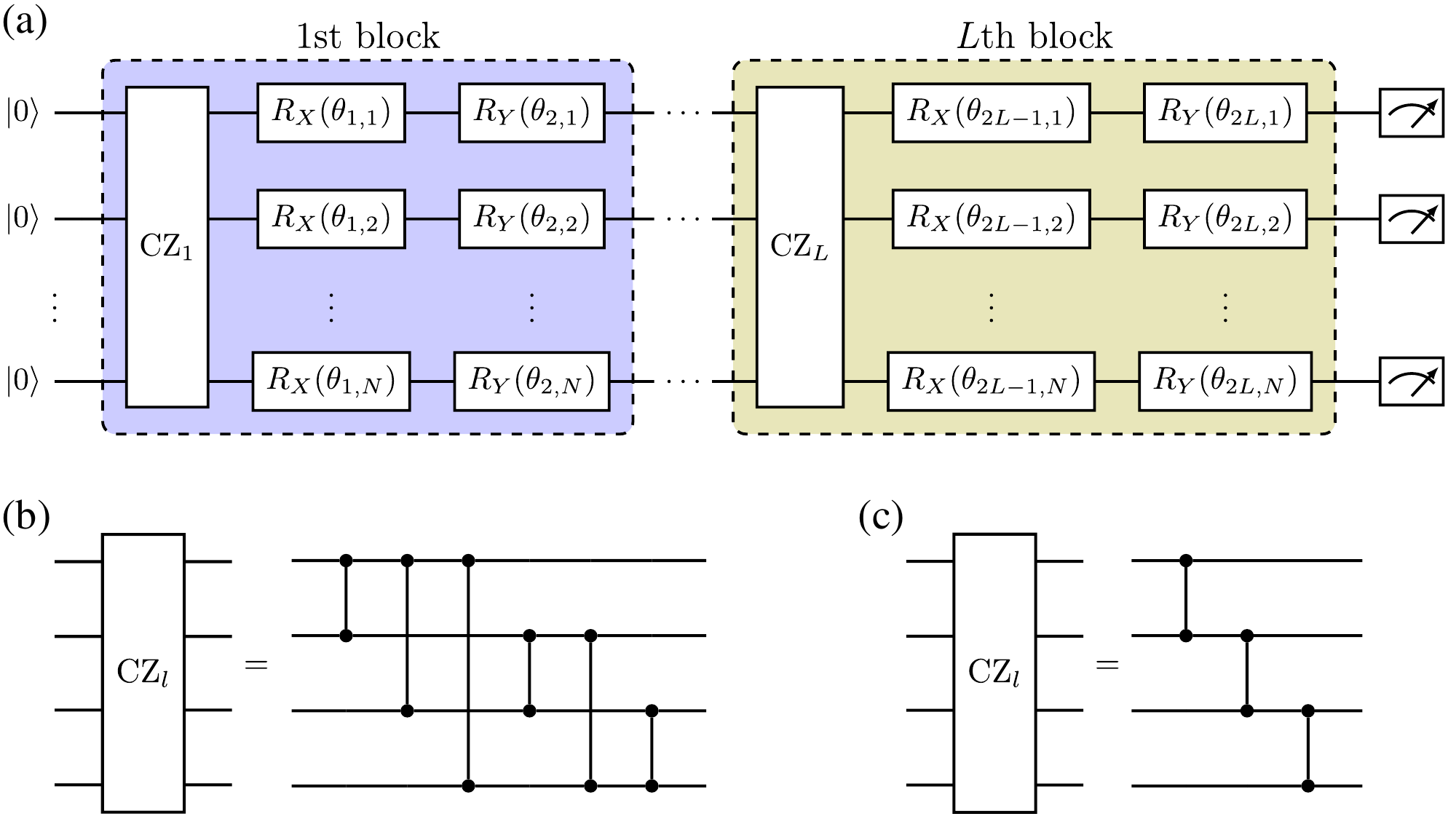}
		\caption{\label{fig:setup}Setup of HEA. (a) PQC for VQE. It consists of $L$ blocks. In the $l$th block, we first perform the entanglement layer composed of  $\mathrm{CZ}$ gates, and then successively perform $R_X$ gate and $R_Y$ gate on each qubit. (b) $\mathrm{CZ}_{l}$ layer with the fully connected topology. (c) $\mathrm{CZ}_{l}$ layer with the nearest-neighbor pairs topology and open boundary condition.}
 \end{figure}

\textbf{Ansatz:} Another key element of VQAs is ansatz, i.e., PQC whose parameters are optimized to minimize the corresponding cost function. In this work, we consider a PQC similar to that in \cite{Zhang2022g}, called the hardware-efficient ansatz (HEA). As illustrated in Fig.~\ref{fig:setup}(a), it is comprised of $L$ blocks, each of which is composed of an entanglement layer $\mathrm{CZ}_l$ including $\mathrm{CZ}$ gates on arbitrarily many qubit pairs and two layers of $R_X$ and $R_Y$ gates on all qubits. There are two widely used structures of the entanglement layer: the fully connected topology as shown in Fig.~\ref{fig:setup}(b), and the nearest-neighbor pairs topology with open boundary condition as shown in Fig.~\ref{fig:setup}(c).  Notably, our results do not rely on the specific topology of $\mathrm{CZ}_l$ and can also be generalized to other structures of PQC\@.

\textbf{Measurement:} In our VQE task, the Hamiltonian  is formulated in terms of Pauli strings. The expectation values of the $\sigma_3$ terms can be directly measured by reading out the probabilities of the computational basis, while to obtain the expectation values of the  $\sigma_1$ and $\sigma_2$ terms, we should rotate them into the $\sigma_3$ basis first.
To minimize the number of measurements needed, we should group different Pauli strings that can be measured simultaneously. There is  still an urgent need for more efficient compilation of measurement circuits.

\textbf{Optimizer:} The core task of the classical processor is updating the parameters to minimize the cost. In our VQE task, the cost gradient can be obtained by utilizing the parameter-shift rule \cite{Mitarai2018}, which also suffers from the measurement efficiency issue. In this work, we aim to analyze the norm of the gradient which greatly impacts the trainability of VQAs.

\textbf{Parameter initialization:} Due to the possibility of occurrence of BPs and/or the existence of exponentially many spurious local minima when training VQAs, it is critical to find an appropriate parameter initialization strategy to enhance trainability.
	
For the PQC  illustrated in Fig.~\ref{fig:setup}(a), there are $2L$ layers of parameterized rotation gates with $2NL$ variational parameters in total, denoted by
	\begin{equation}\nonumber
	\boldsymbol{\theta}=\boldsymbol{\theta}_{[2L]}=(\boldsymbol{\theta}_{1}, \boldsymbol{\theta}_{2}, \cdots, \boldsymbol{\theta}_{2L}),
	\end{equation}
	where $\boldsymbol{\theta}_{q}=(\theta_{q,1}, \theta_{q,2}, \cdots, \theta_{q,N})$ denotes the vector of rotation angles  in the $q$th layer for $q\in \left[2L \right]$. To be specific,  for the $l$th block, $l\in [L]$, and the $j$th qubit, $j\in [N]$, $\theta_{2l-1, j}$ denotes the rotation angle around $X$, while $\theta_{2l,j}$ denotes that around $Y$.
	
\section{\label{main}Trainability Enhancement}
\subsection{Lower bounds of gradient}
In previous works, the initial parameters are usually drawn independently and identically from a uniform distribution over an entire period $[-\pi, \pi]$, and the $N$-qubit local observable is always chosen as a simple Pauli string, for instance, $Z_{1}Z_{2}$. Under a PQC similar to ours,  Ref.~\cite{McClean2018} demonstrated that a BP occurs even when the depth $L$ is a modest linear function, that is, $L=10N$, and the exponential decay cannot be circumvented for any polynomial sum of these local operators. For the simple observable $Z_{1}Z_{2}$, Ref.~\cite{Haug2021c} numerically showed that the trainability can be improved if each parameter is randomly chosen from $[-a\pi, a\pi]$. The satisfying value of $a$ was identified by searching from $10^{-10}$ to $1$ via simulations.
	
In this work, we consider a general Hamiltonian in the form of Eq.~\eqref{generalH}, where the number of individual terms in the summation can be arbitrary. For the complex objective function, we propose a reduced-domain parameter initialization strategy, in which each parameter is independently and uniformly chosen from a reduced interval $[-a\pi,a\pi]$, where the hyperparameter $a$ is depth-dependent and will be given in the following theorems. Importantly, there is no constraint on the parameters' domain during the optimization process. That is, the variational parameters are free to leave their initial reduced-domains as guided by the negative gradient to minimize the objective function.
	
In view of the setup in Section~\ref{notation}, the decrease  of the cost function value after  an update of the parameters  can be quantified in terms of  the magnitude of $\Vert\nabla_{\boldsymbol{\theta}}{C}\Vert_{2}^{2}$ (see Appendix~\ref{intuition}). Thus, we adopt its average value $\mathop{\mathbb{E}}\limits_{\boldsymbol{\theta}}\Vert\nabla_{\boldsymbol{\theta}}{C}\Vert_{2}^{2}$ as the  trainability figure of merit.
	
Now we provide a theoretical guarantee for the reduced-domain initialization by establishing a lower bound on the norm of cost gradient at the first step. For the ease of notation, denote $\alpha=\frac{1}{4}\left(  2+\frac{\sin {2a\pi}}{a\pi}\right)$, $\beta=\frac{1}{4}\left(  2-\frac{\sin {2a\pi}}{a\pi}\right)$, and $\gamma=\frac{\sin {a\pi}}{a\pi}.$

   	\begin{theorem}\label{TH2}
		Consider the VQE problem, where the cost function is Eq.~\eqref{cost} with the Hamiltonian being Eq.~\eqref{generalH}, the ansatz illustrated as in Fig.~\ref{fig:setup}(a), the initial state chosen as  $\rho=|0\rangle\langle0|$, and each parameter in $\boldsymbol{\theta}$ chosen from $\left[ -a\pi,a\pi\right]$  independently  and uniformly.  Let $\nabla_{\boldsymbol{\theta}}{C}$ denote the gradient of the cost function with respect to $\boldsymbol{\theta}$. Then the average of the squared norm of the gradient is lower bounded by
		\begin{equation}\label{TH21}
		\mathop{\mathbb{E}}\limits_{\boldsymbol{\theta}}\Vert\nabla_{\boldsymbol{\theta}}{C}\Vert_{2}^{2}\geq 2|\mathcal{N}|(L-1)\alpha^{2SL-1}\beta^{S+1},
		\end{equation}
		where $|\mathcal{N}|$ denotes the cardinality of the set $\mathcal{N}$.
		
		Furthermore, if the hyperparameter $a$ is selected as the solution of the following equation
		\begin{equation}\label{TH2a}
		\frac{\sin {2a\pi}}{2a\pi}=\frac{S(2L-1)-2}{S(2L+1)},
		\end{equation}
		where $a\in \mathrm{\Theta}\left(\frac{1}{\sqrt{L}}\right)$,
		then the gradient norm can be lower bounded by
		\begin{equation}\label{key}
		\mathop{\mathbb{E}}\limits_{\boldsymbol{\theta}}\Vert\nabla_{\boldsymbol{\theta}}{C}\Vert_{2}^{2}\geq 2 |\mathcal{N}| \left(\frac{S+1}{eS}\right)^{S+1}\frac{L-1}{(2L+1)^{S+1}}.
		\end{equation}
	\end{theorem}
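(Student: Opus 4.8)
The plan is to establish the bound \eqref{TH21} by a layerwise averaging argument and then to obtain the polynomial estimate as an elementary optimization in the single hyperparameter $a$. The starting point is the identity $\alpha=\frac12\left(1+\frac{\sin 2a\pi}{2a\pi}\right)=\mathbb{E}_{\theta}[\cos^{2}\theta]$ and $\beta=\frac12\left(1-\frac{\sin 2a\pi}{2a\pi}\right)=\mathbb{E}_{\theta}[\sin^{2}\theta]$ for $\theta$ uniform on $[-a\pi,a\pi]$, which reduces the whole computation to conjugating Pauli operators by the rotation gates and averaging each angle independently.

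First I would use linearity of the cost in $H$: writing $H=\sum_{\boldsymbol{i}\in\mathcal{N}}\boldsymbol{\sigma}_{\boldsymbol{i}}$ from \eqref{generalH} gives $C=\sum_{\boldsymbol{i}}C_{\boldsymbol{i}}$ with $C_{\boldsymbol{i}}=\mathrm{Tr}[\boldsymbol{\sigma}_{\boldsymbol{i}}U(\boldsymbol{\theta})\rho U^{\dagger}(\boldsymbol{\theta})]$, so that $\mathbb{E}_{\boldsymbol{\theta}}\Vert\nabla_{\boldsymbol{\theta}}C\Vert_{2}^{2}=\sum_{q,j}\sum_{\boldsymbol{i},\boldsymbol{i}'}\mathbb{E}_{\boldsymbol{\theta}}\!\left[\partial_{\theta_{q,j}}C_{\boldsymbol{i}}\,\partial_{\theta_{q,j}}C_{\boldsymbol{i}'}\right]$. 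The first key step is to show the cross terms with $\boldsymbol{i}\neq\boldsymbol{i}'$ vanish in expectation: averaging the rotations in the layer nearest the observable over the symmetric interval $[-a\pi,a\pi]$ sends distinct Pauli strings to trace-orthogonal operators, so only the diagonal $\boldsymbol{i}=\boldsymbol{i}'$ contributions survive and the figure of merit collapses to $\sum_{\boldsymbol{i}}\mathbb{E}_{\boldsymbol{\theta}}\Vert\nabla_{\boldsymbol{\theta}}C_{\boldsymbol{i}}\Vert_{2}^{2}$, a sum of independent single-string problems.

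For a fixed string I would compute $\partial_{\theta_{q,j}}C_{\boldsymbol{i}}$ by the product rule, which inserts the commutator $-\frac{i}{2}[\sigma,\cdot]$ (with $\sigma=X$ for odd $q$ and $\sigma=Y$ for even $q$) at the differentiated gate, and then expand $(\partial_{\theta_{q,j}}C_{\boldsymbol{i}})^{2}$ as a Heisenberg propagation of $\boldsymbol{\sigma}_{\boldsymbol{i}}$ backward through the $2L$ rotation layers and the $L$ Clifford $\mathrm{CZ}$ layers. Averaging each layer independently, every anticommuting single-qubit rotation contributes a factor $\alpha$ or $\beta$, while all genuinely off-diagonal Pauli terms average to zero because $\cos$, $\sin$ and their product are odd on $[-a\pi,a\pi]$. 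Retaining a single explicit, manifestly non-negative surviving path---one in which the differentiated gate together with the final matching to $\rho=|0\rangle\langle0|$ (which forces the $S$ non-identity factors to be rotated into $Z$'s) supply the $S+1$ factors of $\beta$, and every remaining active qubit--layer incidence supplies an $\alpha$---produces, for a worst-case $S$-local string over $L-1$ usable blocks and either the $R_X$ or the $R_Y$ layer, the per-string bound $2(L-1)\alpha^{2SL-1}\beta^{S+1}$; summing over $\boldsymbol{i}\in\mathcal{N}$ gives \eqref{TH21}. I expect this bookkeeping to be the main obstacle: the $\mathrm{CZ}$ gates can spread a Pauli onto neighbouring qubits, so the path must be chosen to remain valid for any topology of $\mathrm{CZ}_{l}$ and for any locality $\le S$, and one must check that replacing the true locality by the worst case $S$ only decreases the bound, making the estimate uniform over $\mathcal{N}$.

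The second part is pure calculus. Treating $\alpha$ and $\beta$ as constrained by $\alpha+\beta=1$ through the single variable $x=\frac{\sin 2a\pi}{2a\pi}$, I would maximize $\alpha^{2SL-1}\beta^{S+1}$; the maximizer is $\alpha^{\ast}=\frac{2SL-1}{S(2L+1)}$ and $\beta^{\ast}=\frac{S+1}{S(2L+1)}$, which is exactly the condition $x=\frac{S(2L-1)-2}{S(2L+1)}$ of \eqref{TH2a}. Since $x=1-\frac{(2a\pi)^{2}}{6}+O(a^{4})$ near the origin whereas the target obeys $1-x=\Theta(1/L)$, a solution with $a\in\Theta(1/\sqrt{L})$ exists. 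Finally, writing $\alpha^{\ast}=1-t$ with $t=\frac{S+1}{S(2L+1)}$ and using $\ln(1-t)\ge-\frac{t}{1-t}$ gives $(\alpha^{\ast})^{2SL-1}\ge e^{-(S+1)}$, and multiplying by $(\beta^{\ast})^{S+1}=\left(\frac{S+1}{S}\right)^{S+1}(2L+1)^{-(S+1)}$ converts \eqref{TH21} into the stated bound $2|\mathcal{N}|\left(\frac{S+1}{eS}\right)^{S+1}\frac{L-1}{(2L+1)^{S+1}}$.
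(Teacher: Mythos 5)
Your decomposition of the figure of merit, your handling of the diagonal terms $\boldsymbol{i}=\boldsymbol{i}'$, and your calculus in the second part all essentially match the paper: maximizing $\alpha^{2SL-1}\beta^{S+1}$ subject to $\alpha+\beta=1$ does yield Eq.~\eqref{TH2a}, and your bound $(\alpha^{\ast})^{2SL-1}\geq e^{-(S+1)}$ via $\ln(1-t)\geq -t/(1-t)$ is a clean, correct derivation of the final constant. However, there is a genuine gap at what you call the ``first key step'': the cross terms with $\boldsymbol{i}\neq\boldsymbol{i}'$ do \emph{not} vanish in expectation. Your justification --- that averaging the rotations sends distinct Pauli strings to trace-orthogonal operators --- fails whenever the two strings overlap, i.e., act non-trivially on a common qubit. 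For overlapping strings the relevant second moments are governed by Lemma~\ref{secondpowerlemma}: $\mathop{\mathbb{E}}_{\theta}\mathrm{Tr}\left[ O_{1}VAV^{\dagger}\right]\mathrm{Tr}\left[ O_{2}VAV^{\dagger}\right]=\alpha\,\mathrm{Tr}\left[ O_{1}A\right]\mathrm{Tr}\left[ O_{2}A\right]+\beta\,\mathrm{Tr}\left[ iGO_{1}A\right]\mathrm{Tr}\left[ iGO_{2}A\right]$, which is generically nonzero. Indeed, in the warm-up Theorem~\ref{TH1} the paper computes the cross term between $Z_{i}Z_{i+1}$ and $Z_{i+1}Z_{i+2}$ and finds it bounded \emph{below} by $\gamma^{4L}\alpha^{2L-1}\beta>0$ --- strictly positive, not zero (it is even used there to strengthen the bound).

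Because the cross terms do not vanish, your argument as written cannot conclude: a priori they could be negative and cancel the diagonal contribution, so the lower bound \eqref{TH21} does not follow from the diagonal estimate alone. The paper closes exactly this gap --- and identifies it as the main new difficulty beyond the single-string analysis of \cite{Zhang2022g} --- by proving the cross terms are \emph{non-negative}: propagating the expectations block by block, each cross term is expressed as $\sum_{k}c_{k}\,\mathrm{Tr}\left[ \boldsymbol{\sigma}_{\boldsymbol{g}_{\boldsymbol{i}}^{k}}\rho_{0}\right]\mathrm{Tr}\left[ \boldsymbol{\sigma}_{\boldsymbol{g}_{\boldsymbol{j}}^{k}}\rho_{0}\right]$ with positive coefficients $c_{k}$ (monomials in $\alpha$, $\beta$, $\gamma$), where the $\boldsymbol{\sigma}_{\boldsymbol{g}}^{k}$ are $N$-qubit Pauli strings; for $\rho_{0}=|0\rangle\langle 0|$ every such trace lies in $\{0,1\}$, so every term is non-negative and the whole cross sum can be discarded in a lower bound. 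Replacing your claimed orthogonality with this non-negativity argument (which is where the specific choice of initial state actually enters) repairs the proof; the remainder of your outline then goes through.
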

	
From Theorem~\ref{TH2}, for VQE problems with Hamiltonian being a summation of a series of local Pauli strings, the trainability can be greatly enhanced in the initial phase by appropriately reducing the initial domain of parameters. Specifically, the lower bound of the gradient norm decays at most polynomially with the circuit depth. In addition, the lower bound increases linearly with the number of individual Pauli strings in the Hamiltonian, which guarantees the efficiency of the reduced-domain strategy for complex objective functions.

An intuitive interpretation of the effectiveness of the reduced-domain initialization is as follows. It is well known that there is a conflict between the trainability  and  expressibility of PQCs.  Over-expressed PQCs may suffer from phenomena like BPs and/or the landscape being swamped with spurious local minima, making the training extremely difficult. Note that the expressibility of a PQC depends on the structure of the circuit as well as the size of the initial domains of the  parameters. It generally grows fast as the circuit depth $L$ increases and the range of the initial domains expands. Therefore, to  balance the conflict between trainability and expressibility of PQCs, we should appropriately reduce the initial domains of the parameters. As we have stated, to enhance the trainability of a PQC with depth $L$, the domain of each parameter should be reduced in proportion to $\frac{1}{\sqrt{L}}$. We will provide more detailed sensitivity analysis on the hyperparameter $a$ in Sec.~\ref{experiments}; see Table~\ref{tab:grad},
Fig.~\ref{fig:heg}(a), and Fig.~\ref{fig:hva}(a).

To further illustrate the result and explain the idea of proof of Theorem~\ref{TH2}, let us consider an example where the $N$-qubit Hamiltonian reads
	\begin{equation}\label{toyH}
	H=\sum_{j=1}^{N-1} Z_jZ_{j+1}.
	\end{equation}
	
\begin{lemma}{\label{TH1}}
		Consider the VQE problem, where the cost function is Eq.~\eqref{cost} with the Hamiltonian being Eq.~\eqref{toyH}, the ansatz illustrated as in Fig.~\ref{fig:setup}(a), the initial state chosen as  $\rho=|0\rangle\langle0|$, and each parameter in $\boldsymbol{\theta}$ chosen from $\left[ -a\pi,a\pi\right]$ independently  and uniformly.  Let $\nabla_{\boldsymbol{\theta}}{C}$ denote the gradient of the cost function with respect to $\boldsymbol{\theta}$. Then the average of the squared norm of the gradient is lower bounded by
		\begin{equation}\label{TH11}
		\mathop{\mathbb{E}}\limits_{\boldsymbol{\theta}}\Vert\nabla_{\boldsymbol{\theta}}{C}\Vert_{2}^{2}\geq 4(2N-3)L\gamma^{8L-2}\beta.
		\end{equation}
		
		Furthermore, if the hyperparameter $a$ is selected as 	
\begin{equation}
a=\frac{1}{4\pi}\sqrt{\frac{40L+7-\sqrt{1600L^{2}-400L+49}}{L}}\in \mathrm{\Theta}\left(\frac{1}{\sqrt{L}}\right),\label{TH1a}
\end{equation}
		then the gradient norm can be lower bounded by
		\begin{equation}\label{key}
		\mathop{\mathbb{E}}\limits_{\boldsymbol{\theta}}\Vert\nabla_{\boldsymbol{\theta}}{C}\Vert_{2}^{2}\geq e^{-1}(2N-3),
		\end{equation}
		where $e$ is the natural constant.
	\end{lemma}

It is clear that for the  Hamiltonian of Eq.~\eqref{toyH}, under the reduced-domain initialization, the trainability can be remarkably enhanced in the initial phase. Not only  the lower bound of the gradient norm does not decrease when the circuit depth increases, but also it increases as the number of qubits increases. In particular, for the simple $N$-qubit observable $Z_{1}Z_{2}$,  the gradient norm can be lower bounded by a constant under the reduced-domain initialization. This is quite different from the result in Ref.~\cite{McClean2018} with the domain being $[-\pi, \pi]$, where the gradient vanishes exponentially with respect to both the qubit number and the circuit depth.

The proofs of Lemma~\ref{TH1} and Theorem~\ref{TH2} are given in Appendix~\ref{proofTH1} and Appendix~\ref{proofTH2}, respectively. The basic idea is inspired by the proof of Theorem 4.1 in \cite{Zhang2022g}, where the observable considered is restricted to  a single local Pauli string. However, in our work, the observables are in the form of a summation of local Pauli strings, so that additionally we need to deal with  the cross terms generated from different Pauli strings when analyzing the norm of gradient. Specifically, we prove that under the selected initial state $\rho=|0\rangle\langle0|$, these cross terms are non-negative by analyzing the transformations of the involved Pauli strings after taking the expectations over the trainable parameters and applying the entanglement layers.

\subsection{Addressing  BPs}
We now prove that the reduced-domain  initialization strategy can help escape from BPs by considering  the variance of the partial derivative of the cost function with respect to specific variational parameters.

In general, an  $N$-qubit cost function $C$ is said to exhibit a BP in the $\theta_v$ direction,  if $\mathop{\mathbb{E}}\limits_{\boldsymbol{\theta}}\partial_{\theta_v}{C}=0$, and the variance decreases exponentially with $N$, that is,
	\begin{equation}\nonumber
	\mathop{\mathbb{V}\mathrm{ar}}\limits_{\boldsymbol{\theta}}\partial_{\theta_v}{C}\in \mathrm{O}\left(b^{N}\right),\ \text{for some}\  b \in \left( 0, 1\right).
	\end{equation}
	On the other hand, it is said to be trainable, if the variance decreases polynomially with $N$, that is,
	\begin{equation}\nonumber
	\mathop{\mathbb{V}\mathrm{ar}}\limits_{\boldsymbol{\theta}}\partial_{\theta_v}{C}\in \mathrm{\Omega} \left(\frac{1}{\mathrm{poly}(N)}\right).
	\end{equation}

Similarly, we can define the circuit-depth-dependent BP, as the depth is also a key figure of merit in VQAs. The number of qubits and the circuit depth are unified as the size of PQC\@.
	
Recall that once a BP occurs, the training of PQCs is prohibited. The following theorem states that the reduced-domain initialization strategy with a selected hyperparameter in the scaling of $\frac{1}{\sqrt{L}}$ can help PQCs avoid BPs. Remarkably, it holds for  PQCs whose depths  are polynomial functions of the number of qubits, while Ref.~\cite{McClean2018} demonstrated that for a wide class of randomly initialized PQCs of reasonably linear depth, BPs occur even with simple local observables.

	\begin{theorem}\label{TH4}
		Under the same assumptions as in Theorem~\ref{TH2}, let $\partial_{\theta_{q, n}}{C}$ denote the partial derivative of cost function with respect to $\theta_{q, n}$.	For $q\in \left[ 2L-2\right] $ and $n\in \left[ N \right] $,  the average of the squared norm of the partial derivative is lower bounded by
		\begin{equation}\label{TH41}
		\mathop{\mathbb{E}}\limits_{\boldsymbol{\theta}}\left( \partial_{\theta_{q, n}}{C}\right) ^{2}\geq \alpha^{2SL-1}\beta^{S+1}.
		\end{equation}
		
In addition, if for each  $\boldsymbol{i}\in \mathcal{N}$, the number of its elements belonging to the set $\left\lbrace 1, 2  \right\rbrace$  is not equal to one, then we have
		\begin{equation}\label{TH42}
		\mathop{\mathbb{E}}\limits_{\boldsymbol{\theta}}\partial_{\theta_{q, n}}{C}=0,
		\end{equation}
		and the variance is lower bounded by
		\begin{equation}\label{TH43}
		\mathop{\mathbb{V}\mathrm{ar}}\limits_{\boldsymbol{\theta}}\partial_{\theta_{q, n}}{C}\geq \alpha^{2SL-1}\beta^{S+1}.
		\end{equation}
		
Furthermore, if the  hyperparameter $a$ is selected according to Eq.~\eqref{TH2a} in Theorem~\ref{TH2},	where $a\in \mathrm{\Theta}\left(\frac{1}{\sqrt{L}}\right)$,  we have
		\begin{equation}\label{TH44}
		\mathop{\mathbb{E}}\limits_{\boldsymbol{\theta}}\left( \partial_{\theta_{q, n}}{C}\right) ^{2}\geq \left(\frac{S+1}{e S}\right)^{S+1}\frac{1}{(2L+1)^{S+1}}.
		\end{equation}
	\end{theorem}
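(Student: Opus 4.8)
The plan is to exploit linearity of the cost in the Hamiltonian terms and reduce Theorem~\ref{TH4} to a per-string analysis plus a sign-definiteness argument for the cross terms, mirroring the strategy behind Theorem~\ref{TH2}. Writing $C_{\boldsymbol{i}}(\boldsymbol{\theta})=\mathrm{Tr}[\boldsymbol{\sigma}_{\boldsymbol{i}}U(\boldsymbol{\theta})\rho U^{\dagger}(\boldsymbol{\theta})]$ we have $C=\sum_{\boldsymbol{i}\in\mathcal{N}}C_{\boldsymbol{i}}$, hence $\partial_{\theta_{q,n}}C=\sum_{\boldsymbol{i}\in\mathcal{N}}\partial_{\theta_{q,n}}C_{\boldsymbol{i}}$. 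Differentiating a rotation gate pulls down its generator ($X_n$ for odd $q$, $Y_n$ for even $q$), so each $\partial_{\theta_{q,n}}C_{\boldsymbol{i}}$ is again an expectation of a signed Pauli string against $\rho=|0\rangle\langle0|$. I would then pass to the Heisenberg picture and propagate $\boldsymbol{\sigma}_{\boldsymbol{i}}$ backward block by block, taking the expectation over each parameter layer as I go; the symmetric uniform measure replaces $\cos^{2}$, $\sin^{2}$, and $\cos$ factors by the moments $\alpha$, $\beta$, $\gamma$ defined before Theorem~\ref{TH1}, while every odd ($\sin$-type) factor integrates to zero.

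For the mean I would establish $\mathbb{E}_{\boldsymbol{\theta}}\partial_{\theta_{q,n}}C=0$. The differentiation inserts a single generator, which flips the parity of the number of $X$/$Y$ components (indices in $\{1,2\}$) the propagated string carries across the differentiated layer. Under the hypothesis that every $\boldsymbol{i}\in\mathcal{N}$ has a number of elements in $\{1,2\}$ different from one, this parity forces at least one layer to contribute an odd trigonometric moment, which vanishes upon integration; summing over $\boldsymbol{i}$ then gives zero mean. This is precisely where the stated parity condition is needed.

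Next, for the second moment I would expand $\mathbb{E}_{\boldsymbol{\theta}}(\partial_{\theta_{q,n}}C)^{2}=\sum_{\boldsymbol{i}}\mathbb{E}_{\boldsymbol{\theta}}(\partial_{\theta_{q,n}}C_{\boldsymbol{i}})^{2}+\sum_{\boldsymbol{i}\neq\boldsymbol{i}'}\mathbb{E}_{\boldsymbol{\theta}}[\partial_{\theta_{q,n}}C_{\boldsymbol{i}}\,\partial_{\theta_{q,n}}C_{\boldsymbol{i}'}]$. For the diagonal part, the single-string propagation used for Theorem~\ref{TH2} shows each summand is non-negative and that, for every $q\in[2L-2]$ and $n\in[N]$, at least one surviving string yields a lower bound $\alpha^{2SL-1}\beta^{S+1}$, where the exponents $2SL-1$ and $S+1$ arise from collecting the $\alpha$- and $\beta$-moments accumulated as the squared derivative is pushed through all $2L$ layers on the at most $S$ active qubits. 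The off-diagonal part is the crux: I would argue, exactly as in the proof of Theorem~\ref{TH2}, that after taking expectations and applying the $\mathrm{CZ}$ layers the two distinct strings either annihilate or evolve into Pauli strings whose overlap with $|0\rangle\langle0|$ is non-negative, so every cross term is $\geq 0$. Discarding all cross and remaining diagonal contributions then gives the bound \eqref{TH41}, and combining with the vanishing mean yields \eqref{TH43} at once, since $\mathbb{V}\mathrm{ar}=\mathbb{E}(\partial_{\theta_{q,n}}C)^{2}$.

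Finally, I would substitute the $a$ from Eq.~\eqref{TH2a}. That equation pins $\frac{\sin 2a\pi}{2a\pi}$, and hence $\alpha$ and $\beta$, to values of the form $1-\mathrm{O}(1/L)$ and $\mathrm{O}(1/L)$, so a standard $(1-x)^{k}\ge e^{-kx}$ estimate (equivalently a $(1-1/m)^{m}$-type bound) converts $\alpha^{2SL-1}\beta^{S+1}$ into $\left(\frac{S+1}{eS}\right)^{S+1}\frac{1}{(2L+1)^{S+1}}$, giving \eqref{TH44}. The main obstacle is the cross-term non-negativity: unlike the single-string setting of \cite{Zhang2022g}, one must control the interference between two different Pauli strings as they are propagated through the $\mathrm{CZ}$ layers, and verifying that the surviving contributions are sign-definite on the chosen initial state is the delicate combinatorial step on which the whole argument rests.
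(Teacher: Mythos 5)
Your proposal matches the paper's own proof in all essentials: the same decomposition of $\mathbb{E}_{\boldsymbol{\theta}}(\partial_{\theta_{q,n}}C)^2$ into diagonal terms (bounded exactly as in the proof of Theorem~\ref{TH2}) plus cross terms whose non-negativity on $\rho_0=|0\rangle\langle 0|$ is inherited from Appendix~\ref{proofTH2}; the same mechanism for the zero mean---the differentiated gate changes the number of $X$/$Y$ tensor components by exactly one while all other layer expectations and $\mathrm{CZ}$ layers preserve their positions, so under $S^{\boldsymbol{i}}_{1}+S^{\boldsymbol{i}}_{2}\neq 1$ every surviving term carries a Pauli string with at least one $X$ or $Y$ factor and hence zero trace against $|0\rangle\langle 0|$---and the same choice of $a$ from Eq.~\eqref{TH2a} with $\mathbb{V}\mathrm{ar}=\mathbb{E}(\partial_{\theta_{q,n}}C)^2$ once the mean vanishes. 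The only blemish is the inequality $(1-x)^{k}\ge e^{-kx}$ you invoke, which is false as written (the correct direction is $1-x\le e^{-x}$); the standard fix is $(1-x)^{k}\ge e^{-kx/(1-x)}$, which with $\alpha=\frac{2SL-1}{S(2L+1)}$ and $\beta=\frac{S+1}{S(2L+1)}$ gives exactly $\alpha^{2SL-1}\ge e^{-(S+1)}$ and hence Eq.~\eqref{TH44}, consistent with the monotonicity fact the paper asserts without proof.
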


Similar to Lemma~\ref{TH1}, to further illustrate the result and explain the idea of proof of Theorem~\ref{TH4}, we present the following lemma on the basis of Hamiltonian Eq.~(\ref{toyH}).

	\begin{lemma}\label{TH3}
		Under the same assumptions as in Lemma~\ref{TH1}, let $\partial_{\theta_{q, n}}{C}$ denote the partial derivative of cost function with respect to $\theta_{q, n}$. For $q\in \left[ 2L\right] $ and $n\in \left[ N \right] $, we have
		\begin{equation}\label{expectation1}
		\mathop{\mathbb{E}}\limits_{\boldsymbol{\theta}}\partial_{\theta_{q, n}}{C}=0,
		\end{equation}
		and for $q\in \left[ 2L\right] $, the variance is lower bounded by
		\begin{equation}\label{variance}
		\mathop{\mathbb{V}\mathrm{ar}}\limits_{\boldsymbol{\theta}}\partial_{\theta_{q, n}}{C}\geq f_{n}\bm{(}L, a\bm{)}.
		\end{equation}
		\begin{itemize}
			\item[$(\mathrm{i})$ ] For $n\in \left\lbrace 1, N\right\rbrace  $,
			\begin{equation}\label{TH31}
			f_{n}\bm{(}L, a\bm{)}=\alpha^{4L-1}\beta.
			\end{equation}
			Furthermore, if the  hyperparameter $a$ is selected as the solution of  the following equation
			\begin{equation}\label{key}
			\frac{\sin {a\pi}}{a\pi}=\frac{2L-1}{L},		
			\end{equation}
			where $a\in \mathrm{\Theta}\left(\frac{1}{\sqrt{L}}\right)$, then we have
			\begin{equation}\label{key}
			\mathop{\mathbb{V}\mathrm{ar}}\limits_{\boldsymbol{\theta}}\partial_{\theta_{q, n}}{C}\geq \frac{1}{4e L}.
			\end{equation}
			\item[$(\mathrm{ii})$ ] For $n\in \left\lbrace 2, 3, \cdots, N-1 \right\rbrace  $,
			\begin{equation}\label{key}
			f_{n}\bm{(}L, a\bm{)}=2\alpha^{4L-1}\beta+2\gamma^{4L}\alpha^{2L-1}\beta.
			\end{equation}
			Furthermore, if the hyperparameter $a$ is selected according to Eq.~\eqref{TH1a} in Lemma~\ref{TH1}, where $a\in \mathrm{\Theta}\left(\frac{1}{\sqrt{L}}\right)$, then we have
			\begin{equation}\label{TH38}
			\mathop{\mathbb{V}\mathrm{ar}}\limits_{\boldsymbol{\theta}}\partial_{\theta_{q, n}}{C}\geq \frac{1}{eL}.
			\end{equation}
		\end{itemize}
	\end{lemma}

The proofs of Lemma~\ref{TH3} and Theorem~\ref{TH4} are provided in Appendix~\ref{proofTH3} and Appendix~\ref{proofTH4}, respectively.

\section{\label{experiments}Numerical Simulations}
In this section, we demonstrate the efficacy of the reduced-domain initialization strategy in enhancing trainability and convergence via VQE and QNN. Unless otherwise stated, we conduct classical simulations of demonstrations using the analytical form of the cost function and its gradient, simulated through tensor operations in TensorFlow \cite{tensorflow2016} for VQE tasks and in PyTorch \cite{pytorch2019} for QNN tasks. Additionally, we utilize TensorCircuit \cite{tensorcircuit2023}, which supports just-in-time compilation to accelerate the simulation processes of VQE. 

\subsection{\label{experimentVQE}VQE}
For VQE tasks, we focus on the $N$-qubit Heisenberg model
	\begin{equation}\label{Heisenberg}
	H=\sum_{i=1}^{N-1}{(X_{i}X_{i+1}+Y_{i}Y_{i+1}+Z_{i}Z_{i+1})},
	\end{equation}
where the locality  $S=2$. It becomes challenging to obtain the ground energy of $H$  as $N$ increases.

\begin{table*}
    \caption{\label{tab:grad}Gradient magnitudes of the $12$-qubit Heisenberg model at the initial step across different initializations. }
	\begin{ruledtabular}
	\begin{tabular}{cccccccc}
    & reduced-domain-$0.07$ & reduced-domain-$0.007$ & reduced-domain-$0.7$ & zero/$\pi$-initialization & uniform & Floquet  & Gaussian \\
	\hline
	HEA& 502.8      & 35.3    & 3.6 & 0.0/ N/A & 3.5   & 236.1      & 266.9    \\
    HVA& 4803.0    & 277.6   & 61.9 & 0.0/0.00037549 & 61.4   & 1322.1    & 1947.4  \\
	\end{tabular}
	\end{ruledtabular}
\end{table*}

Unless otherwise stated, we adopt the HEA  illustrated in Fig.~\ref{fig:setup}(a) and fix the entanglement layer in each block arranged in the nearest-neighbor pairs topology with open boundary as shown in Fig.~\ref{fig:setup}(c). For trainability and convergence, we compare {\color{blue}five} different  initialization strategies, including zero-initialization by setting all parameters to be zero, uniform $\mathcal{U}\left[ -\pi,\pi\right]$, Floquet $\mathcal{U}\left[ -W,W\right]$ with $W=0.4$ for intermediate-depth circuits and $W=0.2$ for deep circuits \cite{cao2024exploiting} \footnote{In Ref.~\cite{cao2024exploiting}, the Floquet-initialized PQC incorporates $X$- and $Y$-type rotation gates with angle uniformly sampled from $\left[ -W,W\right]$, and $Z$-type rotation gates with angle uniformly sampled from $\left[ -\pi,\pi\right]$.  In our work, the $W$s employed are consistent with those utilized in Ref.~\cite{cao2024exploiting}, given the similar settings. Specifically, in the HEA featuring only $X$- and $Y$-type rotation gates (see Fig.~\ref{fig:setup}(a)), all the parameters are initialized from $\mathcal{U}\left[ -W, W \right]$. In the Hamiltonian variational ansatz including all three types of gates (see Fig.~\ref{fig:hvapqc}), the parameters are initialized accordingly.}, Gaussian $\mathcal{N}\left(0, \sigma^{2} \right)$ with the variance $\sigma^{2}=\frac{1}{8SL}=\frac{1}{16L}$ \cite{Zhang2022g}, and our reduced-domain  $\mathcal{U}\left[ -a\pi,a\pi\right]$ with the hyperparameter $a$ selected according to Eq.~\eqref{TH2a}. Recall that after the initialization, the variational parameters are updated by the classical optimizer to minimize the cost function, and are not constrained to the initial domains.

\begin{figure}[htp]
    \centering
    \includegraphics[height=6cm]{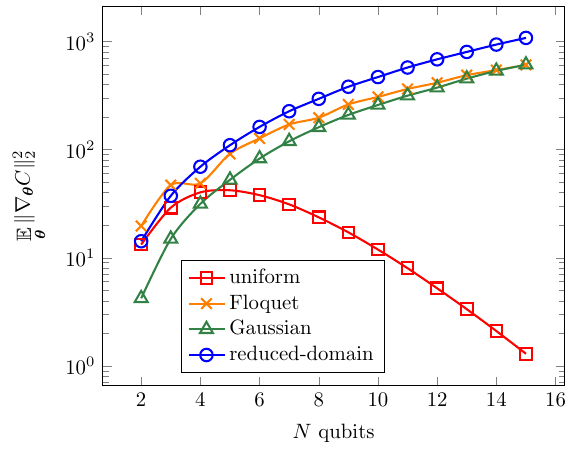}
    \caption{\label{fig:decay}Gradient magnitudes of the Heisenberg model at the initial step versus the number of qubits under  five initialization strategies: zero-initialization (not displayed where the gradient values are consistently zero), uniform (square), Floquet (cross), Gaussian (triangle), and reduced-domain (circle). Here, $L=5N$, {\color{blue} $W=0.2$}, $\sigma^2=\frac{1}{8SL}=\frac{1}{80N}$, and $a$ is determined  according to Eq.~\eqref{TH2a}. There exists a BP under the uniform initialization, whereas the other three strategies help PQCs avoid BPs.  Under the three advanced initialization strategies, the gradient norms increase exponentially with the qubit number,  among which the reduced-domain  exhibits the most favorable  growth.}
\end{figure}

In Fig.~\ref{fig:decay}, we compare the trainability of VQE under the five initialization strategies by evaluating the mean squared norm of the gradient at the first step. Here,  we set  $L=5N$ so that the  HEA in Fig.~\ref{fig:setup}(a) includes the same number of rotation layers $10N$  as that in Figure 3 of Ref.~\cite{McClean2018}. We plot the average points over 1000 rounds of demonstrations on a semi-log plot. The gradient values under zero-initialization are always zero and we do not display them. From Fig.~\ref{fig:decay}, under the uniform initialization, the magnitude of the cost gradient decays exponentially as the number of qubits $N$ increases, which coincides with both the theoretical and empirical results concerning BPs in Ref.~\cite{McClean2018}. In contrast, under the other three advanced initializations, the magnitudes of the gradient increase exponentially with the number of qubits. It is clear that the reduced-domain exhibits the most favorable growth among the five initialization strategies.

\begin{figure*}[tp]
	\centering
	\includegraphics[width=\textwidth]{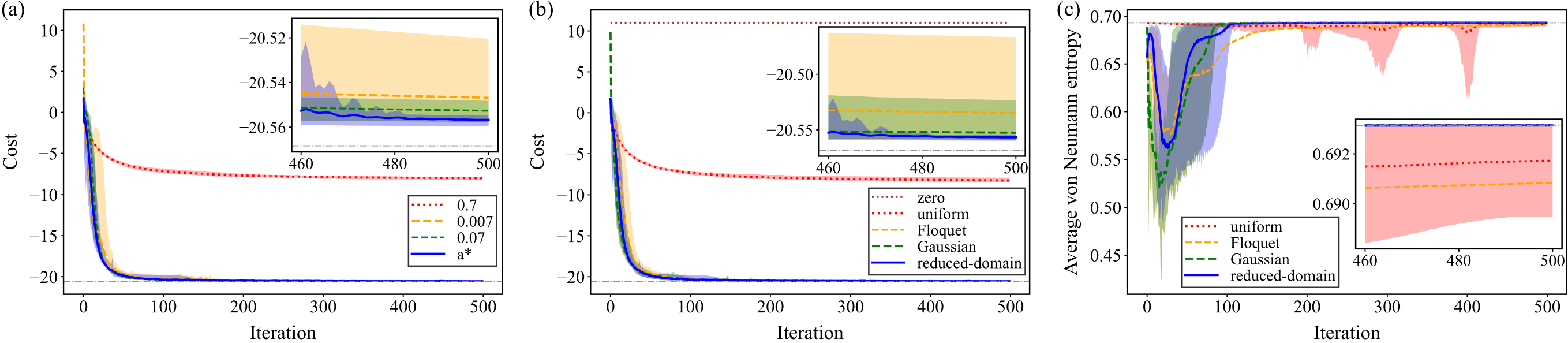}
 \caption{\label{fig:heg}  Convergence behavior of the $12$-qubit Heisenberg model under
different initializations in the HEA context.~The dash-dotted grey line corresponds to that of the global minimum. Here,  $L=3.5N=42$.~(a) Convergence of the cost values under different  hyperparameters of the reduced-domain initialization:  $a^{*}\approx0.0736328$ (solid blue), $a=0.7$ (dotted red), $a=0.07$ (thin dashed green), and $a=0.007$ (dashed orange). (b) Convergence of the cost values under different initializations. (c) Convergence  of the average von Neumann entropy under different initializations. Here, zero-initialization: dotted dark red, uniform: dotted red, Floquet: thin dashed orange, Gaussian: dashed green, and reduced-domain: solid blue. The corresponding hyperparameters are $W=0.4$, $\sigma^2=\frac{1}{8SL}=\frac{1}{672}\approx 0.001488$, and  $a^{*}=0.0736328$, respectively.
  }
\end{figure*}

Next, we demonstrate the effect of initialization  strategies on the convergence behavior of finding the ground state of the $12$-qubit Heisenberg model Eq.~(\ref{Heisenberg}).  In our demonstrations, VQEs are optimized by Adam \cite{Kingma2014} with a learning rate multiplier schedule, which has been shown to substantially improve Adam's performance \cite{Loshchilov2019}. Specifically, we utilize a linearly decreasing learning rate $\eta$ ranging from $0.1$ to $0.05$ during the $500$ training iterations. For each initialization strategy, we sample $1000$ sets of initial parameters to evaluate the initial gradient, and sample $10$ sets to track the optimization process, presenting the average performance.  The filled area in the figures of the rest of this subsection represents the smallest area that encompasses the behaviors of these 10 samples. To ensure the expressibility of the HEA in Fig.~\ref{fig:setup}(a), we set $L = 3.5N = 42$. Accordingly, the hyperparameter $a^{*}$  selected according to Eq.~\eqref{TH2a} is $0.0736328$.

We first conduct a sensitivity analysis on the hyperparameter $a$ by comparing the performance under three different values of $a$. They are in  different orders of magnitudes from $a^{*}=0.0736328$, namely, $a=0.007$, $a=0.07$, and $a=0.7$. We list the mean squared norm of the gradient at the first step for these different values of $a$ in Table~\ref{tab:grad}. We find that $a=0.07$, having the same order with $a^{*}$,  provides the most favorable trainability. We illustrate the convergence behavior of cost values
under different values of $a$ in Fig.~\ref{fig:heg}(a).  We find that  the soft reduced-domain with $a=0.7$ does not outperform the uniform initialization (see Fig.~\ref{fig:heg}(b)), while the hard one with $a=0.007$ is less effective than the appropriate one with $a=0.07$. Among the four reduced-domain initializations, the one with $a^{*}$ has the best convergence behavior, which  supports the validity of our theoretical results.

We then compare the reduced-domain strategy with $a^*$ against the other four strategies. We list the mean squared norm of the gradient at the first step of the other four initializations  in Table~\ref{tab:grad}.  We find that the magnitude of the initial gradient under the reduced-domain with $a=0.07$ is significantly larger than those under the other  initializations.
We depict their convergence behaviors of cost values in Fig.~\ref{fig:heg}(b). It is clear that the zero-initialization fails to optimize due to zero gradients. Under the  uniform initialization, the optimization quickly gets stuck  far away from the global minimum. In contrast, under the reduced-domain, Floquet, and Gaussian, large gradients at the initial stage can swiftly steer the cost to the vicinity of the global minimum. At the end (see the inset of Fig.~\ref{fig:heg}(b)), the reduced-domain strategy is most robust across random parametrizations and achieves the most favorable convergence value, followed by the Gaussian, and then the Floquet.

We also track the dynamics of quantum correlations throughout the optimization process by calculating the average von Neumann entropy, defined as
\begin{equation*}
\begin{aligned}
S\left( \rho \right)  \triangleq -\frac{1}{N} \sum_{i=1}^{N} \operatorname{Tr} \left[ \rho^i \log(\rho^i) \right],
\end{aligned}
\end{equation*}
where $\rho^i= \operatorname{Tr}_{\overline{i}}\left[\rho\right]$ denotes the reduced state of $\rho$ on the $i$th qubit, obtained by tracing out all the other qubits $\overline{i}$. It helps to evaluate the efficiency of  the PQC in capturing the entanglement of the target ground state. As depicted in Fig.~\ref{fig:heg}(c), due to the imbalance between the expressibility and trainability, the primitive uniform initialization has only limited capability to adjust quantum correlations, and does not allow the optimizer to explore the cost landscape far away from the starting point. Under zero-initialization, the von Neumann entropy remains zero (not plotted), which is consistent with the cost values remaining unchanged in Fig.~\ref{fig:heg}(b). In contrast, the other three advanced initialization strategies exhibit capability for adjusting entanglement, even in the hardware-efficient circuits. Although these advanced  strategies restrict the initial domain, they effectively allow the optimizer to explore a broad range of the cost landscape and approach the global optimum. Notably, the reduced-domain and Gaussian initializations are more effective in achieving quantum correlations in the target ground state, as indicated by the metric of average von Neumann entropy. The efficiency of the Floquet one can be improved through a careful selection of initial trial states, whose entanglement structures closely align with that of the target ground state \cite{cao2024exploiting}. 

\begin{figure}
	\centering
    \includegraphics[width=0.4\textwidth]{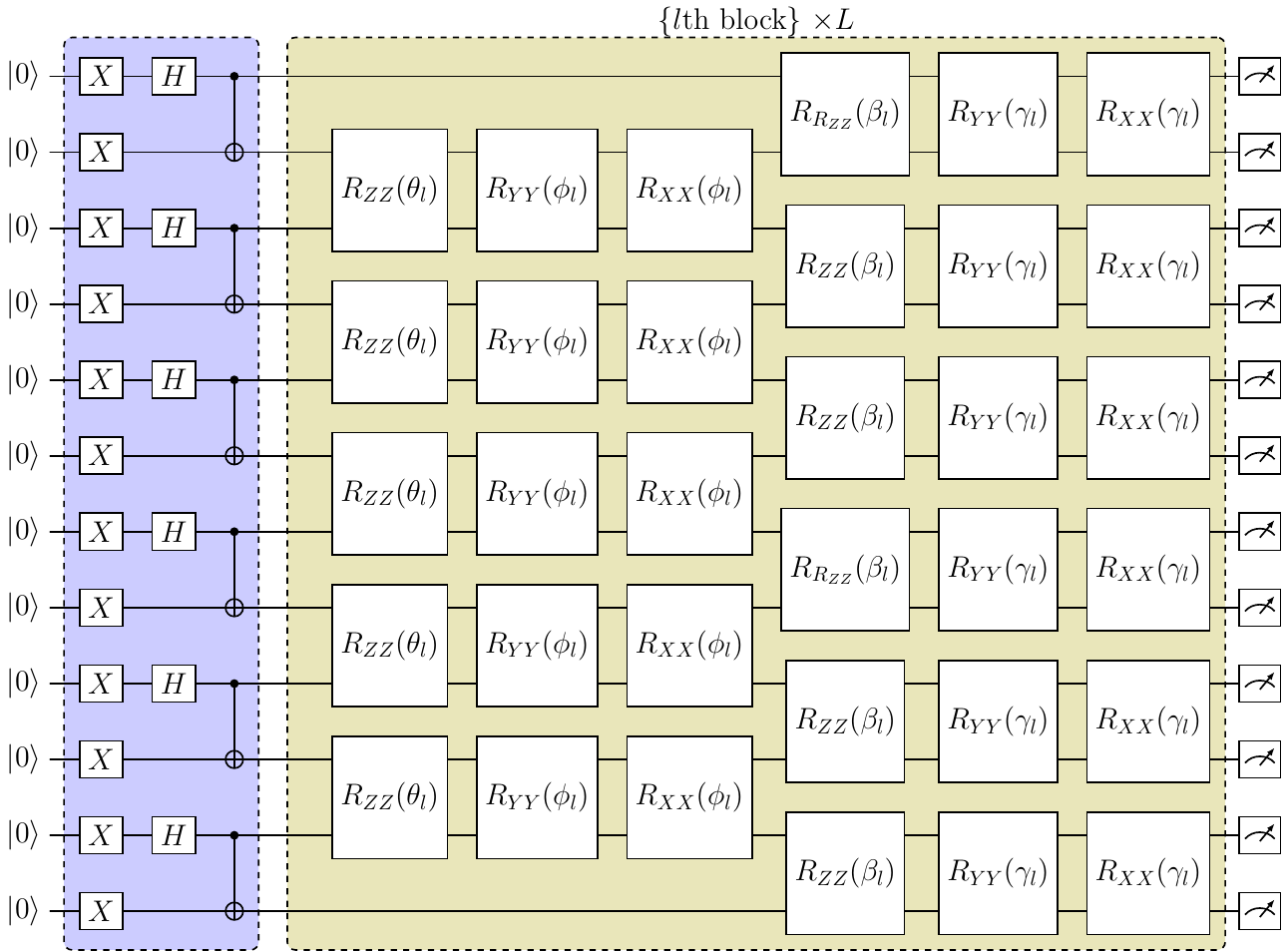}
    \caption{\label{fig:hvapqc} Setup of $L$-block HVA for the 12-qubit Heisenberg model. Here, $L=2N=24$. The purple shaded consists of Pauli-$X$ gates, Hadamard gates and CX gates. It denotes the preparation of the initial reference state $\otimes^{6}|\Psi^{-}\rangle$ with $|\Psi^{-}\rangle=(1/\sqrt{2})(01\rangle-|10\rangle)$. The two-qubit parameterized rotation gates are $R_{\Gamma\Gamma}\left(x\right)=\exp{\left\lbrace-i\frac{x}{2}\Gamma\otimes\Gamma\right\rbrace}$ with $\Gamma\Gamma\in\{XX,~YY,~ZZ\}$, and  $x\in\{\theta_{l}, \phi_{l}, \beta_{l}, \gamma_{l}\}$.
}
  \end{figure}


\begin{figure*}[htp]
	\centering
	\includegraphics[width=0.67\textwidth]{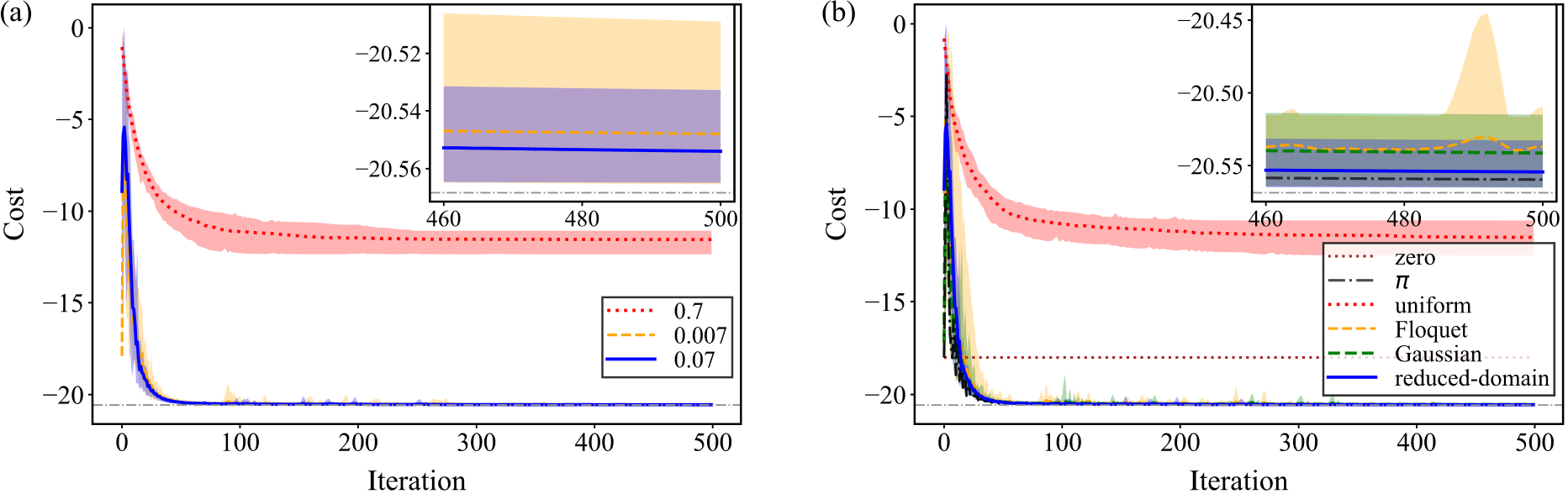}
 \caption{\label{fig:hva}Convergence behavior of the $12$-qubit Heisenberg model under  different initializations in the HVA context.~The dash-dotted grey line corresponds to that of the global minimum. Here,  $L=2N=24$.~(a) Convergence of the cost values under different  hyperparameters of the reduced-domain initialization $\mathcal{U}\left[-a\pi, a\pi\right]$:  $a=0.07$ (solid blue), $a=0.7$ (dotted red), and $a=0.007$ (dashed orange). (b) Convergence of the cost values under different initializations.  Here, zero-initialization: dotted dark red, $\pi$-initialization: dash dotted black, uniform: dotted red, Floquet: thin dashed orange, Gaussian: dashed green, and reduced-domain: solid blue. The corresponding hyperparameters are $W=0.4$, $\sigma^2= 0.001488$, and  $a=0.07$, respectively.}
\end{figure*}

We further investigate the effectiveness of the advanced initialization strategies on a problem-specific ansatz known as the Hamiltonian variational ansatz (HVA) \cite{Wiersema2020}.  In contrast to the HEA, the PQC of HVA is designed on the basis of the Hamiltonian of interest. For the 12-qubit Heisenberg model, we illustrate the PQC in Fig.~\ref{fig:hvapqc}. Note that the initial reference state plays a crucial role in HVA. In Fig.~\ref{fig:hvapqc}, the initial reference state is $\otimes^{6}|\Psi^{-}\rangle$ with $|\Psi^{-}\rangle=(1/\sqrt{2})(01\rangle-|10\rangle)$.  It is suggested in Ref.~\cite{Wiersema2020} that a suitable initialization strategy is crucial for circumventing BPs and successfully finding the ground state. In particular, they employed a simpler version of identity-block initialization than that in Ref.~\cite{Grant2019} by setting all variational parameters to be $\pi$, which we refer to as $\pi$-initialization in our paper. It enables accurate approximations to the ground state with rapid convergence.


In the context of HVA, we set $L=2N=24$, resulting in $6N(N-1)$ two-qubit parameterized rotation gates, approximately equal to the total number of $7N^2$ single-qubit parameterized rotation gates in the previous HEA; see Fig.~\ref{fig:setup}(a).  In view of the comparable number of parameterized gates, we set the hyperparameter  $a$ of the reduced-domain initialization for HVA to be $0.07$.
This choice is numerically supported by the sensitivity analysis on $a$ concerning the initial trainability and the final convergence behavior of cost values illustrated in Table~\ref{tab:grad} and Fig.~\ref{fig:hva}(a), respectively. In addition, we assess the five initialization strategies as well as the $\pi$-initialization employed in Ref.~\cite{Wiersema2020}, by examining their initial gradient and cost convergence, as shown in Table~\ref{tab:grad} and Fig.~\ref{fig:hva}(b), respectively. Similar to the numerical findings in the HEA context, the zero-initialization, uniform and soft reduced-domain ($a=0.7$) strategies  suffer from trainability issues, whereas the other initialization strategies
can help approximate the ground state, among which the appropriate reduced-domain strategy  ($a=0.07$) demonstrates superior performance. Note that the $\pi$-initialization strategy employed in Ref.~\cite{Wiersema2020} is a deterministic parameter initialization that exhibits a fixed minor gradient yet achieves better convergence performance than the average performance of the other random initializations. However, the performance of deterministic initializations is highly dependent on the Hamiltonian model and the choice of the reference state. Specifically, although the HVA reduces to the identity when all parameters are set to either $0$ or $\pi$, their performances are significantly different due to the gradient differences in their landscapes. In practical VQE tasks, deterministic initializations may not be as effective as random ones. Random samples can potentially achieve lower cost values, as illustrated by the blue shaded area in the inset of Fig.~\ref{fig:hva}(b).

Although the reduced-domain initialization is effective in addressing trainability issues, there is still a gap between the cost values and the global minimum, which is around 0.005 at best in our demonstrations. The role of reduced-domain is to balance the conflict between trainability and expressibility of PQCs. Recall that for VQAs, a key factor is designing an appropriate ansatz that is either hardware-efficient, problem-inspired, or in a learning way.Thus, to further improve the performance of VQE, we may integrate the reduced-domain strategy with an effective ansatz for PQC, e.g., the entanglement-variational hardware-efficient ansatz~\cite{wang2024eha}.

To take into account of the impact of statistical noise owing to limited measurement shots, for the $10$-qubit Heisenberg model, we consider four cases where the numbers of measurement counts per evaluation are 50, 500, 5000, and infinitely many. Here, the circuit depth is set to be $L=3.5N=35$ and the VQEs are optimized using Adam with a fixed learning rate of $0.05$ during the $200$ training iterations. These numerical simulations are conducted using the PennyLane’s built-in "default.qubit" device \cite{pennylane2018} with a finite number of shots to measure observables.

\begin{figure}[h]
        \centering
	\includegraphics[width=0.34\textwidth]{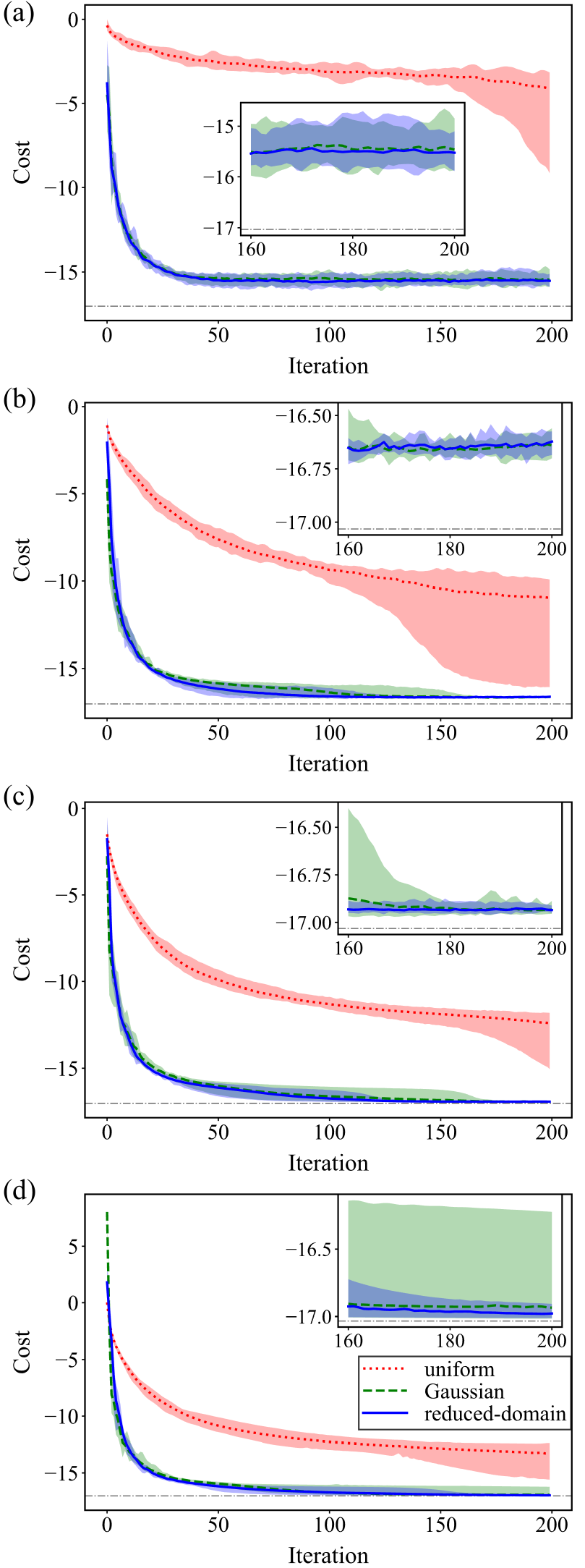}
\caption{\label{fig:finite} Convergence behavior of the $10$-qubit Heisenberg model under uniform (dotted red), Gaussian (dashed green), and reduced-domain (solid blue) initialization, with different numbers of measurement shots per evaluation:~(a) 50 shots; (b) 500 shots; (c) 5000 shots; (d) infinitely many.  Here, $S=2$,~$L=35$, the variance of Gaussian initialization is chosen as $\frac{1}{8SL}=\frac{1}{560}\approx 0.0017857$, while the hyperparameter $a$ of the reduced-domain  is selected to be $0.0806522$ according to Eq.~\eqref{TH2a}. }
\end{figure}

As illustrated in Fig.~\ref{fig:finite}, regardless of the number of measurement shots, compared to the uniform initialization, the performances under our reduced-domain and Gaussian can be greatly enhanced. Moreover, from the insets of  Fig.~\ref{fig:finite}, the performance of our reduced-domain is always more robust with respect to different realizations as compared with Gaussian.  From Fig.~\ref{fig:finite}(c) and Fig.~\ref{fig:finite}(d), we can see that for Gaussian initialization, the variance of the cost in the ideal case is larger than that with 5000-shots. This is because  the statistical noise may play a positive role. As verified in Section 9.5 in \cite{wright2022high}, gradient descent with small random noise can  implicitly compute the second-order information and exploit the direction of the negative curvature to achieve adequate local descent.

\subsection{QNN}
\begin{figure*}[htp]
    \centering
    \includegraphics[width=\textwidth]{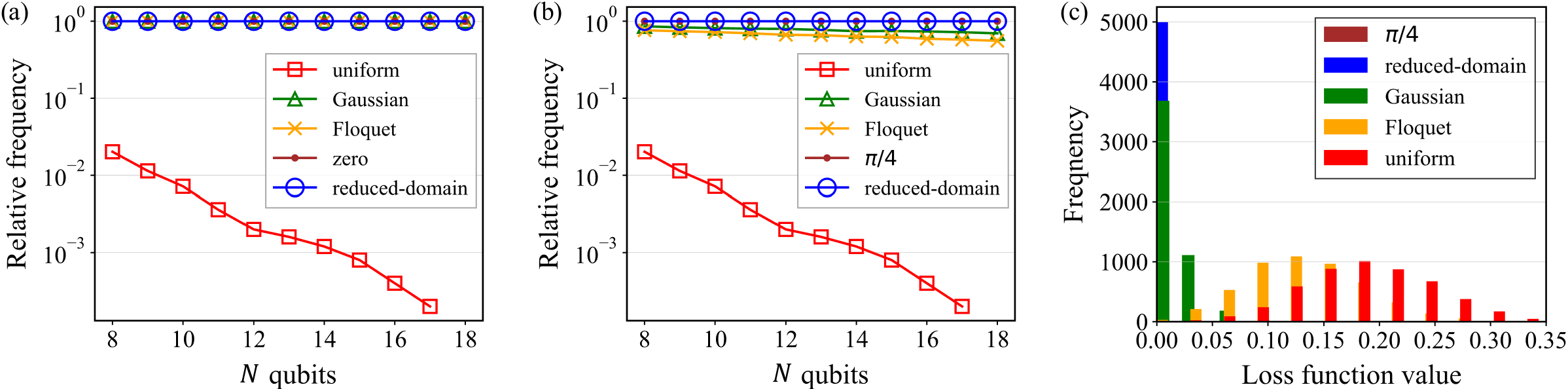}
    \caption{\label{fig:QNNdecay} Relative frequencies of achieving global convergence in 5000 rounds of demonstrations under five initializations: uniform, Gaussian,  Floquet, zero/$\frac{\pi}{4}$-initialization (parameters are set to be  zero/$\frac{\pi}{4}$), and reduced-domain. Here, $\sigma^2=0.008225$, $W=0.2$, and $a=0.1$. (a)  The center of the parameters' domain is zero for the five initializations. The probability of success decays exponentially with the number of qubits under uniform initialization, while it is always 1 under the other four strategies. (b) The center of the parameters' domain is $\frac{\pi}{4}$ for the five initializations. Both the Floquet and Gaussian initializations exhibit a slightly exponentially decreasing probability of success, while the reduced-domain and $\frac{\pi}{4}$-initialization strategies consistently succeed. (c) Detailed distribution of the loss function value at convergence for the $16$-qubit instance. The cumulative frequency of global convergence is 5000 for both $\frac{\pi}{4}$-initialization and reduced-domain ($\mathcal{U}\left[ \frac{\pi}{4}-a\frac{\pi}{2}, \frac{\pi}{4}+a\frac{\pi}{2}\right]$ with $a=0.1$), 3602 for Gaussian $(\mathcal{N}\left(\frac{\pi}{4},0.008225 \right) $), 2887 for Floquet ($\mathcal{U}\left[\frac{\pi}{4}-0.2, \frac{\pi}{4}+0.2\right])$, and only 2 for uniform ($\mathcal{U}\left[ \frac{\pi}{4}-\frac{\pi}{2}, \frac{\pi}{4}+\frac{\pi}{2}\right]$).}
\end{figure*}

We now extend the reduced-domain initialization strategy to QNN. In Proposition 5 of Ref.~\cite{You2021}, it was demonstrated that in the under-parameterized region, there exists such a dataset that can induce a loss function with exponentially many spurious local minima.  Subsequently, several simple yet extremely hard QNN instances for training were presented, where the initial parameters are uniformly drawn from the entire period. Here we focus on Example 2 in Ref.~\cite{You2021} and demonstrate that the reduced-domain parameter initialization strategy can help achieve global convergence efficiently.
	
Let us first briefly introduce the hard instance for training in Ref.~\cite{You2021}. In the under-parameterized region with depth $L=1$, constructing a $p$-parameter hard instance involves $N=p$ qubits with the loss function
	\begin{equation}\label{loss}
	\frac{1}{2p}\sum_{l=1}^{p}\left[ {\left( \sin{2\theta_{l}}-\sin{\frac{\pi}{50}}\right) }^{2} +\frac{1}{4}{\left( \cos{2\theta_{l}}-\cos{\frac{\pi}{50}}\right) }^{2}\right].
	\end{equation}
Here, the $p$-dimensional parameter vector is denoted by $\boldsymbol{\theta}=(\theta_{1}, \theta_{2}, \cdots, \theta_{p})$. It was proved that within each period, for instance, $[-\frac{\pi}{2}, \frac{\pi}{2}]$, there are $2^{p}$ minima. Among them, the global minimum is attained at $\boldsymbol{\theta}^{\star}={\left(\frac{\pi}{100},\dots,\frac{\pi}{100} \right)} $, while the other $2^{p}-1$ local minima are spurious with non-negligible suboptimality gaps.

We follow the same routine as outlined in Ref.~\cite{You2021}, where QNNs are trained  using back-propagating gradients with RMSProp. The constant learning rate is set to be 0.01, and the smoothing constant for mean-squared estimation is 0.99. We perform 5000 rounds of demonstrations with 200 iterations for each round.  We consider the convergence of the global minimum within a margin of error of $10^{-4}$.

We compare the relative frequencies of achieving global convergence under  five initialization strategies: zero-initialization by setting all parameters to be zero, uniform $\mathcal{U}\left[ -\frac{\pi}{2}, \frac{\pi}{2}\right]$ as in Ref.~\cite{You2021}, Floquet $\mathcal{U}\left[ -W,W\right]$ with $W=0.2$,  reduced-domain  $\mathcal{U}\left[ -a\frac{\pi}{2}, a\frac{\pi}{2}\right]$ with  $a=0.1$, and Gaussian $\mathcal{N}\left(0,0.008225\right) $, whose variance is set to be the same as the reduced-domain strategy. From the semi-log plot in Fig.~\ref{fig:QNNdecay}(a), it is clear that under uniform initialization, the probability of success decays exponentially with the number of qubits (or parameters). In particular, in the $18$-qubit case, there is no round achieving the global minimum within $5000$ rounds of demonstrations. In contrast, all the other advanced strategies consistently achieve global convergence.


To further compare the efficiency of these advanced initialization strategies, we  shift the parameter initialization domain away from the global minimum. Specifically, we adjust the center of the above five initialization strategies to $\frac{\pi}{4}$, instead of $0$. From the semi-log plot in Fig.~\ref{fig:QNNdecay}(b), both the Gaussian and Floquet initializations exhibit exponential growth of failure in achieving global convergence as the system size increases, although the rates of their exponential growth are significantly lower than that of uniform initialization. Notably, the reduced-domain and $\frac{\pi}{4}$-initialization strategies consistently succeed in achieving global convergence in this context of shallow circuits. Especially,  for the $16$-qubit instance, the cumulative frequency of global convergence is only 2 under uniform initialization, while it is 2887, 3602, 5000, and 5000 under the Floquet, Gaussian, $\frac{\pi}{4}$-initialization, and reduced-domain strategies, respectively. Furthermore, we illustrate a detailed distribution of the loss function's convergence for the $16$-qubit case in Fig.~\ref{fig:QNNdecay}(c).

From the above demonstrations, we conclude that when the landscape of the cost function has abundance of local minima, it is a smart choice to properly reduce the initial domain to improve the trainability by limiting the expressibility of  initial PQCs.

\section{\label{conclusion}Conclusion \lowercase{and} Discussion}
In this paper, we have presented an efficient reduced-domain parameter initialization, enhancing trainability of VQAs both theoretically and empirically. For Hamiltonian represented as a sum of local Pauli strings, we have proved that with the reduced-domain strategy, the average of the squared norm of the cost gradient decays at most polynomially with the depth of circuit and increases linearly with the number of individual terms in the summation of Hamiltonian. Thus, it clearly implies that reduced-domain initialization helps PQC avoid the phenomenon of BPs. Our empirical results have validated the theoretical results in VQE tasks. Moreover, via simulations, we have demonstrated that the reduced-domain strategy can also help address the poor convergence in presence of BPs in the VQE context of both HEA and HVA, and exponentially many spurious local minima in under-parameterized QNNs. The reduced-domain initialization strategy has the potential to significantly enhance trainability and convergence of PQCs and generate wide impact on developing algorithms with quantum advantages for various practical applications.

\begin{acknowledgments}
B.~Q. acknowledges the support of the National Natural Science Foundation of China (No.~61773370 and No.~61833010), and D.~D. acknowledges the support of the Australian Research Council's Future Fellowship funding scheme under project FT220100656.
\end{acknowledgments}

\bibliography{ref}

\begin{thebibliography}{79}%
\makeatletter
\providecommand \@ifxundefined [1]{%
 \@ifx{#1\undefined}
}%
\providecommand \@ifnum [1]{%
 \ifnum #1\expandafter \@firstoftwo
 \else \expandafter \@secondoftwo
 \fi
}%
\providecommand \@ifx [1]{%
 \ifx #1\expandafter \@firstoftwo
 \else \expandafter \@secondoftwo
 \fi
}%
\providecommand \natexlab [1]{#1}%
\providecommand \enquote  [1]{``#1''}%
\providecommand \bibnamefont  [1]{#1}%
\providecommand \bibfnamefont [1]{#1}%
\providecommand \citenamefont [1]{#1}%
\providecommand \href@noop [0]{\@secondoftwo}%
\providecommand \href [0]{\begingroup \@sanitize@url \@href}%
\providecommand \@href[1]{\@@startlink{#1}\@@href}%
\providecommand \@@href[1]{\endgroup#1\@@endlink}%
\providecommand \@sanitize@url [0]{\catcode `\\12\catcode `\$12\catcode
  `\&12\catcode `\#12\catcode `\^12\catcode `\_12\catcode `\%12\relax}%
\providecommand \@@startlink[1]{}%
\providecommand \@@endlink[0]{}%
\providecommand \url  [0]{\begingroup\@sanitize@url \@url }%
\providecommand \@url [1]{\endgroup\@href {#1}{\urlprefix }}%
\providecommand \urlprefix  [0]{URL }%
\providecommand \Eprint [0]{\href }%
\providecommand \doibase [0]{https://doi.org/}%
\providecommand \selectlanguage [0]{\@gobble}%
\providecommand \bibinfo  [0]{\@secondoftwo}%
\providecommand \bibfield  [0]{\@secondoftwo}%
\providecommand \translation [1]{[#1]}%
\providecommand \BibitemOpen [0]{}%
\providecommand \bibitemStop [0]{}%
\providecommand \bibitemNoStop [0]{.\EOS\space}%
\providecommand \EOS [0]{\spacefactor3000\relax}%
\providecommand \BibitemShut  [1]{\csname bibitem#1\endcsname}%
\let\auto@bib@innerbib\@empty
\bibitem [{\citenamefont {McClean}\ \emph {et~al.}(2016)\citenamefont
  {McClean}, \citenamefont {Romero}, \citenamefont {Babbush},\ and\
  \citenamefont {Aspuru-Guzik}}]{McClean2016}%
  \BibitemOpen
  \bibfield  {author} {\bibinfo {author} {\bibfnamefont {J.~R.}\ \bibnamefont
  {McClean}}, \bibinfo {author} {\bibfnamefont {J.}~\bibnamefont {Romero}},
  \bibinfo {author} {\bibfnamefont {R.}~\bibnamefont {Babbush}},\ and\ \bibinfo
  {author} {\bibfnamefont {A.}~\bibnamefont {Aspuru-Guzik}},\ }\bibfield
  {title} {\bibinfo {title} {The theory of variational hybrid quantum-classical
  algorithms},\ }\href {https://doi.org/10.1088/1367-2630/18/2/023023}
  {\bibfield  {journal} {\bibinfo  {journal} {New J. Phys.}\ }\textbf {\bibinfo
  {volume} {18}},\ \bibinfo {pages} {023023} (\bibinfo {year}
  {2016})}\BibitemShut {NoStop}%
\bibitem [{\citenamefont {Cerezo}\ \emph
  {et~al.}(2021{\natexlab{a}})\citenamefont {Cerezo}, \citenamefont
  {Arrasmith}, \citenamefont {Babbush}, \citenamefont {Benjamin}, \citenamefont
  {Endo}, \citenamefont {Fujii}, \citenamefont {McClean}, \citenamefont
  {Mitarai}, \citenamefont {Yuan}, \citenamefont {Cincio} \emph
  {et~al.}}]{Cerezo2021v}%
  \BibitemOpen
  \bibfield  {author} {\bibinfo {author} {\bibfnamefont {M.}~\bibnamefont
  {Cerezo}}, \bibinfo {author} {\bibfnamefont {A.}~\bibnamefont {Arrasmith}},
  \bibinfo {author} {\bibfnamefont {R.}~\bibnamefont {Babbush}}, \bibinfo
  {author} {\bibfnamefont {S.~C.}\ \bibnamefont {Benjamin}}, \bibinfo {author}
  {\bibfnamefont {S.}~\bibnamefont {Endo}}, \bibinfo {author} {\bibfnamefont
  {K.}~\bibnamefont {Fujii}}, \bibinfo {author} {\bibfnamefont {J.~R.}\
  \bibnamefont {McClean}}, \bibinfo {author} {\bibfnamefont {K.}~\bibnamefont
  {Mitarai}}, \bibinfo {author} {\bibfnamefont {X.}~\bibnamefont {Yuan}},
  \bibinfo {author} {\bibfnamefont {L.}~\bibnamefont {Cincio}}, \emph
  {et~al.},\ }\bibfield  {title} {\bibinfo {title} {Variational quantum
  algorithms},\ }\href {https://doi.org/10.1038/s42254-021-00348-9} {\bibfield
  {journal} {\bibinfo  {journal} {Nat. Rev. Phys.}\ }\textbf {\bibinfo {volume}
  {3}},\ \bibinfo {pages} {625} (\bibinfo {year}
  {2021}{\natexlab{a}})}\BibitemShut {NoStop}%
\bibitem [{\citenamefont {Jerbi}\ \emph {et~al.}(2023)\citenamefont {Jerbi},
  \citenamefont {Fiderer}, \citenamefont {Poulsen~Nautrup}, \citenamefont
  {K{\"u}bler}, \citenamefont {Briegel},\ and\ \citenamefont
  {Dunjko}}]{Jerbi2023}%
  \BibitemOpen
  \bibfield  {author} {\bibinfo {author} {\bibfnamefont {S.}~\bibnamefont
  {Jerbi}}, \bibinfo {author} {\bibfnamefont {L.~J.}\ \bibnamefont {Fiderer}},
  \bibinfo {author} {\bibfnamefont {H.}~\bibnamefont {Poulsen~Nautrup}},
  \bibinfo {author} {\bibfnamefont {J.~M.}\ \bibnamefont {K{\"u}bler}},
  \bibinfo {author} {\bibfnamefont {H.~J.}\ \bibnamefont {Briegel}},\ and\
  \bibinfo {author} {\bibfnamefont {V.}~\bibnamefont {Dunjko}},\ }\bibfield
  {title} {\bibinfo {title} {Quantum machine learning beyond kernel methods},\
  }\href {https://doi.org/10.1038/s41467-023-36159-y} {\bibfield  {journal}
  {\bibinfo  {journal} {Nat. Commun.}\ }\textbf {\bibinfo {volume} {14}},\
  \bibinfo {pages} {517} (\bibinfo {year} {2023})}\BibitemShut {NoStop}%
\bibitem [{\citenamefont {Liu}\ \emph {et~al.}(2021)\citenamefont {Liu},
  \citenamefont {Arunachalam},\ and\ \citenamefont {Temme}}]{liu2021a}%
  \BibitemOpen
  \bibfield  {author} {\bibinfo {author} {\bibfnamefont {Y.}~\bibnamefont
  {Liu}}, \bibinfo {author} {\bibfnamefont {S.}~\bibnamefont {Arunachalam}},\
  and\ \bibinfo {author} {\bibfnamefont {K.}~\bibnamefont {Temme}},\ }\bibfield
   {title} {\bibinfo {title} {A rigorous and robust quantum speed-up in
  supervised machine learning},\ }\href
  {https://doi.org/10.1038/s41567-021-01287-z} {\bibfield  {journal} {\bibinfo
  {journal} {Nat. Phys.}\ }\textbf {\bibinfo {volume} {17}},\ \bibinfo {pages}
  {1013} (\bibinfo {year} {2021})}\BibitemShut {NoStop}%
\bibitem [{\citenamefont {Benedetti}\ \emph {et~al.}(2019)\citenamefont
  {Benedetti}, \citenamefont {Lloyd}, \citenamefont {Sack},\ and\ \citenamefont
  {Fiorentini}}]{Benedetti2019}%
  \BibitemOpen
  \bibfield  {author} {\bibinfo {author} {\bibfnamefont {M.}~\bibnamefont
  {Benedetti}}, \bibinfo {author} {\bibfnamefont {E.}~\bibnamefont {Lloyd}},
  \bibinfo {author} {\bibfnamefont {S.}~\bibnamefont {Sack}},\ and\ \bibinfo
  {author} {\bibfnamefont {M.}~\bibnamefont {Fiorentini}},\ }\bibfield  {title}
  {\bibinfo {title} {Parameterized quantum circuits as machine learning
  models},\ }\href {https://doi.org/10.1088/2058-9565/ab4eb5} {\bibfield
  {journal} {\bibinfo  {journal} {Quantum Sci. Technol.}\ }\textbf {\bibinfo
  {volume} {4}},\ \bibinfo {pages} {043001} (\bibinfo {year}
  {2019})}\BibitemShut {NoStop}%
\bibitem [{\citenamefont {Sweke}\ \emph {et~al.}(2020)\citenamefont {Sweke},
  \citenamefont {Wilde}, \citenamefont {Meyer}, \citenamefont {Schuld},
  \citenamefont {F{\"a}hrmann}, \citenamefont {Meynard-Piganeau},\ and\
  \citenamefont {Eisert}}]{Sweke2020}%
  \BibitemOpen
  \bibfield  {author} {\bibinfo {author} {\bibfnamefont {R.}~\bibnamefont
  {Sweke}}, \bibinfo {author} {\bibfnamefont {F.}~\bibnamefont {Wilde}},
  \bibinfo {author} {\bibfnamefont {J.}~\bibnamefont {Meyer}}, \bibinfo
  {author} {\bibfnamefont {M.}~\bibnamefont {Schuld}}, \bibinfo {author}
  {\bibfnamefont {P.~K.}\ \bibnamefont {F{\"a}hrmann}}, \bibinfo {author}
  {\bibfnamefont {B.}~\bibnamefont {Meynard-Piganeau}},\ and\ \bibinfo {author}
  {\bibfnamefont {J.}~\bibnamefont {Eisert}},\ }\bibfield  {title} {\bibinfo
  {title} {Stochastic gradient descent for hybrid quantum-classical
  optimization},\ }\href {https://doi.org/10.22331/q-2020-08-31-314} {\bibfield
   {journal} {\bibinfo  {journal} {Quantum}\ }\textbf {\bibinfo {volume} {4}},\
  \bibinfo {pages} {314} (\bibinfo {year} {2020})}\BibitemShut {NoStop}%
\bibitem [{\citenamefont {Dong}\ and\ \citenamefont
  {Petersen}(2022)}]{Dong2022}%
  \BibitemOpen
  \bibfield  {author} {\bibinfo {author} {\bibfnamefont {D.}~\bibnamefont
  {Dong}}\ and\ \bibinfo {author} {\bibfnamefont {I.~R.}\ \bibnamefont
  {Petersen}},\ }\bibfield  {title} {\bibinfo {title} {Quantum estimation,
  control and learning: Opportunities and challenges},\ }\href
  {https://doi.org/10.1016/j.arcontrol.2022.04.011} {\bibfield  {journal}
  {\bibinfo  {journal} {Annual Reviews in Control}\ }\textbf {\bibinfo {volume}
  {54}},\ \bibinfo {pages} {243} (\bibinfo {year} {2022})}\BibitemShut
  {NoStop}%
\bibitem [{\citenamefont {Basheer}\ \emph {et~al.}(2022)\citenamefont
  {Basheer}, \citenamefont {Feng}, \citenamefont {Ferrie},\ and\ \citenamefont
  {Li}}]{Basheer2022}%
  \BibitemOpen
  \bibfield  {author} {\bibinfo {author} {\bibfnamefont {A.}~\bibnamefont
  {Basheer}}, \bibinfo {author} {\bibfnamefont {Y.}~\bibnamefont {Feng}},
  \bibinfo {author} {\bibfnamefont {C.}~\bibnamefont {Ferrie}},\ and\ \bibinfo
  {author} {\bibfnamefont {S.}~\bibnamefont {Li}},\ }\bibfield  {title}
  {\bibinfo {title} {Alternating layered variational quantum circuits can be
  classically optimized efficiently using classical shadows},\ }\href
  {https://doi.org/10.48550/arXiv.2208.11623} {\bibfield  {journal} {\bibinfo
  {journal} {arXiv preprint arXiv:2208.11623}\ } (\bibinfo {year}
  {2022})}\BibitemShut {NoStop}%
\bibitem [{\citenamefont {Lavrijsen}\ \emph {et~al.}(2020)\citenamefont
  {Lavrijsen}, \citenamefont {Tudor}, \citenamefont {M{\"u}ller}, \citenamefont
  {Iancu},\ and\ \citenamefont {De~Jong}}]{Lavrijsen2020}%
  \BibitemOpen
  \bibfield  {author} {\bibinfo {author} {\bibfnamefont {W.}~\bibnamefont
  {Lavrijsen}}, \bibinfo {author} {\bibfnamefont {A.}~\bibnamefont {Tudor}},
  \bibinfo {author} {\bibfnamefont {J.}~\bibnamefont {M{\"u}ller}}, \bibinfo
  {author} {\bibfnamefont {C.}~\bibnamefont {Iancu}},\ and\ \bibinfo {author}
  {\bibfnamefont {W.}~\bibnamefont {De~Jong}},\ }\bibfield  {title} {\bibinfo
  {title} {Classical optimizers for noisy intermediate-scale quantum devices},\
  }in\ \href {https://ieeexplore.ieee.org/document/9259985/} {\emph {\bibinfo
  {booktitle} {2020 IEEE International Conference on Quantum Computing and
  Engineering (QCE)}}}\ (\bibinfo {year} {2020})\ pp.\ \bibinfo {pages}
  {267--277}\BibitemShut {NoStop}%
\bibitem [{\citenamefont {Biamonte}\ \emph {et~al.}(2017)\citenamefont
  {Biamonte}, \citenamefont {Wittek}, \citenamefont {Pancotti}, \citenamefont
  {Rebentrost}, \citenamefont {Wiebe},\ and\ \citenamefont
  {Lloyd}}]{Biamonte2017}%
  \BibitemOpen
  \bibfield  {author} {\bibinfo {author} {\bibfnamefont {J.}~\bibnamefont
  {Biamonte}}, \bibinfo {author} {\bibfnamefont {P.}~\bibnamefont {Wittek}},
  \bibinfo {author} {\bibfnamefont {N.}~\bibnamefont {Pancotti}}, \bibinfo
  {author} {\bibfnamefont {P.}~\bibnamefont {Rebentrost}}, \bibinfo {author}
  {\bibfnamefont {N.}~\bibnamefont {Wiebe}},\ and\ \bibinfo {author}
  {\bibfnamefont {S.}~\bibnamefont {Lloyd}},\ }\bibfield  {title} {\bibinfo
  {title} {Quantum machine learning},\ }\href
  {https://doi.org/10.1038/nature23474} {\bibfield  {journal} {\bibinfo
  {journal} {Nature}\ }\textbf {\bibinfo {volume} {549}},\ \bibinfo {pages}
  {195} (\bibinfo {year} {2017})}\BibitemShut {NoStop}%
\bibitem [{\citenamefont {Mitarai}\ \emph {et~al.}(2018)\citenamefont
  {Mitarai}, \citenamefont {Negoro}, \citenamefont {Kitagawa},\ and\
  \citenamefont {Fujii}}]{Mitarai2018}%
  \BibitemOpen
  \bibfield  {author} {\bibinfo {author} {\bibfnamefont {K.}~\bibnamefont
  {Mitarai}}, \bibinfo {author} {\bibfnamefont {M.}~\bibnamefont {Negoro}},
  \bibinfo {author} {\bibfnamefont {M.}~\bibnamefont {Kitagawa}},\ and\
  \bibinfo {author} {\bibfnamefont {K.}~\bibnamefont {Fujii}},\ }\bibfield
  {title} {\bibinfo {title} {Quantum circuit learning},\ }\href
  {https://doi.org/10.1103/PhysRevA.98.032309} {\bibfield  {journal} {\bibinfo
  {journal} {Phys. Rev. A}\ }\textbf {\bibinfo {volume} {98}},\ \bibinfo
  {pages} {032309} (\bibinfo {year} {2018})}\BibitemShut {NoStop}%
\bibitem [{\citenamefont {Farhi}\ and\ \citenamefont
  {Neven}(2018)}]{Farhi2018}%
  \BibitemOpen
  \bibfield  {author} {\bibinfo {author} {\bibfnamefont {E.}~\bibnamefont
  {Farhi}}\ and\ \bibinfo {author} {\bibfnamefont {H.}~\bibnamefont {Neven}},\
  }\bibfield  {title} {\bibinfo {title} {Classification with quantum neural
  networks on near term processors},\ }\href
  {https://doi.org/10.48550/arXiv.1802.06002} {\bibfield  {journal} {\bibinfo
  {journal} {arXiv preprint arXiv:1802.06002}\ } (\bibinfo {year}
  {2018})}\BibitemShut {NoStop}%
\bibitem [{\citenamefont {Wu}\ \emph {et~al.}(2020)\citenamefont {Wu},
  \citenamefont {Cao}, \citenamefont {Xie},\ and\ \citenamefont
  {Liu}}]{Wu2020}%
  \BibitemOpen
  \bibfield  {author} {\bibinfo {author} {\bibfnamefont {R.-B.}\ \bibnamefont
  {Wu}}, \bibinfo {author} {\bibfnamefont {X.}~\bibnamefont {Cao}}, \bibinfo
  {author} {\bibfnamefont {P.}~\bibnamefont {Xie}},\ and\ \bibinfo {author}
  {\bibfnamefont {Y.-x.}\ \bibnamefont {Liu}},\ }\bibfield  {title} {\bibinfo
  {title} {End-to-end quantum machine learning implemented with controlled
  quantum dynamics},\ }\href {https://doi.org/10.1103/PhysRevApplied.14.064020}
  {\bibfield  {journal} {\bibinfo  {journal} {Phys. Rev. Appl.}\ }\textbf
  {\bibinfo {volume} {14}},\ \bibinfo {pages} {064020} (\bibinfo {year}
  {2020})}\BibitemShut {NoStop}%
\bibitem [{\citenamefont {Perrier}\ \emph {et~al.}(2020)\citenamefont
  {Perrier}, \citenamefont {Tao},\ and\ \citenamefont {Ferrie}}]{Perrier2020}%
  \BibitemOpen
  \bibfield  {author} {\bibinfo {author} {\bibfnamefont {E.}~\bibnamefont
  {Perrier}}, \bibinfo {author} {\bibfnamefont {D.}~\bibnamefont {Tao}},\ and\
  \bibinfo {author} {\bibfnamefont {C.}~\bibnamefont {Ferrie}},\ }\bibfield
  {title} {\bibinfo {title} {Quantum geometric machine learning for quantum
  circuits and control},\ }\href {https://doi.org/10.1088/1367-2630/abbf6b}
  {\bibfield  {journal} {\bibinfo  {journal} {New J. Phys.}\ }\textbf {\bibinfo
  {volume} {22}},\ \bibinfo {pages} {103056} (\bibinfo {year}
  {2020})}\BibitemShut {NoStop}%
\bibitem [{\citenamefont {Xu}\ \emph {et~al.}(2021)\citenamefont {Xu},
  \citenamefont {Benjamin},\ and\ \citenamefont {Yuan}}]{Xu2021vari}%
  \BibitemOpen
  \bibfield  {author} {\bibinfo {author} {\bibfnamefont {X.}~\bibnamefont
  {Xu}}, \bibinfo {author} {\bibfnamefont {S.~C.}\ \bibnamefont {Benjamin}},\
  and\ \bibinfo {author} {\bibfnamefont {X.}~\bibnamefont {Yuan}},\ }\bibfield
  {title} {\bibinfo {title} {Variational circuit compiler for quantum error
  correction},\ }\href {https://doi.org/10.1103/PhysRevApplied.15.034068}
  {\bibfield  {journal} {\bibinfo  {journal} {Phys. Rev. Appl.}\ }\textbf
  {\bibinfo {volume} {15}},\ \bibinfo {pages} {034068} (\bibinfo {year}
  {2021})}\BibitemShut {NoStop}%
\bibitem [{\citenamefont {Cao}\ \emph {et~al.}(2022)\citenamefont {Cao},
  \citenamefont {Zhang}, \citenamefont {Wu}, \citenamefont {Grassl},\ and\
  \citenamefont {Zeng}}]{Cao2022quantumv}%
  \BibitemOpen
  \bibfield  {author} {\bibinfo {author} {\bibfnamefont {C.}~\bibnamefont
  {Cao}}, \bibinfo {author} {\bibfnamefont {C.}~\bibnamefont {Zhang}}, \bibinfo
  {author} {\bibfnamefont {Z.}~\bibnamefont {Wu}}, \bibinfo {author}
  {\bibfnamefont {M.}~\bibnamefont {Grassl}},\ and\ \bibinfo {author}
  {\bibfnamefont {B.}~\bibnamefont {Zeng}},\ }\bibfield  {title} {\bibinfo
  {title} {Quantum variational learning for quantum error-correcting codes},\
  }\href {https://doi.org/10.22331/q-2022-10-06-828} {\bibfield  {journal}
  {\bibinfo  {journal} {{Quantum}}\ }\textbf {\bibinfo {volume} {6}},\ \bibinfo
  {pages} {828} (\bibinfo {year} {2022})}\BibitemShut {NoStop}%
\bibitem [{\citenamefont {Johnson}\ \emph {et~al.}(2017)\citenamefont
  {Johnson}, \citenamefont {Romero}, \citenamefont {Olson}, \citenamefont
  {Cao},\ and\ \citenamefont {Aspuru-Guzik}}]{johnson2017qvector}%
  \BibitemOpen
  \bibfield  {author} {\bibinfo {author} {\bibfnamefont {P.~D.}\ \bibnamefont
  {Johnson}}, \bibinfo {author} {\bibfnamefont {J.}~\bibnamefont {Romero}},
  \bibinfo {author} {\bibfnamefont {J.}~\bibnamefont {Olson}}, \bibinfo
  {author} {\bibfnamefont {Y.}~\bibnamefont {Cao}},\ and\ \bibinfo {author}
  {\bibfnamefont {A.}~\bibnamefont {Aspuru-Guzik}},\ }\bibfield  {title}
  {\bibinfo {title} {{QVECTOR}: an algorithm for device-tailored quantum error
  correction},\ }\href {https://arxiv.org/abs/1711.02249} {\bibfield  {journal}
  {\bibinfo  {journal} {arXiv preprint arXiv:1711.02249}\ } (\bibinfo {year}
  {2017})}\BibitemShut {NoStop}%
\bibitem [{\citenamefont {Yuan}\ \emph {et~al.}(2019)\citenamefont {Yuan},
  \citenamefont {Endo}, \citenamefont {Zhao}, \citenamefont {Li},\ and\
  \citenamefont {Benjamin}}]{Yuan2019}%
  \BibitemOpen
  \bibfield  {author} {\bibinfo {author} {\bibfnamefont {X.}~\bibnamefont
  {Yuan}}, \bibinfo {author} {\bibfnamefont {S.}~\bibnamefont {Endo}}, \bibinfo
  {author} {\bibfnamefont {Q.}~\bibnamefont {Zhao}}, \bibinfo {author}
  {\bibfnamefont {Y.}~\bibnamefont {Li}},\ and\ \bibinfo {author}
  {\bibfnamefont {S.~C.}\ \bibnamefont {Benjamin}},\ }\bibfield  {title}
  {\bibinfo {title} {Theory of variational quantum simulation},\ }\href
  {https://doi.org/10.22331/q-2019-10-07-191} {\bibfield  {journal} {\bibinfo
  {journal} {Quantum}\ }\textbf {\bibinfo {volume} {3}},\ \bibinfo {pages}
  {191} (\bibinfo {year} {2019})}\BibitemShut {NoStop}%
\bibitem [{\citenamefont {Altman}\ \emph {et~al.}(2021)\citenamefont {Altman},
  \citenamefont {Brown}, \citenamefont {Carleo}, \citenamefont {Carr},
  \citenamefont {Demler}, \citenamefont {Chin}, \citenamefont {DeMarco},
  \citenamefont {Economou}, \citenamefont {Eriksson}, \citenamefont {Fu},
  \citenamefont {Greiner}, \citenamefont {Hazzard}, \citenamefont {Hulet},
  \citenamefont {Koll\'ar}, \citenamefont {Lev}, \citenamefont {Lukin},
  \citenamefont {Ma}, \citenamefont {Mi}, \citenamefont {Misra}, \citenamefont
  {Monroe}, \citenamefont {Murch}, \citenamefont {Nazario}, \citenamefont {Ni},
  \citenamefont {Potter}, \citenamefont {Roushan}, \citenamefont {Saffman},
  \citenamefont {Schleier-Smith}, \citenamefont {Siddiqi}, \citenamefont
  {Simmonds}, \citenamefont {Singh}, \citenamefont {Spielman}, \citenamefont
  {Temme}, \citenamefont {Weiss}, \citenamefont {Vu\ifmmode \check{c}\else
  \v{c}\fi{}kovi\ifmmode~\acute{c}\else \'{c}\fi{}}, \citenamefont
  {Vuleti\ifmmode~\acute{c}\else \'{c}\fi{}}, \citenamefont {Ye},\ and\
  \citenamefont {Zwierlein}}]{Altman2021}%
  \BibitemOpen
  \bibfield  {author} {\bibinfo {author} {\bibfnamefont {E.}~\bibnamefont
  {Altman}}, \bibinfo {author} {\bibfnamefont {K.~R.}\ \bibnamefont {Brown}},
  \bibinfo {author} {\bibfnamefont {G.}~\bibnamefont {Carleo}}, \bibinfo
  {author} {\bibfnamefont {L.~D.}\ \bibnamefont {Carr}}, \bibinfo {author}
  {\bibfnamefont {E.}~\bibnamefont {Demler}}, \bibinfo {author} {\bibfnamefont
  {C.}~\bibnamefont {Chin}}, \bibinfo {author} {\bibfnamefont {B.}~\bibnamefont
  {DeMarco}}, \bibinfo {author} {\bibfnamefont {S.~E.}\ \bibnamefont
  {Economou}}, \bibinfo {author} {\bibfnamefont {M.~A.}\ \bibnamefont
  {Eriksson}}, \bibinfo {author} {\bibfnamefont {K.-M.~C.}\ \bibnamefont {Fu}},
  \bibinfo {author} {\bibfnamefont {M.}~\bibnamefont {Greiner}}, \bibinfo
  {author} {\bibfnamefont {K.~R.~A.}\ \bibnamefont {Hazzard}}, \bibinfo
  {author} {\bibfnamefont {R.~G.}\ \bibnamefont {Hulet}}, \bibinfo {author}
  {\bibfnamefont {A.~J.}\ \bibnamefont {Koll\'ar}}, \bibinfo {author}
  {\bibfnamefont {B.~L.}\ \bibnamefont {Lev}}, \bibinfo {author} {\bibfnamefont
  {M.~D.}\ \bibnamefont {Lukin}}, \bibinfo {author} {\bibfnamefont
  {R.}~\bibnamefont {Ma}}, \bibinfo {author} {\bibfnamefont {X.}~\bibnamefont
  {Mi}}, \bibinfo {author} {\bibfnamefont {S.}~\bibnamefont {Misra}}, \bibinfo
  {author} {\bibfnamefont {C.}~\bibnamefont {Monroe}}, \bibinfo {author}
  {\bibfnamefont {K.}~\bibnamefont {Murch}}, \bibinfo {author} {\bibfnamefont
  {Z.}~\bibnamefont {Nazario}}, \bibinfo {author} {\bibfnamefont {K.-K.}\
  \bibnamefont {Ni}}, \bibinfo {author} {\bibfnamefont {A.~C.}\ \bibnamefont
  {Potter}}, \bibinfo {author} {\bibfnamefont {P.}~\bibnamefont {Roushan}},
  \bibinfo {author} {\bibfnamefont {M.}~\bibnamefont {Saffman}}, \bibinfo
  {author} {\bibfnamefont {M.}~\bibnamefont {Schleier-Smith}}, \bibinfo
  {author} {\bibfnamefont {I.}~\bibnamefont {Siddiqi}}, \bibinfo {author}
  {\bibfnamefont {R.}~\bibnamefont {Simmonds}}, \bibinfo {author}
  {\bibfnamefont {M.}~\bibnamefont {Singh}}, \bibinfo {author} {\bibfnamefont
  {I.~B.}\ \bibnamefont {Spielman}}, \bibinfo {author} {\bibfnamefont
  {K.}~\bibnamefont {Temme}}, \bibinfo {author} {\bibfnamefont {D.~S.}\
  \bibnamefont {Weiss}}, \bibinfo {author} {\bibfnamefont {J.}~\bibnamefont
  {Vu\ifmmode \check{c}\else \v{c}\fi{}kovi\ifmmode~\acute{c}\else
  \'{c}\fi{}}}, \bibinfo {author} {\bibfnamefont {V.}~\bibnamefont
  {Vuleti\ifmmode~\acute{c}\else \'{c}\fi{}}}, \bibinfo {author} {\bibfnamefont
  {J.}~\bibnamefont {Ye}},\ and\ \bibinfo {author} {\bibfnamefont
  {M.}~\bibnamefont {Zwierlein}},\ }\bibfield  {title} {\bibinfo {title}
  {Quantum simulators: Architectures and opportunities},\ }\href
  {https://doi.org/10.1103/PRXQuantum.2.017003} {\bibfield  {journal} {\bibinfo
   {journal} {PRX Quantum}\ }\textbf {\bibinfo {volume} {2}},\ \bibinfo {pages}
  {017003} (\bibinfo {year} {2021})}\BibitemShut {NoStop}%
\bibitem [{\citenamefont {Bharti}\ and\ \citenamefont
  {Haug}(2021)}]{Bharti2021q}%
  \BibitemOpen
  \bibfield  {author} {\bibinfo {author} {\bibfnamefont {K.}~\bibnamefont
  {Bharti}}\ and\ \bibinfo {author} {\bibfnamefont {T.}~\bibnamefont {Haug}},\
  }\bibfield  {title} {\bibinfo {title} {Quantum-assisted simulator},\ }\href
  {https://doi.org/10.1103/PhysRevA.104.042418} {\bibfield  {journal} {\bibinfo
   {journal} {Phys. Rev. A}\ }\textbf {\bibinfo {volume} {104}},\ \bibinfo
  {pages} {042418} (\bibinfo {year} {2021})}\BibitemShut {NoStop}%
\bibitem [{\citenamefont {Ho}\ and\ \citenamefont {Hsieh}(2019)}]{Wen2019}%
  \BibitemOpen
  \bibfield  {author} {\bibinfo {author} {\bibfnamefont {W.~W.}\ \bibnamefont
  {Ho}}\ and\ \bibinfo {author} {\bibfnamefont {T.~H.}\ \bibnamefont {Hsieh}},\
  }\bibfield  {title} {\bibinfo {title} {Efficient variational simulation of
  non-trivial quantum states},\ }\href
  {https://doi.org/10.21468/SciPostPhys.6.3.029} {\bibfield  {journal}
  {\bibinfo  {journal} {SciPost Phys.}\ }\textbf {\bibinfo {volume} {6}},\
  \bibinfo {pages} {029} (\bibinfo {year} {2019})}\BibitemShut {NoStop}%
\bibitem [{\citenamefont {Tabares}\ \emph {et~al.}(2023)\citenamefont
  {Tabares}, \citenamefont {Mu\~noz de~las Heras}, \citenamefont {Tagliacozzo},
  \citenamefont {Porras},\ and\ \citenamefont
  {Gonz\'alez-Tudela}}]{Tabares2023}%
  \BibitemOpen
  \bibfield  {author} {\bibinfo {author} {\bibfnamefont {C.}~\bibnamefont
  {Tabares}}, \bibinfo {author} {\bibfnamefont {A.}~\bibnamefont {Mu\~noz
  de~las Heras}}, \bibinfo {author} {\bibfnamefont {L.}~\bibnamefont
  {Tagliacozzo}}, \bibinfo {author} {\bibfnamefont {D.}~\bibnamefont
  {Porras}},\ and\ \bibinfo {author} {\bibfnamefont {A.}~\bibnamefont
  {Gonz\'alez-Tudela}},\ }\bibfield  {title} {\bibinfo {title} {Variational
  quantum simulators based on waveguide {QED}},\ }\href
  {https://doi.org/10.1103/PhysRevLett.131.073602} {\bibfield  {journal}
  {\bibinfo  {journal} {Phys. Rev. Lett.}\ }\textbf {\bibinfo {volume} {131}},\
  \bibinfo {pages} {073602} (\bibinfo {year} {2023})}\BibitemShut {NoStop}%
\bibitem [{\citenamefont {Svensson}\ \emph {et~al.}(2023)\citenamefont
  {Svensson}, \citenamefont {Andersson}, \citenamefont {Gr\"onkvist},
  \citenamefont {Vikst\aa{}l}, \citenamefont {Dubhashi}, \citenamefont
  {Ferrini},\ and\ \citenamefont {Johansson}}]{Svensson2023}%
  \BibitemOpen
  \bibfield  {author} {\bibinfo {author} {\bibfnamefont {M.}~\bibnamefont
  {Svensson}}, \bibinfo {author} {\bibfnamefont {M.}~\bibnamefont {Andersson}},
  \bibinfo {author} {\bibfnamefont {M.}~\bibnamefont {Gr\"onkvist}}, \bibinfo
  {author} {\bibfnamefont {P.}~\bibnamefont {Vikst\aa{}l}}, \bibinfo {author}
  {\bibfnamefont {D.}~\bibnamefont {Dubhashi}}, \bibinfo {author}
  {\bibfnamefont {G.}~\bibnamefont {Ferrini}},\ and\ \bibinfo {author}
  {\bibfnamefont {G.}~\bibnamefont {Johansson}},\ }\bibfield  {title} {\bibinfo
  {title} {Hybrid quantum-classical heuristic to solve large-scale integer
  linear programs},\ }\href {https://doi.org/10.1103/PhysRevApplied.20.034062}
  {\bibfield  {journal} {\bibinfo  {journal} {Phys. Rev. Appl.}\ }\textbf
  {\bibinfo {volume} {20}},\ \bibinfo {pages} {034062} (\bibinfo {year}
  {2023})}\BibitemShut {NoStop}%
\bibitem [{\citenamefont {Farhi}\ \emph {et~al.}(2014)\citenamefont {Farhi},
  \citenamefont {Goldstone},\ and\ \citenamefont {Gutmann}}]{Farhi2014}%
  \BibitemOpen
  \bibfield  {author} {\bibinfo {author} {\bibfnamefont {E.}~\bibnamefont
  {Farhi}}, \bibinfo {author} {\bibfnamefont {J.}~\bibnamefont {Goldstone}},\
  and\ \bibinfo {author} {\bibfnamefont {S.}~\bibnamefont {Gutmann}},\
  }\bibfield  {title} {\bibinfo {title} {A quantum approximate optimization
  algorithm},\ }\href {https://doi.org/10.48550/arXiv.1411.4028} {\bibfield
  {journal} {\bibinfo  {journal} {arXiv preprint arXiv:1411.4028}\ } (\bibinfo
  {year} {2014})}\BibitemShut {NoStop}%
\bibitem [{\citenamefont {Crooks}(2018)}]{Crooks2018}%
  \BibitemOpen
  \bibfield  {author} {\bibinfo {author} {\bibfnamefont {G.~E.}\ \bibnamefont
  {Crooks}},\ }\bibfield  {title} {\bibinfo {title} {Performance of the quantum
  approximate optimization algorithm on the maximum cut problem},\ }\href
  {https://doi.org/10.48550/arXiv.1811.08419} {\bibfield  {journal} {\bibinfo
  {journal} {arXiv preprint arXiv:1811.08419}\ } (\bibinfo {year}
  {2018})}\BibitemShut {NoStop}%
\bibitem [{\citenamefont {Vikst\aa{}l}\ \emph {et~al.}(2020)\citenamefont
  {Vikst\aa{}l}, \citenamefont {Gr\"onkvist}, \citenamefont {Svensson},
  \citenamefont {Andersson}, \citenamefont {Johansson},\ and\ \citenamefont
  {Ferrini}}]{Vikst2020}%
  \BibitemOpen
  \bibfield  {author} {\bibinfo {author} {\bibfnamefont {P.}~\bibnamefont
  {Vikst\aa{}l}}, \bibinfo {author} {\bibfnamefont {M.}~\bibnamefont
  {Gr\"onkvist}}, \bibinfo {author} {\bibfnamefont {M.}~\bibnamefont
  {Svensson}}, \bibinfo {author} {\bibfnamefont {M.}~\bibnamefont {Andersson}},
  \bibinfo {author} {\bibfnamefont {G.}~\bibnamefont {Johansson}},\ and\
  \bibinfo {author} {\bibfnamefont {G.}~\bibnamefont {Ferrini}},\ }\bibfield
  {title} {\bibinfo {title} {Applying the quantum approximate optimization
  algorithm to the tail-assignment problem},\ }\href
  {https://doi.org/10.1103/PhysRevApplied.14.034009} {\bibfield  {journal}
  {\bibinfo  {journal} {Phys. Rev. Appl.}\ }\textbf {\bibinfo {volume} {14}},\
  \bibinfo {pages} {034009} (\bibinfo {year} {2020})}\BibitemShut {NoStop}%
\bibitem [{\citenamefont {Wang}\ \emph
  {et~al.}(2021{\natexlab{a}})\citenamefont {Wang}, \citenamefont {Li},\ and\
  \citenamefont {Wang}}]{Wang2021variational}%
  \BibitemOpen
  \bibfield  {author} {\bibinfo {author} {\bibfnamefont {Y.}~\bibnamefont
  {Wang}}, \bibinfo {author} {\bibfnamefont {G.}~\bibnamefont {Li}},\ and\
  \bibinfo {author} {\bibfnamefont {X.}~\bibnamefont {Wang}},\ }\bibfield
  {title} {\bibinfo {title} {Variational quantum {Gibbs} state preparation with
  a truncated taylor series},\ }\href
  {https://doi.org/10.1103/PhysRevApplied.16.054035} {\bibfield  {journal}
  {\bibinfo  {journal} {Phys. Rev. Appl.}\ }\textbf {\bibinfo {volume} {16}},\
  \bibinfo {pages} {054035} (\bibinfo {year} {2021}{\natexlab{a}})}\BibitemShut
  {NoStop}%
\bibitem [{\citenamefont {Jiang}\ \emph {et~al.}(2018)\citenamefont {Jiang},
  \citenamefont {Sung}, \citenamefont {Kechedzhi}, \citenamefont
  {Smelyanskiy},\ and\ \citenamefont {Boixo}}]{Jiang2018}%
  \BibitemOpen
  \bibfield  {author} {\bibinfo {author} {\bibfnamefont {Z.}~\bibnamefont
  {Jiang}}, \bibinfo {author} {\bibfnamefont {K.~J.}\ \bibnamefont {Sung}},
  \bibinfo {author} {\bibfnamefont {K.}~\bibnamefont {Kechedzhi}}, \bibinfo
  {author} {\bibfnamefont {V.~N.}\ \bibnamefont {Smelyanskiy}},\ and\ \bibinfo
  {author} {\bibfnamefont {S.}~\bibnamefont {Boixo}},\ }\bibfield  {title}
  {\bibinfo {title} {Quantum algorithms to simulate many-body physics of
  correlated fermions},\ }\href
  {https://doi.org/10.1103/PhysRevApplied.9.044036} {\bibfield  {journal}
  {\bibinfo  {journal} {Phys. Rev. Appl.}\ }\textbf {\bibinfo {volume} {9}},\
  \bibinfo {pages} {044036} (\bibinfo {year} {2018})}\BibitemShut {NoStop}%
\bibitem [{\citenamefont {Liang}\ \emph {et~al.}(2022)\citenamefont {Liang},
  \citenamefont {Wang}, \citenamefont {Cheng}, \citenamefont {Ding},
  \citenamefont {Ren}, \citenamefont {Gao}, \citenamefont {Hu}, \citenamefont
  {Boning}, \citenamefont {Qian}, \citenamefont {Han}, \citenamefont {Jiang},\
  and\ \citenamefont {Shi}}]{Liang2022}%
  \BibitemOpen
  \bibfield  {author} {\bibinfo {author} {\bibfnamefont {Z.}~\bibnamefont
  {Liang}}, \bibinfo {author} {\bibfnamefont {H.}~\bibnamefont {Wang}},
  \bibinfo {author} {\bibfnamefont {J.}~\bibnamefont {Cheng}}, \bibinfo
  {author} {\bibfnamefont {Y.}~\bibnamefont {Ding}}, \bibinfo {author}
  {\bibfnamefont {H.}~\bibnamefont {Ren}}, \bibinfo {author} {\bibfnamefont
  {Z.}~\bibnamefont {Gao}}, \bibinfo {author} {\bibfnamefont {Z.}~\bibnamefont
  {Hu}}, \bibinfo {author} {\bibfnamefont {D.~S.}\ \bibnamefont {Boning}},
  \bibinfo {author} {\bibfnamefont {X.}~\bibnamefont {Qian}}, \bibinfo {author}
  {\bibfnamefont {S.}~\bibnamefont {Han}}, \bibinfo {author} {\bibfnamefont
  {W.}~\bibnamefont {Jiang}},\ and\ \bibinfo {author} {\bibfnamefont
  {Y.}~\bibnamefont {Shi}},\ }\bibfield  {title} {\bibinfo {title} {Variational
  quantum pulse learning},\ }in\ \href
  {https://ieeexplore.ieee.org/document/9951187} {\emph {\bibinfo {booktitle}
  {2022 IEEE International Conference on Quantum Computing and Engineering
  (QCE)}}}\ (\bibinfo {year} {2022})\ pp.\ \bibinfo {pages}
  {556--565}\BibitemShut {NoStop}%
\bibitem [{\citenamefont {McClean}\ \emph {et~al.}(2018)\citenamefont
  {McClean}, \citenamefont {Boixo}, \citenamefont {Smelyanskiy}, \citenamefont
  {Babbush},\ and\ \citenamefont {Neven}}]{McClean2018}%
  \BibitemOpen
  \bibfield  {author} {\bibinfo {author} {\bibfnamefont {J.~R.}\ \bibnamefont
  {McClean}}, \bibinfo {author} {\bibfnamefont {S.}~\bibnamefont {Boixo}},
  \bibinfo {author} {\bibfnamefont {V.~N.}\ \bibnamefont {Smelyanskiy}},
  \bibinfo {author} {\bibfnamefont {R.}~\bibnamefont {Babbush}},\ and\ \bibinfo
  {author} {\bibfnamefont {H.}~\bibnamefont {Neven}},\ }\bibfield  {title}
  {\bibinfo {title} {Barren plateaus in quantum neural network training
  landscapes},\ }\href {https://doi.org/10.1038/s41467-018-07090-4} {\bibfield
  {journal} {\bibinfo  {journal} {Nat. Commun.}\ }\textbf {\bibinfo {volume}
  {9}},\ \bibinfo {pages} {1} (\bibinfo {year} {2018})}\BibitemShut {NoStop}%
\bibitem [{\citenamefont {Haug}\ \emph {et~al.}(2021)\citenamefont {Haug},
  \citenamefont {Bharti},\ and\ \citenamefont {Kim}}]{Haug2021c}%
  \BibitemOpen
  \bibfield  {author} {\bibinfo {author} {\bibfnamefont {T.}~\bibnamefont
  {Haug}}, \bibinfo {author} {\bibfnamefont {K.}~\bibnamefont {Bharti}},\ and\
  \bibinfo {author} {\bibfnamefont {M.~S.}\ \bibnamefont {Kim}},\ }\bibfield
  {title} {\bibinfo {title} {Capacity and quantum geometry of parametrized
  quantum circuits},\ }\href {https://doi.org/10.1103/PRXQuantum.2.040309}
  {\bibfield  {journal} {\bibinfo  {journal} {PRX Quantum}\ }\textbf {\bibinfo
  {volume} {2}},\ \bibinfo {pages} {040309} (\bibinfo {year}
  {2021})}\BibitemShut {NoStop}%
\bibitem [{\citenamefont {Zhao}\ and\ \citenamefont {Gao}(2021)}]{Zhao2021}%
  \BibitemOpen
  \bibfield  {author} {\bibinfo {author} {\bibfnamefont {C.}~\bibnamefont
  {Zhao}}\ and\ \bibinfo {author} {\bibfnamefont {X.-S.}\ \bibnamefont {Gao}},\
  }\bibfield  {title} {\bibinfo {title} {Analyzing the barren plateau
  phenomenon in training quantum neural networks with the {ZX}-calculus},\
  }\href {https://doi.org/10.22331/q-2021-06-04-466} {\bibfield  {journal}
  {\bibinfo  {journal} {Quantum}\ }\textbf {\bibinfo {volume} {5}},\ \bibinfo
  {pages} {466} (\bibinfo {year} {2021})}\BibitemShut {NoStop}%
\bibitem [{\citenamefont {Cerezo}\ \emph
  {et~al.}(2021{\natexlab{b}})\citenamefont {Cerezo}, \citenamefont {Sone},
  \citenamefont {Volkoff}, \citenamefont {Cincio},\ and\ \citenamefont
  {Coles}}]{Cerezo2021c}%
  \BibitemOpen
  \bibfield  {author} {\bibinfo {author} {\bibfnamefont {M.}~\bibnamefont
  {Cerezo}}, \bibinfo {author} {\bibfnamefont {A.}~\bibnamefont {Sone}},
  \bibinfo {author} {\bibfnamefont {T.}~\bibnamefont {Volkoff}}, \bibinfo
  {author} {\bibfnamefont {L.}~\bibnamefont {Cincio}},\ and\ \bibinfo {author}
  {\bibfnamefont {P.~J.}\ \bibnamefont {Coles}},\ }\bibfield  {title} {\bibinfo
  {title} {Cost function dependent barren plateaus in shallow parametrized
  quantum circuits},\ }\href {https://doi.org/10.1038/s41467-021-21728-w}
  {\bibfield  {journal} {\bibinfo  {journal} {Nat. Commun.}\ }\textbf {\bibinfo
  {volume} {12}},\ \bibinfo {pages} {1} (\bibinfo {year}
  {2021}{\natexlab{b}})}\BibitemShut {NoStop}%
\bibitem [{\citenamefont {Liu}\ \emph {et~al.}(2022)\citenamefont {Liu},
  \citenamefont {Yu}, \citenamefont {Duan},\ and\ \citenamefont
  {Deng}}]{Liuzidu2021}%
  \BibitemOpen
  \bibfield  {author} {\bibinfo {author} {\bibfnamefont {Z.}~\bibnamefont
  {Liu}}, \bibinfo {author} {\bibfnamefont {L.-W.}\ \bibnamefont {Yu}},
  \bibinfo {author} {\bibfnamefont {L.-M.}\ \bibnamefont {Duan}},\ and\
  \bibinfo {author} {\bibfnamefont {D.-L.}\ \bibnamefont {Deng}},\ }\bibfield
  {title} {\bibinfo {title} {Presence and absence of barren plateaus in
  tensor-network based machine learning},\ }\href
  {https://doi.org/10.1103/PhysRevLett.129.270501} {\bibfield  {journal}
  {\bibinfo  {journal} {Phys. Rev. Lett.}\ }\textbf {\bibinfo {volume} {129}},\
  \bibinfo {pages} {270501} (\bibinfo {year} {2022})}\BibitemShut {NoStop}%
\bibitem [{\citenamefont {Holmes}\ \emph {et~al.}(2022)\citenamefont {Holmes},
  \citenamefont {Sharma}, \citenamefont {Cerezo},\ and\ \citenamefont
  {Coles}}]{Holmes2022}%
  \BibitemOpen
  \bibfield  {author} {\bibinfo {author} {\bibfnamefont {Z.}~\bibnamefont
  {Holmes}}, \bibinfo {author} {\bibfnamefont {K.}~\bibnamefont {Sharma}},
  \bibinfo {author} {\bibfnamefont {M.}~\bibnamefont {Cerezo}},\ and\ \bibinfo
  {author} {\bibfnamefont {P.~J.}\ \bibnamefont {Coles}},\ }\bibfield  {title}
  {\bibinfo {title} {Connecting ansatz expressibility to gradient magnitudes
  and barren plateaus},\ }\href {https://doi.org/10.1103/PRXQuantum.3.010313}
  {\bibfield  {journal} {\bibinfo  {journal} {PRX Quantum}\ }\textbf {\bibinfo
  {volume} {3}},\ \bibinfo {pages} {010313} (\bibinfo {year}
  {2022})}\BibitemShut {NoStop}%
\bibitem [{\citenamefont {Ortiz~Marrero}\ \emph {et~al.}(2021)\citenamefont
  {Ortiz~Marrero}, \citenamefont {Kieferov\'a},\ and\ \citenamefont
  {Wiebe}}]{Marrero2021}%
  \BibitemOpen
  \bibfield  {author} {\bibinfo {author} {\bibfnamefont {C.}~\bibnamefont
  {Ortiz~Marrero}}, \bibinfo {author} {\bibfnamefont {M.}~\bibnamefont
  {Kieferov\'a}},\ and\ \bibinfo {author} {\bibfnamefont {N.}~\bibnamefont
  {Wiebe}},\ }\bibfield  {title} {\bibinfo {title} {Entanglement-induced barren
  plateaus},\ }\href {https://doi.org/10.1103/PRXQuantum.2.040316} {\bibfield
  {journal} {\bibinfo  {journal} {PRX Quantum}\ }\textbf {\bibinfo {volume}
  {2}},\ \bibinfo {pages} {040316} (\bibinfo {year} {2021})}\BibitemShut
  {NoStop}%
\bibitem [{\citenamefont {Sharma}\ \emph {et~al.}(2022)\citenamefont {Sharma},
  \citenamefont {Cerezo}, \citenamefont {Cincio},\ and\ \citenamefont
  {Coles}}]{Sharma2022}%
  \BibitemOpen
  \bibfield  {author} {\bibinfo {author} {\bibfnamefont {K.}~\bibnamefont
  {Sharma}}, \bibinfo {author} {\bibfnamefont {M.}~\bibnamefont {Cerezo}},
  \bibinfo {author} {\bibfnamefont {L.}~\bibnamefont {Cincio}},\ and\ \bibinfo
  {author} {\bibfnamefont {P.~J.}\ \bibnamefont {Coles}},\ }\bibfield  {title}
  {\bibinfo {title} {Trainability of dissipative perceptron-based quantum
  neural networks},\ }\href {https://doi.org/10.1103/PhysRevLett.128.180505}
  {\bibfield  {journal} {\bibinfo  {journal} {Phys. Rev. Lett.}\ }\textbf
  {\bibinfo {volume} {128}},\ \bibinfo {pages} {180505} (\bibinfo {year}
  {2022})}\BibitemShut {NoStop}%
\bibitem [{\citenamefont {Patti}\ \emph {et~al.}(2021)\citenamefont {Patti},
  \citenamefont {Najafi}, \citenamefont {Gao},\ and\ \citenamefont
  {Yelin}}]{Patti2021}%
  \BibitemOpen
  \bibfield  {author} {\bibinfo {author} {\bibfnamefont {T.~L.}\ \bibnamefont
  {Patti}}, \bibinfo {author} {\bibfnamefont {K.}~\bibnamefont {Najafi}},
  \bibinfo {author} {\bibfnamefont {X.}~\bibnamefont {Gao}},\ and\ \bibinfo
  {author} {\bibfnamefont {S.~F.}\ \bibnamefont {Yelin}},\ }\bibfield  {title}
  {\bibinfo {title} {Entanglement devised barren plateau mitigation},\ }\href
  {https://doi.org/10.1103/PhysRevResearch.3.033090} {\bibfield  {journal}
  {\bibinfo  {journal} {Phys. Rev. Res.}\ }\textbf {\bibinfo {volume} {3}},\
  \bibinfo {pages} {033090} (\bibinfo {year} {2021})}\BibitemShut {NoStop}%
\bibitem [{\citenamefont {Wang}\ \emph
  {et~al.}(2021{\natexlab{b}})\citenamefont {Wang}, \citenamefont {Fontana},
  \citenamefont {Cerezo}, \citenamefont {Sharma}, \citenamefont {Sone},
  \citenamefont {Cincio},\ and\ \citenamefont {Coles}}]{Wangsamon2021}%
  \BibitemOpen
  \bibfield  {author} {\bibinfo {author} {\bibfnamefont {S.}~\bibnamefont
  {Wang}}, \bibinfo {author} {\bibfnamefont {E.}~\bibnamefont {Fontana}},
  \bibinfo {author} {\bibfnamefont {M.}~\bibnamefont {Cerezo}}, \bibinfo
  {author} {\bibfnamefont {K.}~\bibnamefont {Sharma}}, \bibinfo {author}
  {\bibfnamefont {A.}~\bibnamefont {Sone}}, \bibinfo {author} {\bibfnamefont
  {L.}~\bibnamefont {Cincio}},\ and\ \bibinfo {author} {\bibfnamefont {P.~J.}\
  \bibnamefont {Coles}},\ }\bibfield  {title} {\bibinfo {title} {Noise-induced
  barren plateaus in variational quantum algorithms},\ }\href
  {https://doi.org/10.1038/s41467-021-27045-6} {\bibfield  {journal} {\bibinfo
  {journal} {Nat. Commun.}\ }\textbf {\bibinfo {volume} {12}},\ \bibinfo
  {pages} {1} (\bibinfo {year} {2021}{\natexlab{b}})}\BibitemShut {NoStop}%
\bibitem [{\citenamefont {Anschuetz}\ and\ \citenamefont
  {Kiani}(2022)}]{Anschuetz2022b}%
  \BibitemOpen
  \bibfield  {author} {\bibinfo {author} {\bibfnamefont {E.~R.}\ \bibnamefont
  {Anschuetz}}\ and\ \bibinfo {author} {\bibfnamefont {B.~T.}\ \bibnamefont
  {Kiani}},\ }\bibfield  {title} {\bibinfo {title} {Quantum variational
  algorithms are swamped with traps},\ }\href
  {https://doi.org/10.1038/s41467-022-35364-5} {\bibfield  {journal} {\bibinfo
  {journal} {Nat. Commun.}\ }\textbf {\bibinfo {volume} {13}},\ \bibinfo
  {pages} {7760} (\bibinfo {year} {2022})}\BibitemShut {NoStop}%
\bibitem [{\citenamefont {You}\ and\ \citenamefont {Wu}(2021)}]{You2021}%
  \BibitemOpen
  \bibfield  {author} {\bibinfo {author} {\bibfnamefont {X.}~\bibnamefont
  {You}}\ and\ \bibinfo {author} {\bibfnamefont {X.}~\bibnamefont {Wu}},\
  }\bibfield  {title} {\bibinfo {title} {Exponentially many local minima in
  quantum neural networks},\ }in\ \href
  {http://proceedings.mlr.press/v139/you21c.html} {\emph {\bibinfo {booktitle}
  {Proceedings of the 38th International Conference on Machine Learning (ICML
  2021)}}}\ (\bibinfo {year} {2021})\ pp.\ \bibinfo {pages}
  {12144--12155}\BibitemShut {NoStop}%
\bibitem [{\citenamefont {Bittel}\ and\ \citenamefont
  {Kliesch}(2021)}]{Bittel2021}%
  \BibitemOpen
  \bibfield  {author} {\bibinfo {author} {\bibfnamefont {L.}~\bibnamefont
  {Bittel}}\ and\ \bibinfo {author} {\bibfnamefont {M.}~\bibnamefont
  {Kliesch}},\ }\bibfield  {title} {\bibinfo {title} {Training variational
  quantum algorithms is {NP}-hard},\ }\href
  {https://doi.org/10.1103/PhysRevLett.127.120502} {\bibfield  {journal}
  {\bibinfo  {journal} {Phys. Rev. Lett.}\ }\textbf {\bibinfo {volume} {127}},\
  \bibinfo {pages} {120502} (\bibinfo {year} {2021})}\BibitemShut {NoStop}%
\bibitem [{\citenamefont {Wierichs}\ \emph {et~al.}(2020)\citenamefont
  {Wierichs}, \citenamefont {Gogolin},\ and\ \citenamefont
  {Kastoryano}}]{Wierichs2020}%
  \BibitemOpen
  \bibfield  {author} {\bibinfo {author} {\bibfnamefont {D.}~\bibnamefont
  {Wierichs}}, \bibinfo {author} {\bibfnamefont {C.}~\bibnamefont {Gogolin}},\
  and\ \bibinfo {author} {\bibfnamefont {M.}~\bibnamefont {Kastoryano}},\
  }\bibfield  {title} {\bibinfo {title} {Avoiding local minima in variational
  quantum eigensolvers with the natural gradient optimizer},\ }\href
  {https://doi.org/10.1103/PhysRevResearch.2.043246} {\bibfield  {journal}
  {\bibinfo  {journal} {Phys. Rev. Res.}\ }\textbf {\bibinfo {volume} {2}},\
  \bibinfo {pages} {043246} (\bibinfo {year} {2020})}\BibitemShut {NoStop}%
\bibitem [{\citenamefont {Ge}\ \emph {et~al.}(2022)\citenamefont {Ge},
  \citenamefont {Wu},\ and\ \citenamefont {Rabitz}}]{Ge2022}%
  \BibitemOpen
  \bibfield  {author} {\bibinfo {author} {\bibfnamefont {X.}~\bibnamefont
  {Ge}}, \bibinfo {author} {\bibfnamefont {R.-B.}\ \bibnamefont {Wu}},\ and\
  \bibinfo {author} {\bibfnamefont {H.}~\bibnamefont {Rabitz}},\ }\bibfield
  {title} {\bibinfo {title} {The optimization landscape of hybrid
  quantum--classical algorithms: From quantum control to {NISQ} applications},\
  }\href {https://doi.org/10.1016/j.arcontrol.2022.06.001} {\bibfield
  {journal} {\bibinfo  {journal} {Annual Reviews in Control}\ }\textbf
  {\bibinfo {volume} {54}},\ \bibinfo {pages} {314} (\bibinfo {year}
  {2022})}\BibitemShut {NoStop}%
\bibitem [{\citenamefont {Shi}\ \emph {et~al.}(2006)\citenamefont {Shi},
  \citenamefont {Duan},\ and\ \citenamefont {Vidal}}]{Shi2006}%
  \BibitemOpen
  \bibfield  {author} {\bibinfo {author} {\bibfnamefont {Y.-Y.}\ \bibnamefont
  {Shi}}, \bibinfo {author} {\bibfnamefont {L.-M.}\ \bibnamefont {Duan}},\ and\
  \bibinfo {author} {\bibfnamefont {G.}~\bibnamefont {Vidal}},\ }\bibfield
  {title} {\bibinfo {title} {Classical simulation of quantum many-body systems
  with a tree tensor network},\ }\href
  {https://doi.org/10.1103/PhysRevA.74.022320} {\bibfield  {journal} {\bibinfo
  {journal} {Phys. Rev. A}\ }\textbf {\bibinfo {volume} {74}},\ \bibinfo
  {pages} {022320} (\bibinfo {year} {2006})}\BibitemShut {NoStop}%
\bibitem [{\citenamefont {Grant}\ \emph {et~al.}(2018)\citenamefont {Grant},
  \citenamefont {Benedetti}, \citenamefont {Cao}, \citenamefont {Hallam},
  \citenamefont {Lockhart}, \citenamefont {Stojevic}, \citenamefont {Green},\
  and\ \citenamefont {Severini}}]{Grant2018}%
  \BibitemOpen
  \bibfield  {author} {\bibinfo {author} {\bibfnamefont {E.}~\bibnamefont
  {Grant}}, \bibinfo {author} {\bibfnamefont {M.}~\bibnamefont {Benedetti}},
  \bibinfo {author} {\bibfnamefont {S.}~\bibnamefont {Cao}}, \bibinfo {author}
  {\bibfnamefont {A.}~\bibnamefont {Hallam}}, \bibinfo {author} {\bibfnamefont
  {J.}~\bibnamefont {Lockhart}}, \bibinfo {author} {\bibfnamefont
  {V.}~\bibnamefont {Stojevic}}, \bibinfo {author} {\bibfnamefont {A.~G.}\
  \bibnamefont {Green}},\ and\ \bibinfo {author} {\bibfnamefont
  {S.}~\bibnamefont {Severini}},\ }\bibfield  {title} {\bibinfo {title}
  {Hierarchical quantum classifiers},\ }\href
  {https://doi.org/10.1038/s41534-018-0116-9} {\bibfield  {journal} {\bibinfo
  {journal} {npj Quantum Inf.}\ }\textbf {\bibinfo {volume} {4}},\ \bibinfo
  {pages} {1} (\bibinfo {year} {2018})}\BibitemShut {NoStop}%
\bibitem [{\citenamefont {Cong}\ \emph {et~al.}(2019)\citenamefont {Cong},
  \citenamefont {Choi},\ and\ \citenamefont {Lukin}}]{Cong2019}%
  \BibitemOpen
  \bibfield  {author} {\bibinfo {author} {\bibfnamefont {I.}~\bibnamefont
  {Cong}}, \bibinfo {author} {\bibfnamefont {S.}~\bibnamefont {Choi}},\ and\
  \bibinfo {author} {\bibfnamefont {M.~D.}\ \bibnamefont {Lukin}},\ }\bibfield
  {title} {\bibinfo {title} {Quantum convolutional neural networks},\ }\href
  {https://doi.org/10.1038/s41567-019-0648-8} {\bibfield  {journal} {\bibinfo
  {journal} {Nat. Phys.}\ }\textbf {\bibinfo {volume} {15}},\ \bibinfo {pages}
  {1273} (\bibinfo {year} {2019})}\BibitemShut {NoStop}%
\bibitem [{\citenamefont {Zhang}\ \emph {et~al.}(2020)\citenamefont {Zhang},
  \citenamefont {Hsieh}, \citenamefont {Liu},\ and\ \citenamefont
  {Tao}}]{Zhang2020}%
  \BibitemOpen
  \bibfield  {author} {\bibinfo {author} {\bibfnamefont {K.}~\bibnamefont
  {Zhang}}, \bibinfo {author} {\bibfnamefont {M.-H.}\ \bibnamefont {Hsieh}},
  \bibinfo {author} {\bibfnamefont {L.}~\bibnamefont {Liu}},\ and\ \bibinfo
  {author} {\bibfnamefont {D.}~\bibnamefont {Tao}},\ }\bibfield  {title}
  {\bibinfo {title} {Toward trainability of quantum neural networks},\ }\href
  {https://doi.org/10.48550/arXiv.2011.06258} {\bibfield  {journal} {\bibinfo
  {journal} {arXiv preprint arXiv:2011.06258}\ } (\bibinfo {year}
  {2020})}\BibitemShut {NoStop}%
\bibitem [{\citenamefont {Pesah}\ \emph {et~al.}(2021)\citenamefont {Pesah},
  \citenamefont {Cerezo}, \citenamefont {Wang}, \citenamefont {Volkoff},
  \citenamefont {Sornborger},\ and\ \citenamefont {Coles}}]{Pesah2021}%
  \BibitemOpen
  \bibfield  {author} {\bibinfo {author} {\bibfnamefont {A.}~\bibnamefont
  {Pesah}}, \bibinfo {author} {\bibfnamefont {M.}~\bibnamefont {Cerezo}},
  \bibinfo {author} {\bibfnamefont {S.}~\bibnamefont {Wang}}, \bibinfo {author}
  {\bibfnamefont {T.}~\bibnamefont {Volkoff}}, \bibinfo {author} {\bibfnamefont
  {A.~T.}\ \bibnamefont {Sornborger}},\ and\ \bibinfo {author} {\bibfnamefont
  {P.~J.}\ \bibnamefont {Coles}},\ }\bibfield  {title} {\bibinfo {title}
  {Absence of barren plateaus in quantum convolutional neural networks},\
  }\href {https://doi.org/10.1103/PhysRevX.11.041011} {\bibfield  {journal}
  {\bibinfo  {journal} {Phys. Rev. X}\ }\textbf {\bibinfo {volume} {11}},\
  \bibinfo {pages} {041011} (\bibinfo {year} {2021})}\BibitemShut {NoStop}%
\bibitem [{\citenamefont {Barthel}\ and\ \citenamefont
  {Miao}(2023)}]{barthel2023}%
  \BibitemOpen
  \bibfield  {author} {\bibinfo {author} {\bibfnamefont {T.}~\bibnamefont
  {Barthel}}\ and\ \bibinfo {author} {\bibfnamefont {Q.}~\bibnamefont {Miao}},\
  }\bibfield  {title} {\bibinfo {title} {Absence of barren plateaus and scaling
  of gradients in the energy optimization of isometric tensor network states},\
  }\href {https://doi.org/10.48550/arXiv.2304.00161} {\bibfield  {journal}
  {\bibinfo  {journal} {arXiv preprint arXiv:2304.00161}\ } (\bibinfo {year}
  {2023})}\BibitemShut {NoStop}%
\bibitem [{\citenamefont {Hadfield}\ \emph {et~al.}(2019)\citenamefont
  {Hadfield}, \citenamefont {Wang}, \citenamefont {O’gorman}, \citenamefont
  {Rieffel}, \citenamefont {Venturelli},\ and\ \citenamefont
  {Biswas}}]{Hadfield2019}%
  \BibitemOpen
  \bibfield  {author} {\bibinfo {author} {\bibfnamefont {S.}~\bibnamefont
  {Hadfield}}, \bibinfo {author} {\bibfnamefont {Z.}~\bibnamefont {Wang}},
  \bibinfo {author} {\bibfnamefont {B.}~\bibnamefont {O’gorman}}, \bibinfo
  {author} {\bibfnamefont {E.~G.}\ \bibnamefont {Rieffel}}, \bibinfo {author}
  {\bibfnamefont {D.}~\bibnamefont {Venturelli}},\ and\ \bibinfo {author}
  {\bibfnamefont {R.}~\bibnamefont {Biswas}},\ }\bibfield  {title} {\bibinfo
  {title} {From the quantum approximate optimization algorithm to a quantum
  alternating operator ansatz},\ }\href {https://doi.org/10.3390/a12020034}
  {\bibfield  {journal} {\bibinfo  {journal} {Algorithms}\ }\textbf {\bibinfo
  {volume} {12}},\ \bibinfo {pages} {34} (\bibinfo {year} {2019})}\BibitemShut
  {NoStop}%
\bibitem [{\citenamefont {Wiersema}\ \emph {et~al.}(2020)\citenamefont
  {Wiersema}, \citenamefont {Zhou}, \citenamefont {de~Sereville}, \citenamefont
  {Carrasquilla}, \citenamefont {Kim},\ and\ \citenamefont
  {Yuen}}]{Wiersema2020}%
  \BibitemOpen
  \bibfield  {author} {\bibinfo {author} {\bibfnamefont {R.}~\bibnamefont
  {Wiersema}}, \bibinfo {author} {\bibfnamefont {C.}~\bibnamefont {Zhou}},
  \bibinfo {author} {\bibfnamefont {Y.}~\bibnamefont {de~Sereville}}, \bibinfo
  {author} {\bibfnamefont {J.~F.}\ \bibnamefont {Carrasquilla}}, \bibinfo
  {author} {\bibfnamefont {Y.~B.}\ \bibnamefont {Kim}},\ and\ \bibinfo {author}
  {\bibfnamefont {H.}~\bibnamefont {Yuen}},\ }\bibfield  {title} {\bibinfo
  {title} {Exploring entanglement and optimization within the {H}amiltonian
  variational ansatz},\ }\href {https://doi.org/10.1103/PRXQuantum.1.020319}
  {\bibfield  {journal} {\bibinfo  {journal} {PRX Quantum}\ }\textbf {\bibinfo
  {volume} {1}},\ \bibinfo {pages} {020319} (\bibinfo {year}
  {2020})}\BibitemShut {NoStop}%
\bibitem [{\citenamefont {Lyu}\ \emph {et~al.}(2023)\citenamefont {Lyu},
  \citenamefont {Xu}, \citenamefont {Yung},\ and\ \citenamefont
  {Bayat}}]{Lyu2023symmetry}%
  \BibitemOpen
  \bibfield  {author} {\bibinfo {author} {\bibfnamefont {C.}~\bibnamefont
  {Lyu}}, \bibinfo {author} {\bibfnamefont {X.}~\bibnamefont {Xu}}, \bibinfo
  {author} {\bibfnamefont {M.-H.}\ \bibnamefont {Yung}},\ and\ \bibinfo
  {author} {\bibfnamefont {A.}~\bibnamefont {Bayat}},\ }\bibfield  {title}
  {\bibinfo {title} {Symmetry enhanced variational quantum spin eigensolver},\
  }\href {https://doi.org/10.22331/q-2023-01-19-899} {\bibfield  {journal}
  {\bibinfo  {journal} {{Quantum}}\ }\textbf {\bibinfo {volume} {7}},\ \bibinfo
  {pages} {899} (\bibinfo {year} {2023})}\BibitemShut {NoStop}%
\bibitem [{\citenamefont {Skolik}\ \emph {et~al.}(2021)\citenamefont {Skolik},
  \citenamefont {McClean}, \citenamefont {Mohseni}, \citenamefont {van~der
  Smagt},\ and\ \citenamefont {Leib}}]{Skolik2021}%
  \BibitemOpen
  \bibfield  {author} {\bibinfo {author} {\bibfnamefont {A.}~\bibnamefont
  {Skolik}}, \bibinfo {author} {\bibfnamefont {J.~R.}\ \bibnamefont {McClean}},
  \bibinfo {author} {\bibfnamefont {M.}~\bibnamefont {Mohseni}}, \bibinfo
  {author} {\bibfnamefont {P.}~\bibnamefont {van~der Smagt}},\ and\ \bibinfo
  {author} {\bibfnamefont {M.}~\bibnamefont {Leib}},\ }\bibfield  {title}
  {\bibinfo {title} {Layerwise learning for quantum neural networks},\ }\href
  {https://doi.org/10.1007/s42484-020-00036-4} {\bibfield  {journal} {\bibinfo
  {journal} {Quantum Mach. Intell.}\ }\textbf {\bibinfo {volume} {3}},\
  \bibinfo {pages} {1} (\bibinfo {year} {2021})}\BibitemShut {NoStop}%
\bibitem [{\citenamefont {Campos}\ \emph {et~al.}(2021)\citenamefont {Campos},
  \citenamefont {Rabinovich}, \citenamefont {Akshay},\ and\ \citenamefont
  {Biamonte}}]{Campos2021}%
  \BibitemOpen
  \bibfield  {author} {\bibinfo {author} {\bibfnamefont {E.}~\bibnamefont
  {Campos}}, \bibinfo {author} {\bibfnamefont {D.}~\bibnamefont {Rabinovich}},
  \bibinfo {author} {\bibfnamefont {V.}~\bibnamefont {Akshay}},\ and\ \bibinfo
  {author} {\bibfnamefont {J.}~\bibnamefont {Biamonte}},\ }\bibfield  {title}
  {\bibinfo {title} {Training saturation in layerwise quantum approximate
  optimization},\ }\href {https://doi.org/10.1103/PhysRevA.104.L030401}
  {\bibfield  {journal} {\bibinfo  {journal} {Phys. Rev. A}\ }\textbf {\bibinfo
  {volume} {104}},\ \bibinfo {pages} {L030401} (\bibinfo {year}
  {2021})}\BibitemShut {NoStop}%
\bibitem [{\citenamefont {Lockwood}(2021)}]{Lockwood2021}%
  \BibitemOpen
  \bibfield  {author} {\bibinfo {author} {\bibfnamefont {O.}~\bibnamefont
  {Lockwood}},\ }\bibfield  {title} {\bibinfo {title} {Optimizing quantum
  variational circuits with deep reinforcement learning},\ }\href
  {https://doi.org/10.48550/arXiv.2109.03188} {\bibfield  {journal} {\bibinfo
  {journal} {arXiv preprint arXiv:2109.03188}\ } (\bibinfo {year}
  {2021})}\BibitemShut {NoStop}%
\bibitem [{\citenamefont {Liu}\ \emph {et~al.}(2023)\citenamefont {Liu},
  \citenamefont {Najafi}, \citenamefont {Sharma}, \citenamefont {Tacchino},
  \citenamefont {Jiang},\ and\ \citenamefont {Mezzacapo}}]{Liu2022a}%
  \BibitemOpen
  \bibfield  {author} {\bibinfo {author} {\bibfnamefont {J.}~\bibnamefont
  {Liu}}, \bibinfo {author} {\bibfnamefont {K.}~\bibnamefont {Najafi}},
  \bibinfo {author} {\bibfnamefont {K.}~\bibnamefont {Sharma}}, \bibinfo
  {author} {\bibfnamefont {F.}~\bibnamefont {Tacchino}}, \bibinfo {author}
  {\bibfnamefont {L.}~\bibnamefont {Jiang}},\ and\ \bibinfo {author}
  {\bibfnamefont {A.}~\bibnamefont {Mezzacapo}},\ }\bibfield  {title} {\bibinfo
  {title} {Analytic theory for the dynamics of wide quantum neural networks},\
  }\href {https://doi.org/10.1103/PhysRevLett.130.150601} {\bibfield  {journal}
  {\bibinfo  {journal} {Phys. Rev. Lett.}\ }\textbf {\bibinfo {volume} {130}},\
  \bibinfo {pages} {150601} (\bibinfo {year} {2023})}\BibitemShut {NoStop}%
\bibitem [{\citenamefont {You}\ \emph {et~al.}(2022)\citenamefont {You},
  \citenamefont {Chakrabarti},\ and\ \citenamefont {Wu}}]{You2022conv}%
  \BibitemOpen
  \bibfield  {author} {\bibinfo {author} {\bibfnamefont {X.}~\bibnamefont
  {You}}, \bibinfo {author} {\bibfnamefont {S.}~\bibnamefont {Chakrabarti}},\
  and\ \bibinfo {author} {\bibfnamefont {X.}~\bibnamefont {Wu}},\ }\bibfield
  {title} {\bibinfo {title} {A convergence theory for over-parameterized
  variational quantum eigensolvers},\ }\href
  {https://doi.org/10.48550/arXiv.2205.12481} {\bibfield  {journal} {\bibinfo
  {journal} {arXiv preprint arXiv:2205.12481}\ } (\bibinfo {year}
  {2022})}\BibitemShut {NoStop}%
\bibitem [{\citenamefont {Anschuetz}(2022)}]{Anschuetz2022c}%
  \BibitemOpen
  \bibfield  {author} {\bibinfo {author} {\bibfnamefont {E.~R.}\ \bibnamefont
  {Anschuetz}},\ }\bibfield  {title} {\bibinfo {title} {Critical points in
  quantum generative models},\ }in\ \href
  {https://openreview.net/forum?id=2f1z55GVQN} {\emph {\bibinfo {booktitle}
  {International Conference on Learning Representations}}}\ (\bibinfo {year}
  {2022})\BibitemShut {NoStop}%
\bibitem [{\citenamefont {Larocca}\ \emph {et~al.}(2023)\citenamefont
  {Larocca}, \citenamefont {Ju}, \citenamefont {Garc{\'\i}a-Mart{\'\i}n},
  \citenamefont {Coles},\ and\ \citenamefont {Cerezo}}]{Larocca2021}%
  \BibitemOpen
  \bibfield  {author} {\bibinfo {author} {\bibfnamefont {M.}~\bibnamefont
  {Larocca}}, \bibinfo {author} {\bibfnamefont {N.}~\bibnamefont {Ju}},
  \bibinfo {author} {\bibfnamefont {D.}~\bibnamefont
  {Garc{\'\i}a-Mart{\'\i}n}}, \bibinfo {author} {\bibfnamefont {P.~J.}\
  \bibnamefont {Coles}},\ and\ \bibinfo {author} {\bibfnamefont
  {M.}~\bibnamefont {Cerezo}},\ }\bibfield  {title} {\bibinfo {title} {Theory
  of overparametrization in quantum neural networks},\ }\href
  {https://doi.org/10.1038/s43588-023-00467-6} {\bibfield  {journal} {\bibinfo
  {journal} {Nature Computational Science}\ }\textbf {\bibinfo {volume} {3}},\
  \bibinfo {pages} {542} (\bibinfo {year} {2023})}\BibitemShut {NoStop}%
\bibitem [{\citenamefont {Zhou}\ \emph {et~al.}(2020)\citenamefont {Zhou},
  \citenamefont {Wang}, \citenamefont {Choi}, \citenamefont {Pichler},\ and\
  \citenamefont {Lukin}}]{Zhou2020}%
  \BibitemOpen
  \bibfield  {author} {\bibinfo {author} {\bibfnamefont {L.}~\bibnamefont
  {Zhou}}, \bibinfo {author} {\bibfnamefont {S.-T.}\ \bibnamefont {Wang}},
  \bibinfo {author} {\bibfnamefont {S.}~\bibnamefont {Choi}}, \bibinfo {author}
  {\bibfnamefont {H.}~\bibnamefont {Pichler}},\ and\ \bibinfo {author}
  {\bibfnamefont {M.~D.}\ \bibnamefont {Lukin}},\ }\bibfield  {title} {\bibinfo
  {title} {Quantum approximate optimization algorithm: Performance, mechanism,
  and implementation on near-term devices},\ }\href
  {https://doi.org/10.1103/PhysRevX.10.021067} {\bibfield  {journal} {\bibinfo
  {journal} {Phys. Rev. X}\ }\textbf {\bibinfo {volume} {10}},\ \bibinfo
  {pages} {021067} (\bibinfo {year} {2020})}\BibitemShut {NoStop}%
\bibitem [{\citenamefont {Grant}\ \emph {et~al.}(2019)\citenamefont {Grant},
  \citenamefont {Wossnig}, \citenamefont {Ostaszewski},\ and\ \citenamefont
  {Benedetti}}]{Grant2019}%
  \BibitemOpen
  \bibfield  {author} {\bibinfo {author} {\bibfnamefont {E.}~\bibnamefont
  {Grant}}, \bibinfo {author} {\bibfnamefont {L.}~\bibnamefont {Wossnig}},
  \bibinfo {author} {\bibfnamefont {M.}~\bibnamefont {Ostaszewski}},\ and\
  \bibinfo {author} {\bibfnamefont {M.}~\bibnamefont {Benedetti}},\ }\bibfield
  {title} {\bibinfo {title} {An initialization strategy for addressing barren
  plateaus in parametrized quantum circuits},\ }\href
  {https://doi.org/10.22331/q-2019-12-09-214} {\bibfield  {journal} {\bibinfo
  {journal} {Quantum}\ }\textbf {\bibinfo {volume} {3}},\ \bibinfo {pages}
  {214} (\bibinfo {year} {2019})}\BibitemShut {NoStop}%
\bibitem [{\citenamefont {Volkoff}\ and\ \citenamefont
  {Coles}(2021)}]{Volkoff2021}%
  \BibitemOpen
  \bibfield  {author} {\bibinfo {author} {\bibfnamefont {T.}~\bibnamefont
  {Volkoff}}\ and\ \bibinfo {author} {\bibfnamefont {P.~J.}\ \bibnamefont
  {Coles}},\ }\bibfield  {title} {\bibinfo {title} {Large gradients via
  correlation in random parameterized quantum circuits},\ }\href
  {https://doi.org/10.1088/2058-9565/abd891} {\bibfield  {journal} {\bibinfo
  {journal} {Quantum Sci. Technol.}\ }\textbf {\bibinfo {volume} {6}},\
  \bibinfo {pages} {025008} (\bibinfo {year} {2021})}\BibitemShut {NoStop}%
\bibitem [{\citenamefont {Rad}\ \emph {et~al.}(2022)\citenamefont {Rad},
  \citenamefont {Seif},\ and\ \citenamefont {Linke}}]{Rad2022}%
  \BibitemOpen
  \bibfield  {author} {\bibinfo {author} {\bibfnamefont {A.}~\bibnamefont
  {Rad}}, \bibinfo {author} {\bibfnamefont {A.}~\bibnamefont {Seif}},\ and\
  \bibinfo {author} {\bibfnamefont {N.~M.}\ \bibnamefont {Linke}},\ }\bibfield
  {title} {\bibinfo {title} {Surviving the barren plateau in variational
  quantum circuits with {B}ayesian learning initialization},\ }\href
  {https://doi.org/10.48550/arXiv.2203.02464} {\bibfield  {journal} {\bibinfo
  {journal} {arXiv preprint arXiv:2203.02464}\ } (\bibinfo {year}
  {2022})}\BibitemShut {NoStop}%
\bibitem [{\citenamefont {Zhang}\ \emph {et~al.}(2022)\citenamefont {Zhang},
  \citenamefont {Liu}, \citenamefont {Hsieh},\ and\ \citenamefont
  {Tao}}]{Zhang2022g}%
  \BibitemOpen
  \bibfield  {author} {\bibinfo {author} {\bibfnamefont {K.}~\bibnamefont
  {Zhang}}, \bibinfo {author} {\bibfnamefont {L.}~\bibnamefont {Liu}}, \bibinfo
  {author} {\bibfnamefont {M.-H.}\ \bibnamefont {Hsieh}},\ and\ \bibinfo
  {author} {\bibfnamefont {D.}~\bibnamefont {Tao}},\ }\bibfield  {title}
  {\bibinfo {title} {Escaping from the barren plateau via gaussian
  initializations in deep variational quantum circuits},\ }in\ \href
  {https://openreview.net/forum?id=jXgbJdQ2YIy} {\emph {\bibinfo {booktitle}
  {Advances in Neural Information Processing Systems}}},\ \bibinfo {editor}
  {edited by\ \bibinfo {editor} {\bibfnamefont {A.~H.}\ \bibnamefont {Oh}},
  \bibinfo {editor} {\bibfnamefont {A.}~\bibnamefont {Agarwal}}, \bibinfo
  {editor} {\bibfnamefont {D.}~\bibnamefont {Belgrave}},\ and\ \bibinfo
  {editor} {\bibfnamefont {K.}~\bibnamefont {Cho}}}\ (\bibinfo {year}
  {2022})\BibitemShut {NoStop}%
\bibitem [{\citenamefont {Cao}\ \emph {et~al.}(2024)\citenamefont {Cao},
  \citenamefont {Zhou}, \citenamefont {Tannu}, \citenamefont {Shannon},\ and\
  \citenamefont {Joynt}}]{cao2024exploiting}%
  \BibitemOpen
  \bibfield  {author} {\bibinfo {author} {\bibfnamefont {C.}~\bibnamefont
  {Cao}}, \bibinfo {author} {\bibfnamefont {Y.}~\bibnamefont {Zhou}}, \bibinfo
  {author} {\bibfnamefont {S.}~\bibnamefont {Tannu}}, \bibinfo {author}
  {\bibfnamefont {N.}~\bibnamefont {Shannon}},\ and\ \bibinfo {author}
  {\bibfnamefont {R.}~\bibnamefont {Joynt}},\ }\bibfield  {title} {\bibinfo
  {title} {Exploiting many-body localization for scalable variational quantum
  simulation},\ }\href {https://arxiv.org/abs/2404.17560} {\bibfield  {journal}
  {\bibinfo  {journal} {arXiv preprint arXiv:2404.17560}\ } (\bibinfo {year}
  {2024})}\BibitemShut {NoStop}%
\bibitem [{\citenamefont {Alam}\ \emph {et~al.}(2022)\citenamefont {Alam},
  \citenamefont {Wudarski}, \citenamefont {Reagor}, \citenamefont {Sud},
  \citenamefont {Grabbe}, \citenamefont {Wang}, \citenamefont {Hodson},
  \citenamefont {Lott}, \citenamefont {Rieffel},\ and\ \citenamefont
  {Venturelli}}]{Alam2022pra}%
  \BibitemOpen
  \bibfield  {author} {\bibinfo {author} {\bibfnamefont {M.~S.}\ \bibnamefont
  {Alam}}, \bibinfo {author} {\bibfnamefont {F.~A.}\ \bibnamefont {Wudarski}},
  \bibinfo {author} {\bibfnamefont {M.~J.}\ \bibnamefont {Reagor}}, \bibinfo
  {author} {\bibfnamefont {J.}~\bibnamefont {Sud}}, \bibinfo {author}
  {\bibfnamefont {S.}~\bibnamefont {Grabbe}}, \bibinfo {author} {\bibfnamefont
  {Z.}~\bibnamefont {Wang}}, \bibinfo {author} {\bibfnamefont {M.}~\bibnamefont
  {Hodson}}, \bibinfo {author} {\bibfnamefont {P.~A.}\ \bibnamefont {Lott}},
  \bibinfo {author} {\bibfnamefont {E.~G.}\ \bibnamefont {Rieffel}},\ and\
  \bibinfo {author} {\bibfnamefont {D.}~\bibnamefont {Venturelli}},\ }\bibfield
   {title} {\bibinfo {title} {Practical verification of quantum properties in
  quantum-approximate-optimization runs},\ }\href
  {https://doi.org/10.1103/PhysRevApplied.17.024026} {\bibfield  {journal}
  {\bibinfo  {journal} {Phys. Rev. Appl.}\ }\textbf {\bibinfo {volume} {17}},\
  \bibinfo {pages} {024026} (\bibinfo {year} {2022})}\BibitemShut {NoStop}%
\bibitem [{\citenamefont {Lewis}\ \emph {et~al.}(2024)\citenamefont {Lewis},
  \citenamefont {Huang}, \citenamefont {Tran}, \citenamefont {Lehner},
  \citenamefont {Kueng},\ and\ \citenamefont {Preskill}}]{lewis2024improved}%
  \BibitemOpen
  \bibfield  {author} {\bibinfo {author} {\bibfnamefont {L.}~\bibnamefont
  {Lewis}}, \bibinfo {author} {\bibfnamefont {H.-Y.}\ \bibnamefont {Huang}},
  \bibinfo {author} {\bibfnamefont {V.~T.}\ \bibnamefont {Tran}}, \bibinfo
  {author} {\bibfnamefont {S.}~\bibnamefont {Lehner}}, \bibinfo {author}
  {\bibfnamefont {R.}~\bibnamefont {Kueng}},\ and\ \bibinfo {author}
  {\bibfnamefont {J.}~\bibnamefont {Preskill}},\ }\bibfield  {title} {\bibinfo
  {title} {Improved machine learning algorithm for predicting ground state
  properties},\ }\href {https://doi.org/10.1038/s41467-024-45014-7} {\bibfield
  {journal} {\bibinfo  {journal} {Nat. Commun.}\ }\textbf {\bibinfo {volume}
  {15}},\ \bibinfo {pages} {895} (\bibinfo {year} {2024})}\BibitemShut
  {NoStop}%
\bibitem [{\citenamefont {McArdle}\ \emph {et~al.}(2020)\citenamefont
  {McArdle}, \citenamefont {Endo}, \citenamefont {Aspuru-Guzik}, \citenamefont
  {Benjamin},\ and\ \citenamefont {Yuan}}]{McArdle2020}%
  \BibitemOpen
  \bibfield  {author} {\bibinfo {author} {\bibfnamefont {S.}~\bibnamefont
  {McArdle}}, \bibinfo {author} {\bibfnamefont {S.}~\bibnamefont {Endo}},
  \bibinfo {author} {\bibfnamefont {A.}~\bibnamefont {Aspuru-Guzik}}, \bibinfo
  {author} {\bibfnamefont {S.~C.}\ \bibnamefont {Benjamin}},\ and\ \bibinfo
  {author} {\bibfnamefont {X.}~\bibnamefont {Yuan}},\ }\bibfield  {title}
  {\bibinfo {title} {Quantum computational chemistry},\ }\href
  {https://doi.org/10.1103/RevModPhys.92.015003} {\bibfield  {journal}
  {\bibinfo  {journal} {Rev. Mod. Phys.}\ }\textbf {\bibinfo {volume} {92}},\
  \bibinfo {pages} {015003} (\bibinfo {year} {2020})}\BibitemShut {NoStop}%
\bibitem [{\citenamefont {Cao}\ \emph {et~al.}(2019)\citenamefont {Cao},
  \citenamefont {Romero}, \citenamefont {Olson}, \citenamefont {Degroote},
  \citenamefont {Johnson}, \citenamefont {Kieferov{\'a}}, \citenamefont
  {Kivlichan}, \citenamefont {Menke}, \citenamefont {Peropadre}, \citenamefont
  {Sawaya} \emph {et~al.}}]{cao2019quantum}%
  \BibitemOpen
  \bibfield  {author} {\bibinfo {author} {\bibfnamefont {Y.}~\bibnamefont
  {Cao}}, \bibinfo {author} {\bibfnamefont {J.}~\bibnamefont {Romero}},
  \bibinfo {author} {\bibfnamefont {J.~P.}\ \bibnamefont {Olson}}, \bibinfo
  {author} {\bibfnamefont {M.}~\bibnamefont {Degroote}}, \bibinfo {author}
  {\bibfnamefont {P.~D.}\ \bibnamefont {Johnson}}, \bibinfo {author}
  {\bibfnamefont {M.}~\bibnamefont {Kieferov{\'a}}}, \bibinfo {author}
  {\bibfnamefont {I.~D.}\ \bibnamefont {Kivlichan}}, \bibinfo {author}
  {\bibfnamefont {T.}~\bibnamefont {Menke}}, \bibinfo {author} {\bibfnamefont
  {B.}~\bibnamefont {Peropadre}}, \bibinfo {author} {\bibfnamefont {N.~P.}\
  \bibnamefont {Sawaya}}, \emph {et~al.},\ }\bibfield  {title} {\bibinfo
  {title} {Quantum chemistry in the age of quantum computing},\ }\href
  {https://doi.org/10.1021/acs.chemrev.8b00803} {\bibfield  {journal} {\bibinfo
   {journal} {Chemical reviews}\ }\textbf {\bibinfo {volume} {119}},\ \bibinfo
  {pages} {10856} (\bibinfo {year} {2019})}\BibitemShut {NoStop}%
\bibitem [{\citenamefont {Lucas}(2014)}]{Lucas2014}%
  \BibitemOpen
  \bibfield  {author} {\bibinfo {author} {\bibfnamefont {A.}~\bibnamefont
  {Lucas}},\ }\bibfield  {title} {\bibinfo {title} {Ising formulations of many
  {NP} problems},\ }\bibfield  {journal} {\bibinfo  {journal} {Frontiers in
  Physics}\ }\textbf {\bibinfo {volume} {2}},\ \href
  {https://doi.org/10.3389/fphy.2014.00005} {10.3389/fphy.2014.00005} (\bibinfo
  {year} {2014})\BibitemShut {NoStop}%
\bibitem [{\citenamefont {Abadi}\ \emph {et~al.}(2016)\citenamefont {Abadi},
  \citenamefont {Barham}, \citenamefont {Chen}, \citenamefont {Chen},
  \citenamefont {Davis}, \citenamefont {Dean}, \citenamefont {Devin},
  \citenamefont {Ghemawat}, \citenamefont {Irving}, \citenamefont {Isard},
  \citenamefont {Kudlur}, \citenamefont {Levenberg}, \citenamefont {Monga},
  \citenamefont {Moore}, \citenamefont {Murray}, \citenamefont {Steiner},
  \citenamefont {Tucker}, \citenamefont {Vasudevan}, \citenamefont {Warden},
  \citenamefont {Wicke}, \citenamefont {Yu},\ and\ \citenamefont
  {Zheng}}]{tensorflow2016}%
  \BibitemOpen
  \bibfield  {author} {\bibinfo {author} {\bibfnamefont {M.}~\bibnamefont
  {Abadi}}, \bibinfo {author} {\bibfnamefont {P.}~\bibnamefont {Barham}},
  \bibinfo {author} {\bibfnamefont {J.}~\bibnamefont {Chen}}, \bibinfo {author}
  {\bibfnamefont {Z.}~\bibnamefont {Chen}}, \bibinfo {author} {\bibfnamefont
  {A.}~\bibnamefont {Davis}}, \bibinfo {author} {\bibfnamefont
  {J.}~\bibnamefont {Dean}}, \bibinfo {author} {\bibfnamefont {M.}~\bibnamefont
  {Devin}}, \bibinfo {author} {\bibfnamefont {S.}~\bibnamefont {Ghemawat}},
  \bibinfo {author} {\bibfnamefont {G.}~\bibnamefont {Irving}}, \bibinfo
  {author} {\bibfnamefont {M.}~\bibnamefont {Isard}}, \bibinfo {author}
  {\bibfnamefont {M.}~\bibnamefont {Kudlur}}, \bibinfo {author} {\bibfnamefont
  {J.}~\bibnamefont {Levenberg}}, \bibinfo {author} {\bibfnamefont
  {R.}~\bibnamefont {Monga}}, \bibinfo {author} {\bibfnamefont
  {S.}~\bibnamefont {Moore}}, \bibinfo {author} {\bibfnamefont {D.~G.}\
  \bibnamefont {Murray}}, \bibinfo {author} {\bibfnamefont {B.}~\bibnamefont
  {Steiner}}, \bibinfo {author} {\bibfnamefont {P.}~\bibnamefont {Tucker}},
  \bibinfo {author} {\bibfnamefont {V.}~\bibnamefont {Vasudevan}}, \bibinfo
  {author} {\bibfnamefont {P.}~\bibnamefont {Warden}}, \bibinfo {author}
  {\bibfnamefont {M.}~\bibnamefont {Wicke}}, \bibinfo {author} {\bibfnamefont
  {Y.}~\bibnamefont {Yu}},\ and\ \bibinfo {author} {\bibfnamefont
  {X.}~\bibnamefont {Zheng}},\ }\bibfield  {title} {\bibinfo {title}
  {{TensorFlow}: a system for large-scale machine learning},\ }in\ \href@noop
  {} {\emph {\bibinfo {booktitle} {Proceedings of the 12th USENIX Conference on
  Operating Systems Design and Implementation}}},\ \bibinfo {series and number}
  {OSDI'16}\ (\bibinfo  {publisher} {USENIX Association},\ \bibinfo {address}
  {USA},\ \bibinfo {year} {2016})\ p.\ \bibinfo {pages} {265–283}\BibitemShut
  {NoStop}%
\bibitem [{\citenamefont {Paszke}\ \emph {et~al.}(2019)\citenamefont {Paszke},
  \citenamefont {Gross}, \citenamefont {Massa}, \citenamefont {Lerer},
  \citenamefont {Bradbury}, \citenamefont {Chanan}, \citenamefont {Killeen},
  \citenamefont {Lin}, \citenamefont {Gimelshein}, \citenamefont {Antiga},
  \citenamefont {Desmaison}, \citenamefont {Kopf}, \citenamefont {Yang},
  \citenamefont {DeVito}, \citenamefont {Raison}, \citenamefont {Tejani},
  \citenamefont {Chilamkurthy}, \citenamefont {Steiner}, \citenamefont {Fang},
  \citenamefont {Bai},\ and\ \citenamefont {Chintala}}]{pytorch2019}%
  \BibitemOpen
  \bibfield  {author} {\bibinfo {author} {\bibfnamefont {A.}~\bibnamefont
  {Paszke}}, \bibinfo {author} {\bibfnamefont {S.}~\bibnamefont {Gross}},
  \bibinfo {author} {\bibfnamefont {F.}~\bibnamefont {Massa}}, \bibinfo
  {author} {\bibfnamefont {A.}~\bibnamefont {Lerer}}, \bibinfo {author}
  {\bibfnamefont {J.}~\bibnamefont {Bradbury}}, \bibinfo {author}
  {\bibfnamefont {G.}~\bibnamefont {Chanan}}, \bibinfo {author} {\bibfnamefont
  {T.}~\bibnamefont {Killeen}}, \bibinfo {author} {\bibfnamefont
  {Z.}~\bibnamefont {Lin}}, \bibinfo {author} {\bibfnamefont {N.}~\bibnamefont
  {Gimelshein}}, \bibinfo {author} {\bibfnamefont {L.}~\bibnamefont {Antiga}},
  \bibinfo {author} {\bibfnamefont {A.}~\bibnamefont {Desmaison}}, \bibinfo
  {author} {\bibfnamefont {A.}~\bibnamefont {Kopf}}, \bibinfo {author}
  {\bibfnamefont {E.}~\bibnamefont {Yang}}, \bibinfo {author} {\bibfnamefont
  {Z.}~\bibnamefont {DeVito}}, \bibinfo {author} {\bibfnamefont
  {M.}~\bibnamefont {Raison}}, \bibinfo {author} {\bibfnamefont
  {A.}~\bibnamefont {Tejani}}, \bibinfo {author} {\bibfnamefont
  {S.}~\bibnamefont {Chilamkurthy}}, \bibinfo {author} {\bibfnamefont
  {B.}~\bibnamefont {Steiner}}, \bibinfo {author} {\bibfnamefont
  {L.}~\bibnamefont {Fang}}, \bibinfo {author} {\bibfnamefont {J.}~\bibnamefont
  {Bai}},\ and\ \bibinfo {author} {\bibfnamefont {S.}~\bibnamefont
  {Chintala}},\ }\bibfield  {title} {\bibinfo {title} {{PyTorch}: An imperative
  style, high-performance deep learning library},\ }in\ \href
  {https://proceedings.neurips.cc/paper_files/paper/2019/file/bdbca288fee7f92f2bfa9f7012727740-Paper.pdf}
  {\emph {\bibinfo {booktitle} {Advances in Neural Information Processing
  Systems}}},\ Vol.~\bibinfo {volume} {32},\ \bibinfo {editor} {edited by\
  \bibinfo {editor} {\bibfnamefont {H.}~\bibnamefont {Wallach}}, \bibinfo
  {editor} {\bibfnamefont {H.}~\bibnamefont {Larochelle}}, \bibinfo {editor}
  {\bibfnamefont {A.}~\bibnamefont {Beygelzimer}}, \bibinfo {editor}
  {\bibfnamefont {F.}~\bibnamefont {d\textquotesingle Alch\'{e}-Buc}}, \bibinfo
  {editor} {\bibfnamefont {E.}~\bibnamefont {Fox}},\ and\ \bibinfo {editor}
  {\bibfnamefont {R.}~\bibnamefont {Garnett}}}\ (\bibinfo  {publisher} {Curran
  Associates, Inc.},\ \bibinfo {year} {2019})\BibitemShut {NoStop}%
\bibitem [{\citenamefont {Zhang}\ \emph {et~al.}(2023)\citenamefont {Zhang},
  \citenamefont {Allcock}, \citenamefont {Wan}, \citenamefont {Liu},
  \citenamefont {Sun}, \citenamefont {Yu}, \citenamefont {Yang}, \citenamefont
  {Qiu}, \citenamefont {Ye}, \citenamefont {Chen}, \citenamefont {Lee},
  \citenamefont {Zheng}, \citenamefont {Jian}, \citenamefont {Yao},
  \citenamefont {Hsieh},\ and\ \citenamefont {Zhang}}]{tensorcircuit2023}%
  \BibitemOpen
  \bibfield  {author} {\bibinfo {author} {\bibfnamefont {S.-X.}\ \bibnamefont
  {Zhang}}, \bibinfo {author} {\bibfnamefont {J.}~\bibnamefont {Allcock}},
  \bibinfo {author} {\bibfnamefont {Z.-Q.}\ \bibnamefont {Wan}}, \bibinfo
  {author} {\bibfnamefont {S.}~\bibnamefont {Liu}}, \bibinfo {author}
  {\bibfnamefont {J.}~\bibnamefont {Sun}}, \bibinfo {author} {\bibfnamefont
  {H.}~\bibnamefont {Yu}}, \bibinfo {author} {\bibfnamefont {X.-H.}\
  \bibnamefont {Yang}}, \bibinfo {author} {\bibfnamefont {J.}~\bibnamefont
  {Qiu}}, \bibinfo {author} {\bibfnamefont {Z.}~\bibnamefont {Ye}}, \bibinfo
  {author} {\bibfnamefont {Y.-Q.}\ \bibnamefont {Chen}}, \bibinfo {author}
  {\bibfnamefont {C.-K.}\ \bibnamefont {Lee}}, \bibinfo {author} {\bibfnamefont
  {Y.-C.}\ \bibnamefont {Zheng}}, \bibinfo {author} {\bibfnamefont {S.-K.}\
  \bibnamefont {Jian}}, \bibinfo {author} {\bibfnamefont {H.}~\bibnamefont
  {Yao}}, \bibinfo {author} {\bibfnamefont {C.-Y.}\ \bibnamefont {Hsieh}},\
  and\ \bibinfo {author} {\bibfnamefont {S.}~\bibnamefont {Zhang}},\ }\bibfield
   {title} {\bibinfo {title} {{TensorCircuit}: a quantum software framework for
  the {NISQ} era},\ }\href {https://doi.org/10.22331/q-2023-02-02-912}
  {\bibfield  {journal} {\bibinfo  {journal} {Quantum}\ }\textbf {\bibinfo
  {volume} {7}},\ \bibinfo {pages} {912} (\bibinfo {year} {2023})}\BibitemShut
  {NoStop}%
\bibitem [{\citenamefont {Kingma}\ and\ \citenamefont {Ba}(2014)}]{Kingma2014}%
  \BibitemOpen
  \bibfield  {author} {\bibinfo {author} {\bibfnamefont {D.~P.}\ \bibnamefont
  {Kingma}}\ and\ \bibinfo {author} {\bibfnamefont {J.}~\bibnamefont {Ba}},\
  }\bibfield  {title} {\bibinfo {title} {Adam: A method for stochastic
  optimization},\ }\href {https://doi.org/10.48550/arXiv.1412.6980} {\bibfield
  {journal} {\bibinfo  {journal} {arXiv preprint arXiv:1412.6980}\ } (\bibinfo
  {year} {2014})}\BibitemShut {NoStop}%
\bibitem [{\citenamefont {Loshchilov}\ and\ \citenamefont
  {Hutter}(2019)}]{Loshchilov2019}%
  \BibitemOpen
  \bibfield  {author} {\bibinfo {author} {\bibfnamefont {I.}~\bibnamefont
  {Loshchilov}}\ and\ \bibinfo {author} {\bibfnamefont {F.}~\bibnamefont
  {Hutter}},\ }\bibfield  {title} {\bibinfo {title} {Decoupled weight decay
  regularization},\ }in\ \href {https://openreview.net/forum?id=Bkg6RiCqY7}
  {\emph {\bibinfo {booktitle} {International Conference on Learning
  Representations}}}\ (\bibinfo {year} {2019})\BibitemShut {NoStop}%
\bibitem [{\citenamefont {Wang}\ \emph {et~al.}(2024)\citenamefont {Wang},
  \citenamefont {Qi}, \citenamefont {Wang},\ and\ \citenamefont
  {Dong}}]{wang2024eha}%
  \BibitemOpen
  \bibfield  {author} {\bibinfo {author} {\bibfnamefont {X.}~\bibnamefont
  {Wang}}, \bibinfo {author} {\bibfnamefont {B.}~\bibnamefont {Qi}}, \bibinfo
  {author} {\bibfnamefont {Y.}~\bibnamefont {Wang}},\ and\ \bibinfo {author}
  {\bibfnamefont {D.}~\bibnamefont {Dong}},\ }\bibfield  {title} {\bibinfo
  {title} {Entanglement-variational hardware-efficient ansatz for
  eigensolvers},\ }\href {https://doi.org/10.1103/PhysRevApplied.21.034059}
  {\bibfield  {journal} {\bibinfo  {journal} {Phys. Rev. Appl.}\ }\textbf
  {\bibinfo {volume} {21}},\ \bibinfo {pages} {034059} (\bibinfo {year}
  {2024})}\BibitemShut {NoStop}%
\bibitem [{\citenamefont {Bergholm}\ \emph {et~al.}(2018)\citenamefont
  {Bergholm}, \citenamefont {Izaac}, \citenamefont {Schuld}, \citenamefont
  {Gogolin}, \citenamefont {Ahmed}, \citenamefont {Ajith}, \citenamefont
  {Alam}, \citenamefont {Alonso-Linaje}, \citenamefont {AkashNarayanan},
  \citenamefont {Asadi} \emph {et~al.}}]{pennylane2018}%
  \BibitemOpen
  \bibfield  {author} {\bibinfo {author} {\bibfnamefont {V.}~\bibnamefont
  {Bergholm}}, \bibinfo {author} {\bibfnamefont {J.}~\bibnamefont {Izaac}},
  \bibinfo {author} {\bibfnamefont {M.}~\bibnamefont {Schuld}}, \bibinfo
  {author} {\bibfnamefont {C.}~\bibnamefont {Gogolin}}, \bibinfo {author}
  {\bibfnamefont {S.}~\bibnamefont {Ahmed}}, \bibinfo {author} {\bibfnamefont
  {V.}~\bibnamefont {Ajith}}, \bibinfo {author} {\bibfnamefont {M.~S.}\
  \bibnamefont {Alam}}, \bibinfo {author} {\bibfnamefont {G.}~\bibnamefont
  {Alonso-Linaje}}, \bibinfo {author} {\bibfnamefont {B.}~\bibnamefont
  {AkashNarayanan}}, \bibinfo {author} {\bibfnamefont {A.}~\bibnamefont
  {Asadi}}, \emph {et~al.},\ }\bibfield  {title} {\bibinfo {title} {Pennylane:
  Automatic differentiation of hybrid quantum-classical computations},\ }\href
  {https://doi.org/10.48550/arXiv.1811.04968} {\bibfield  {journal} {\bibinfo
  {journal} {arXiv preprint arXiv:1811.04968}\ } (\bibinfo {year}
  {2018})}\BibitemShut {NoStop}%
\bibitem [{\citenamefont {Wright}\ and\ \citenamefont
  {Ma}(2022)}]{wright2022high}%
  \BibitemOpen
  \bibfield  {author} {\bibinfo {author} {\bibfnamefont {J.}~\bibnamefont
  {Wright}}\ and\ \bibinfo {author} {\bibfnamefont {Y.}~\bibnamefont {Ma}},\
  }\href {https://doi.org/10.1017/9781108779302} {\emph {\bibinfo {title}
  {High-dimensional data analysis with low-dimensional models: Principles,
  computation, and applications}}}\ (\bibinfo  {publisher} {Cambridge
  University Press},\ \bibinfo {year} {2022})\BibitemShut {NoStop}%
\end{thebibliography}%

\onecolumngrid
\appendix
	
\section{\label{intuition}The norm of gradient}
In this section, we explain the meaning of the figure of merit $\Vert\nabla_{\boldsymbol{\theta}}{C}\Vert_{2}^{2}$.
	
In gradient-based optimizers,  the parameter vector $\boldsymbol{\theta}$ updates as
	\begin{equation}\nonumber
	\boldsymbol{\theta}_{t+1}=\boldsymbol{\theta}_{t} - \eta\nabla_{\boldsymbol{\theta}}{C\bm{(}\boldsymbol{\theta}_t\bm{)}},
	\end{equation}
where $\eta$ denotes the learning rate, which is usually small. With the update, the value of cost function decreases as
	\begin{equation}\nonumber
	C\left(\boldsymbol{\theta}_{t+1}\right) =C\left(\boldsymbol{\theta}_{t} - \eta\nabla_{\boldsymbol{\theta}}{C(\boldsymbol{\theta}_t)} \right)=C\left(\boldsymbol{\theta}_{t} \right) - \eta \Vert\nabla_{\boldsymbol{\theta}}{C(\boldsymbol{\theta}_t)}\Vert_{2}^{2} +o\left(\eta \right).
	\end{equation}
	
It is clear to see that the decrease with parameters updating is quantified in terms of  the magnitude of $\Vert\nabla_{\boldsymbol{\theta}}{C}\Vert_{2}^{2}$.  Thus, we adopt the average $\mathop{\mathbb{E}}\limits_{\boldsymbol{\theta}}\Vert\nabla_{\boldsymbol{\theta}}{C}\Vert_{2}^{2}$ as the  trainability figure of merit.

\section{\label{lemmas}Technical Lemmas}
In this section, we present several technical lemmas, which are all based on the following assumption.
	
\begin{assumption}
	A random variable $ \theta $ is drawn from  uniform distribution over $\left[ -a\pi,a\pi\right]$, where $a \in \left( 0, 1\right) $ is a hyperparameter to be further determined.
\end{assumption}
	
Under the assumption, let  $\mathop{\mathbb{E}}\limits_{\theta}$ denote the expectation over $\theta$ with respect to the uniform distribution $\mathcal{U}\left[ -a\pi,a\pi\right]$. Recall that
	\begin{equation}\nonumber
	\alpha=\frac{1}{4}\left( 2+\frac{\sin {2a\pi}}{a\pi}\right),\,\beta=\frac{1}{4}\left( 2-\frac{\sin {2a\pi}}{a\pi}\right),\, \gamma=\frac{\sin {a\pi}}{a\pi}.
	\end{equation}
	
\begin{lemma}\label{firstpowerlemma}
	Let $A$ be an arbitrary linear operator, $G$ be a Hermitian unitary and $V=e^{-i \frac{\theta}{2}G}$. Let $O$ be an arbitrary Hermitian operator that anti-commutes with $G$, and $\tilde{O}$ be an arbitrary Hermitian operator that commutes with $G$. Then,
	\begin{align}
		\mathop{\mathbb{E}}\limits_{\theta}\mathrm{Tr}\left[ OVAV^{\dagger}\right] &=\gamma\, \mathrm{Tr}\left[ OA\right] ,\label{Efirstpower1}
		\\
		\mathop{\mathbb{E}}\limits_{\theta} \frac{\partial}{\partial\theta} \mathrm{Tr}\left[ OVAV^{\dagger}\right] &=\gamma\,\mathrm{Tr}\left[ iGOA\right] ,\label{Efirstpower2}
		\\
		\mathop{\mathbb{E}}\limits_{\theta}\mathrm{Tr}\left[ \tilde{O}VAV^{\dagger}\right] &=\mathrm{Tr}\left[ \tilde{O}A\right],\label{Efirstpower3}
		\\
		\mathop{\mathbb{E}}\limits_{\theta} \frac{\partial}{\partial\theta} \mathrm{Tr}\left[ \tilde{O}VAV^{\dagger}\right] &=0.\label{Efirstpower4}
		\end{align}
\end{lemma}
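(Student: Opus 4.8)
The plan is to reduce everything to the single structural fact that $G^2 = I$, which holds because $G$ is simultaneously Hermitian and unitary. This lets me replace the exponential by a finite trigonometric sum,
\[
V = e^{-i\frac{\theta}{2}G} = \cos\tfrac{\theta}{2}\, I - i\sin\tfrac{\theta}{2}\, G,
\qquad
V^{\dagger} = \cos\tfrac{\theta}{2}\, I + i\sin\tfrac{\theta}{2}\, G.
\]
First I would rewrite each trace via cyclicity, $\mathrm{Tr}[OVAV^{\dagger}] = \mathrm{Tr}\bigl[(V^{\dagger}OV)A\bigr]$, so that the entire $\theta$-dependence is concentrated in the conjugated operator $V^{\dagger}OV$ rather than in the arbitrary operator $A$.

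The core computation is to simplify $V^{\dagger}OV$ and $V^{\dagger}\tilde{O}V$ by multiplying out the trigonometric expansion and invoking the (anti-)commutation hypotheses together with $G^2 = I$. In the anti-commuting case, using $GO = -OG$ and hence $GOG = G(OG) = -G^2 O = -O$, the cross terms collect into the double-angle form
\[
V^{\dagger}OV = \cos\theta\, O + i\sin\theta\, GO,
\]
where $\cos\theta = \cos^2\frac{\theta}{2} - \sin^2\frac{\theta}{2}$ and $\sin\theta = 2\sin\frac{\theta}{2}\cos\frac{\theta}{2}$ emerge naturally from the $c^2$, $cs$, and $s^2$ coefficients. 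In the commuting case, $G\tilde{O} = \tilde{O}G$ kills the off-diagonal term and $G\tilde{O}G = \tilde{O}G^2 = \tilde{O}$, so conjugation acts trivially: $V^{\dagger}\tilde{O}V = \tilde{O}$, with no $\theta$-dependence at all.

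With these two identities in hand, all four claims follow by averaging over the symmetric interval $[-a\pi, a\pi]$. Only two integrals are needed: $\mathbb{E}_{\theta}[\cos\theta] = \frac{1}{2a\pi}\int_{-a\pi}^{a\pi}\cos\theta\, d\theta = \frac{\sin a\pi}{a\pi} = \gamma$, and $\mathbb{E}_{\theta}[\sin\theta] = 0$ by odd symmetry. Substituting the expansion of $V^{\dagger}OV$ into the trace gives Eq.~\eqref{Efirstpower1} at once, the $\sin\theta$ term averaging away. For Eq.~\eqref{Efirstpower2} I would differentiate the trace in $\theta$ before averaging — which simply swaps $\cos\theta \mapsto -\sin\theta$ and $\sin\theta \mapsto \cos\theta$ — so that the surviving term is $i\gamma\,\mathrm{Tr}[GOA] = \gamma\,\mathrm{Tr}[iGOA]$. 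Eqs.~\eqref{Efirstpower3} and \eqref{Efirstpower4} are then immediate: since $V^{\dagger}\tilde{O}V = \tilde{O}$ is $\theta$-independent, its trace against $A$ equals the constant $\mathrm{Tr}[\tilde{O}A]$ and its $\theta$-derivative vanishes identically.

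There is no genuine obstacle here; the lemma is a controlled bookkeeping exercise. The one step demanding care is the sign accounting that produces $V^{\dagger}OV = \cos\theta\, O + i\sin\theta\, GO$, where one must correctly combine $GO - OG = 2GO$ with $GOG = -O$ when collecting coefficients — it is easy to drop a factor of $2$ or flip a sign. I would sanity-check the identity at $\theta = 0$, where it must reduce to $O$, before feeding it into the expectation integrals.
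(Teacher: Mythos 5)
Your proposal is correct and follows essentially the same route as the paper's proof: both expand $V=\cos\tfrac{\theta}{2}I-i\sin\tfrac{\theta}{2}G$ using $G^{2}=I$, reduce to the identity $\mathrm{Tr}\left[OVAV^{\dagger}\right]=\cos\theta\,\mathrm{Tr}\left[OA\right]+\sin\theta\,\mathrm{Tr}\left[iGOA\right]$ (you obtain it in Heisenberg form $V^{\dagger}OV=\cos\theta\,O+i\sin\theta\,GO$, the paper by expanding the trace directly), and finish with the same two averages $\mathbb{E}_{\theta}\cos\theta=\gamma$ and $\mathbb{E}_{\theta}\sin\theta=0$.
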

\begin{proof}
From  $V=e^{-i \frac{\theta}{2}G}=I\cos{\frac{\theta}{2}}-iG\sin{\frac{\theta}{2}}$, for any operator $B$, we have
	\begin{align}
		\mathrm{Tr}\left[ BVAV^{\dagger}\right]\nonumber=&\,\mathrm{Tr}\left[B(I\cos{\frac{\theta}{2}}-iG\sin{\frac{\theta}{2}})A(I\cos{\frac{\theta}{2}}+iG\sin{\frac{\theta}{2}})\right]\nonumber
		\\
		\label{key0}
		=&\frac{1+\cos{\theta}}{2} \mathrm{Tr}\left[ BA\right] + \frac{1-\cos{\theta}}{2} \mathrm{Tr}\left[ BGAG\right] +\frac{\sin{\theta}}{2} \left(\mathrm{Tr}\left[ iBAG\right] -\mathrm{Tr}\left[ iBGA\right]  \right).
	\end{align}
		
If $O$ anti-commutes with $G$, then $iGO$ is Hermitian. In addition, with $G$ being unitary and $\{O, G\}=0$, from Eq.~\eqref{key0}  we have
	\begin{equation}\label{key01}
		\mathrm{Tr}\left[ OVAV^{\dagger}\right]=\cos{\theta} \mathrm{Tr}\left[ OA\right] + \sin{\theta} \mathrm{Tr}\left[ iGOA\right],
		\end{equation}
		and accordingly,
		\begin{equation}\label{key02}
		\frac{\partial}{\partial\theta}\mathrm{Tr}\left[ OVAV^{\dagger}\right]=-\sin{\theta} \mathrm{Tr}\left[ OA\right] + \cos{\theta} \mathrm{Tr}\left[ iGOA\right].
		\end{equation}
Thus, Eqs.~\eqref{Efirstpower1} and \eqref{Efirstpower2}  can be obtained by taking the expectations of Eqs.~\eqref{key01} and \eqref{key02} over $\theta$, respectively.
		
Moreover,  when $\tilde{O}$ commutes with the unitary $G$, following similar calculations, it is straightforward to have Eqs.~\eqref{Efirstpower3} and \eqref{Efirstpower4}.
	\end{proof}
	
From Lemma~\ref{firstpowerlemma}, we have the following conclusion.

\begin{corollary}\label{firstpowercorollary}
Let $A$ be an arbitrary linear operator,  $G$ be a Hermitian unitary and $V=e^{-i\frac{\theta}{2}G}$. Let $O$ denote an arbitrary Hermitian operator that anti-commutes with $G$, and  $\tilde{O}_{1}$ and  $\tilde{O}_{2}$ be arbitrary Hermitian operators that commute with $G$. Then,
	\begin{align}
		\mathop{\mathbb{E}}\limits_{\theta}\mathrm{Tr}\left[ OVAV^{\dagger}\right]  \mathrm{Tr}\left[ \tilde{O}_{1}VAV^{\dagger}\right]
		&=\gamma\,\mathrm{Tr}\left[ OA\right] \mathrm{Tr}\left[ \tilde{O}_{1}A\right],\label{firstpowercorollary1}
		\\
		\mathop{\mathbb{E}}\limits_{\theta}\frac{\partial}{\partial\theta} \mathrm{Tr}\left[ OVAV^{\dagger}\right] \cdot \frac{\partial}{\partial\theta}\mathrm{Tr}\left[ \tilde{O}_{1}VAV^{\dagger}\right] &=0 \label{firstpowercorollary2},
		\\
		\mathop{\mathbb{E}}\limits_{\theta}\mathrm{Tr}\left[ \tilde{O}_{1}VAV^{\dagger}\right] \mathrm{Tr}\left[ \tilde{O}_{2}VAV^{\dagger}\right]
		&=\mathrm{Tr}\left[ \tilde{O}_{1}A\right]\mathrm{Tr}\left[ \tilde{O}_{2}A\right],\label{firstpowercorollary3}
		\\
		\mathop{\mathbb{E}}\limits_{\theta}\frac{\partial}{\partial\theta} \mathrm{Tr}\left[ \tilde{O}_{1}VAV^{\dagger}\right] \cdot\frac{\partial}{\partial\theta}\mathrm{Tr}\left[ \tilde{O}_{2}VAV^{\dagger}\right] &=0\label{firstpowercorollary4}.
		\end{align}
	\end{corollary}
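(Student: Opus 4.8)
The plan is to reduce all four identities to the single-factor results of Lemma~\ref{firstpowerlemma}, exploiting one structural observation: a trace built from an operator that \emph{commutes} with $G$ is in fact independent of $\theta$, not merely correct in expectation. Concretely, I would first revisit Eq.~\eqref{key0} with $B=\tilde{O}_1$ (and separately $B=\tilde{O}_2$). Since $G^2=I$ and $[\tilde{O}_1,G]=0$, cyclicity gives $\mathrm{Tr}[\tilde{O}_1 GAG]=\mathrm{Tr}[G\tilde{O}_1GA]=\mathrm{Tr}[\tilde{O}_1 A]$, which collapses the two $\frac{1\pm\cos\theta}{2}$ terms into $\mathrm{Tr}[\tilde{O}_1 A]$; moreover $\mathrm{Tr}[i\tilde{O}_1 AG]=\mathrm{Tr}[i\tilde{O}_1 GA]$ under the same commutation, so the $\sin\theta$ term vanishes. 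Hence $\mathrm{Tr}[\tilde{O}_1 VAV^{\dagger}]=\mathrm{Tr}[\tilde{O}_1 A]$ holds \emph{pointwise} in $\theta$, and consequently $\frac{\partial}{\partial\theta}\mathrm{Tr}[\tilde{O}_1 VAV^{\dagger}]=0$ identically; the same holds with $\tilde{O}_2$.

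With these two pointwise facts in hand, three of the four claims follow with essentially no further work. For Eq.~\eqref{firstpowercorollary3}, the integrand is the product of the $\theta$-independent constants $\mathrm{Tr}[\tilde{O}_1 A]$ and $\mathrm{Tr}[\tilde{O}_2 A]$, so the expectation equals their product directly. For Eq.~\eqref{firstpowercorollary4}, one factor is identically zero, so the product vanishes pointwise and hence so does its expectation; the identical reasoning settles Eq.~\eqref{firstpowercorollary2}, where $\frac{\partial}{\partial\theta}\mathrm{Tr}[\tilde{O}_1 VAV^{\dagger}]\equiv 0$ kills the integrand regardless of the anti-commuting factor.

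The only identity that genuinely invokes the averaging is Eq.~\eqref{firstpowercorollary1}. Here I would pull the constant $\mathrm{Tr}[\tilde{O}_1 VAV^{\dagger}]=\mathrm{Tr}[\tilde{O}_1 A]$ out of the expectation, leaving $\mathrm{Tr}[\tilde{O}_1 A]\,\mathbb{E}_{\theta}\mathrm{Tr}[OVAV^{\dagger}]$, and then apply Eq.~\eqref{Efirstpower1} of Lemma~\ref{firstpowerlemma} to evaluate the remaining expectation as $\gamma\,\mathrm{Tr}[OA]$, which yields the stated right-hand side $\gamma\,\mathrm{Tr}[OA]\mathrm{Tr}[\tilde{O}_1 A]$.

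I do not anticipate a real obstacle: the entire corollary hinges on the single fact that a commuting operator produces a $\theta$-constant trace. The one point requiring care is to verify this as a pointwise statement (rather than merely in expectation) directly from Eq.~\eqref{key0}, since it is exactly this stronger property that licenses factoring the commuting trace out of the expectation in Eqs.~\eqref{firstpowercorollary1} and \eqref{firstpowercorollary3}, and pointwise-annihilating the integrands in Eqs.~\eqref{firstpowercorollary2} and \eqref{firstpowercorollary4}.
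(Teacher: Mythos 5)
Your proposal is correct and takes essentially the approach the paper intends: the paper presents this corollary as an immediate consequence of Lemma~\ref{firstpowerlemma}, and your argument is precisely the expansion of that claim, using Eq.~\eqref{key0} to show that the commuting-operator trace equals $\mathrm{Tr}[\tilde{O}_1 A]$ pointwise in $\theta$ and then factoring it out of (or annihilating) each expectation, with Eq.~\eqref{Efirstpower1} handling the one remaining average. Your emphasis on verifying the \emph{pointwise} constancy, rather than relying on the in-expectation statements \eqref{Efirstpower3} and \eqref{Efirstpower4} alone, is exactly the detail that makes the product identities legitimate, since an expectation of a product does not factor in general; the paper leaves this implicit.
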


\begin{lemma}\label{secondpowerlemma}
Let $A$ be an arbitrary linear operator,  $G$ be a Hermitian unitary and $V=e^{-i\frac{\theta}{2}G}$. Let $O_{1}$ and $O_{2}$ be arbitrary Hermitian operators that anti-commute with $G$. Then,
	\begin{align}
		\label{secondpowerlemma1}
		\mathop{\mathbb{E}}\limits_{\theta}\mathrm{Tr}\left[ O_{1}VAV^{\dagger}\right]  \mathrm{Tr}\left[ O_{2}VAV^{\dagger}\right]&=\,\alpha\,\mathrm{Tr}\left[ O_{1}A\right] \mathrm{Tr}\left[ {O}_{2}A\right]+\beta\, \mathrm{Tr}\left[ iGO_{1}A\right] \mathrm{Tr}\left[ iG{O}_{2}A\right],
		\\
		\label{secondpowerlemma2}
		\mathop{\mathbb{E}}\limits_{\theta}\frac{\partial}{\partial\theta}\mathrm{Tr}\left[ O_{1}VAV^{\dagger}\right] \cdot \frac{\partial}{\partial\theta}\mathrm{Tr}\left[ O_{2}VAV^{\dagger}\right]
		&=\,\beta\, \mathrm{Tr}\left[ O_{1}A\right] \mathrm{Tr}\left[ {O}_{2}A\right]+\alpha\, \mathrm{Tr}\left[ iGO_{1}A\right] \mathrm{Tr}\left[ iG{O}_{2}A\right].
		\end{align}
	\end{lemma}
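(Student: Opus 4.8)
The plan is to reuse the explicit linearizations already derived inside the proof of Lemma~\ref{firstpowerlemma}, so that the whole argument reduces to a bilinear expansion followed by a few elementary moment computations. Since both $O_1$ and $O_2$ anti-commute with $G$, Eq.~\eqref{key01} applies to each of them, giving
\begin{equation}\nonumber
\mathrm{Tr}\left[ O_k VAV^{\dagger}\right]=\cos{\theta}\,\mathrm{Tr}\left[ O_k A\right] + \sin{\theta}\,\mathrm{Tr}\left[ iGO_k A\right],\qquad k=1,2.
\end{equation}
First I would multiply these two scalar expressions; the product splits into a $\cos^{2}\theta$ term, a $\sin^{2}\theta$ term, and a single cross term proportional to $\sin\theta\cos\theta$, each carrying a $\theta$-independent product of traces.

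The only genuine computation is evaluating the three trigonometric moments under $\theta\sim\mathcal{U}[-a\pi,a\pi]$. By oddness over the symmetric interval, $\mathbb{E}_{\theta}[\sin\theta\cos\theta]=\tfrac12\mathbb{E}_{\theta}[\sin 2\theta]=0$, so the cross term drops out entirely. For the remaining two, I would use $\mathbb{E}_{\theta}[\cos 2\theta]=\frac{\sin 2a\pi}{2a\pi}$ to obtain $\mathbb{E}_{\theta}[\cos^{2}\theta]=\tfrac12\bigl(1+\tfrac{\sin 2a\pi}{2a\pi}\bigr)=\alpha$ and $\mathbb{E}_{\theta}[\sin^{2}\theta]=\tfrac12\bigl(1-\tfrac{\sin 2a\pi}{2a\pi}\bigr)=\beta$. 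Substituting these moments into the expanded product yields Eq.~\eqref{secondpowerlemma1} immediately.

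For Eq.~\eqref{secondpowerlemma2} I would run the identical argument with the derivative expression Eq.~\eqref{key02}, namely $\frac{\partial}{\partial\theta}\mathrm{Tr}[O_k VAV^{\dagger}]=-\sin\theta\,\mathrm{Tr}[O_k A]+\cos\theta\,\mathrm{Tr}[iGO_k A]$, in place of Eq.~\eqref{key01}. Now the $\mathrm{Tr}[O_k A]$ factors come paired with $\sin^{2}\theta$ and the $\mathrm{Tr}[iGO_k A]$ factors with $\cos^{2}\theta$, so taking expectations simply interchanges the roles of $\alpha$ and $\beta$ relative to the previous case, producing Eq.~\eqref{secondpowerlemma2}; the cross term again vanishes by the same symmetry.

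I do not expect any serious obstacle here: the content is exactly the bilinear expansion plus the two nontrivial moments $\alpha$ and $\beta$. The only point needing care is the Hermiticity bookkeeping—$iGO_k$ is Hermitian precisely because $O_k$ anti-commutes with the Hermitian unitary $G$—which is what both justifies invoking Eq.~\eqref{key01} and guarantees that the trace factors appearing on the right-hand side are the intended real quantities.
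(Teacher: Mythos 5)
Your proposal is correct and follows essentially the same route as the paper's own proof: both apply the linearizations \eqref{key01} and \eqref{key02} to each $O_k$, expand the product bilinearly into $\cos^2\theta$, $\sin^2\theta$, and $\sin\theta\cos\theta$ terms, and conclude via the moments $\mathop{\mathbb{E}}\limits_{\theta}\cos^{2}\theta=\alpha$, $\mathop{\mathbb{E}}\limits_{\theta}\sin^{2}\theta=\beta$, $\mathop{\mathbb{E}}\limits_{\theta}\sin\theta\cos\theta=0$. The only difference is that you spell out the elementary moment integrals, which the paper simply states.
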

\begin{proof}
According to Eqs.~\eqref{key01} and \eqref{key02}, we have
		\begin{align}
		\begin{split}\label{secondpower1}
		\mathrm{Tr}\left[ O_{1}VAV^{\dagger}\right] \cdot \mathrm{Tr}\left[ O_{2}VAV^{\dagger}\right]
		=&\cos^{2}{\theta}\mathrm{Tr}\left[ O_{1}A\right] \mathrm{Tr}\left[ {O}_{2}A\right]
		+\sin^{2}{\theta} \mathrm{Tr}\left[ iGO_{1}A\right] \mathrm{Tr}\left[ iG{O}_{2}A\right]
		\\
		&+\sin{\theta}\cos{\theta}\left\lbrace\mathrm{Tr}\left[ O_{1}A\right]\mathrm{Tr}\left[ iG{O}_{2}A\right]+\mathrm{Tr}\left[ {O}_{2}A\right]\mathrm{Tr}\left[ iGO_{1}A\right]\right\rbrace,
		\end{split}
		\\
		\begin{split}\label{secondpower2}
		\frac{\partial}{\partial\theta}\mathrm{Tr}\left[ O_{1}VAV^{\dagger}\right] \cdot \frac{\partial}{\partial\theta}\mathrm{Tr}\left[ O_{2}VAV^{\dagger}\right]
		=&\sin^{2}\theta\mathrm{Tr}\left[ O_{1}A\right] \mathrm{Tr}\left[ {O}_{2}A\right]
		+\cos^{2}\theta \mathrm{Tr}\left[ iGO_{1}A\right] \mathrm{Tr}\left[ iG{O}_{2}A\right]
		\\
		&-\sin{\theta}\cos{\theta}\left\lbrace\mathrm{Tr}\left[ O_{1}A\right]\mathrm{Tr}\left[ iG{O}_{2}A\right]+\mathrm{Tr}\left[ {O}_{2}A\right]\mathrm{Tr}\left[ iGO_{1}A\right] \right\rbrace.
		\end{split}
		\end{align}
Then, Eqs.~\eqref{secondpowerlemma1} and \eqref{secondpowerlemma2} can be derived by taking the expectations of Eqs.~\eqref{secondpower1} and \eqref{secondpower2} and employing	
	\begin{equation}\nonumber
		\mathop{\mathbb{E}}\limits_{\theta}\cos^{2}{\theta}=\alpha,\
		\mathop{\mathbb{E}}\limits_{\theta}\sin^{2}{\theta}=\beta,\
		\mathop{\mathbb{E}}\limits_{\theta}\sin{\theta}\cos{\theta}=0.
		\end{equation}
	\end{proof}
It follows that the following conclusion holds.

\begin{corollary}\label{secondpowercorollary}
Let $A$ be an arbitrary linear operator,  $G$ be a Hermitian unitary and $V=e^{-i\frac{\theta}{2}G}$. Let $O$ be an arbitrary Hermitian operator that anti-commutes with $G$. Then,
	\begin{align}
		\mathop{\mathbb{E}}\limits_{\theta}{\mathrm{Tr}^{2}\left[ OVAV^{\dagger}\right]}&=\alpha\, {\mathrm{Tr}^{2}\left[ OA\right]}+\beta\, {\mathrm{Tr}^{2}\left[ iGOA\right]},\label{secondpowercorollary1}
		\\
		\mathop{\mathbb{E}}\limits_{\theta}\left( \frac{\partial}{\partial\theta}{\mathrm{Tr}\left[ OVAV^{\dagger}\right]}\right) ^{2}&=\beta\, \mathrm{Tr}^{2}\left[ OA\right]+\alpha\, {\mathrm{Tr}^{2}\left[ iGOA\right]},\label{secondpowercorollary2}
		\end{align}
		where $\mathrm{Tr}^{2}[\cdot]=(\mathrm{Tr}[\cdot])^2.$
	\end{corollary}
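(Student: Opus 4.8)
The plan is to recognize that Corollary~\ref{secondpowercorollary} is simply the diagonal special case of Lemma~\ref{secondpowerlemma}, obtained by collapsing the two operators $O_1$ and $O_2$ into a single operator $O$. Since the lemma holds for \emph{arbitrary} Hermitian operators $O_1, O_2$ that anti-commute with $G$, and the operator $O$ in the corollary also anti-commutes with $G$ by hypothesis, the substitution $O_1 = O_2 = O$ is legitimate and requires no new assumptions.

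First I would set $O_1 = O_2 = O$ in Eq.~\eqref{secondpowerlemma1}. The left-hand side $\mathop{\mathbb{E}}_{\theta}\mathrm{Tr}\left[O_1 VAV^{\dagger}\right]\mathrm{Tr}\left[O_2 VAV^{\dagger}\right]$ becomes $\mathop{\mathbb{E}}_{\theta}\mathrm{Tr}^{2}\left[OVAV^{\dagger}\right]$, while on the right-hand side the products $\mathrm{Tr}[O_1 A]\,\mathrm{Tr}[O_2 A]$ and $\mathrm{Tr}[iGO_1 A]\,\mathrm{Tr}[iGO_2 A]$ collapse to $\mathrm{Tr}^{2}[OA]$ and $\mathrm{Tr}^{2}[iGOA]$ respectively. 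This yields exactly Eq.~\eqref{secondpowercorollary1}.

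Next I would perform the identical substitution in Eq.~\eqref{secondpowerlemma2}, so that the left-hand side reads $\mathop{\mathbb{E}}_{\theta}\left(\frac{\partial}{\partial\theta}\mathrm{Tr}\left[OVAV^{\dagger}\right]\right)^{2}$ and the right-hand side becomes $\beta\,\mathrm{Tr}^{2}[OA] + \alpha\,\mathrm{Tr}^{2}[iGOA]$, which is precisely Eq.~\eqref{secondpowercorollary2}. Since the whole argument is a specialization of an already-established identity, there is no genuine obstacle here; the only point worth checking is that the single operator $O$ satisfies the hypotheses imposed separately on $O_1$ and $O_2$ in the lemma, and it does by assumption, so no additional computation is needed.
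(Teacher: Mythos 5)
Your proposal is correct and matches the paper's own route exactly: the paper introduces the corollary with ``It follows that the following conclusion holds,'' i.e., it is obtained precisely by the specialization $O_1 = O_2 = O$ in Lemma~\ref{secondpowerlemma}, which is what you do. Nothing is missing, since $O$ satisfies the anti-commutation hypothesis required of both $O_1$ and $O_2$.
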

	
The following lemma concerns the transformations of two-qubit Pauli tensor products after a controlled-$Z$ gate.
	\begin{lemma}\label{CZlemma}
		Let ${CZ}$ denote a controlled-$Z$ gate, and $\sigma_i\otimes\sigma_j$ be a two-qubit Pauli tensor product, where $\sigma_i$ and $\sigma_j$ are Pauli matrices.  If $\sigma_{i^{\prime}}\otimes\sigma_{j^{\prime}}=CZ^{\dagger} (\sigma_i\otimes \sigma_j)CZ$, we denote the transformation by
		$ \sigma_i\otimes \sigma_j \rightarrow \sigma_{i^{\prime}}\otimes\sigma_{j^{\prime}}$. All of the specific transformations can be summarized as follows:
		\begin{align}
			&Z \otimes I \leftrightarrow Z \otimes I,\ \  Z \otimes Z \leftrightarrow Z \otimes Z, \nonumber\\
			&Y \otimes I \leftrightarrow Y \otimes Z,\ \  Y \otimes Y \leftrightarrow X \otimes X, \nonumber\\
			&X \otimes I \leftrightarrow X \otimes Z,\ \  X \otimes Y \leftrightarrow -Y \otimes X.\nonumber
		\end{align}
	\end{lemma}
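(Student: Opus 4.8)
The plan is to exploit the fact that $CZ$ is a Clifford gate, so that conjugation by it maps the Pauli group into itself and is completely determined by its action on a generating set. First I would record the two structural facts that make the computation short. The matrix displayed in Section~\ref{notation} shows that $CZ=\mathrm{diag}(1,1,1,-1)$ is both Hermitian and unitary, hence $CZ^{\dagger}=CZ$ and $CZ^{2}=I$; and consequently the conjugation map $\Phi(P):=CZ\,P\,CZ$ is multiplicative, since
\begin{equation}\nonumber
\Phi(P)\Phi(Q)=CZ\,P\,(CZ\,CZ)\,Q\,CZ=CZ\,PQ\,CZ=\Phi(PQ).
\end{equation}
Because $CZ^{2}=I$ also gives $\Phi^{2}=\mathrm{id}$, the map $\Phi$ is an involution, which already justifies recording every transformation with the symmetric arrow $\leftrightarrow$.

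Next I would determine $\Phi$ on the four generators of the 2-qubit Pauli group. Writing $CZ=|0\rangle\langle0|\otimes I+|1\rangle\langle1|\otimes Z$, a direct computation yields $\Phi(X\otimes I)=X\otimes Z$, $\Phi(I\otimes X)=Z\otimes X$, $\Phi(Z\otimes I)=Z\otimes I$, and $\Phi(I\otimes Z)=I\otimes Z$. Every remaining entry of the table then follows by multiplicativity together with the single-qubit identity $Y=iXZ$. For instance, $\Phi(Y\otimes I)=i\,\Phi(X\otimes I)\Phi(Z\otimes I)=i\,(X\otimes Z)(Z\otimes I)=Y\otimes Z$; likewise $\Phi(X\otimes Y)=(X\otimes Z)(Z\otimes Y)=(XZ)\otimes(ZY)=(-iY)\otimes(-iX)=-Y\otimes X$, and $\Phi(Y\otimes Y)=(Y\otimes Z)(Z\otimes Y)=(YZ)\otimes(ZY)=(iX)\otimes(-iX)=X\otimes X$. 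The fixed cases $Z\otimes I$ and $Z\otimes Z$ are immediate, the latter as the product of two fixed generators, and $X\otimes I\to X\otimes Z$ is a generator rule itself.

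The computation carries no genuine obstacle, being a finite verification, so the only place requiring care is the sign bookkeeping when rewriting $Y$ in terms of $X$ and $Z$ on each factor. The relevant single-qubit products $XZ=-iY$, $ZX=iY$, $YZ=iX$, and $ZY=-iX$ must be applied consistently, and it is precisely these phases that produce the lone minus sign in $X\otimes Y\leftrightarrow -Y\otimes X$. As a cross-check I would, for at least one nontrivial case, recompute $CZ\,(\sigma_i\otimes\sigma_j)\,CZ$ by explicit $4\times4$ matrix multiplication to confirm the phase. Finally, I would note that the displayed table is invariant under exchanging the two tensor factors, reflecting the symmetry of $CZ$ under swapping its control and target qubits, so the six listed pairs already generate the full set of transformations needed in the subsequent analysis of the entanglement layers.
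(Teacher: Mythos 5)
Your proposal is correct, and every computation in it checks out (the generator images $\Phi(X\otimes I)=X\otimes Z$, $\Phi(Z\otimes I)=Z\otimes I$, the single-qubit phase identities $XZ=-iY$, $ZX=iY$, $YZ=iX$, $ZY=-iX$, and the resulting signs in $X\otimes Y\leftrightarrow -Y\otimes X$ and $Y\otimes Y\leftrightarrow X\otimes X$ are all right). Note, however, that the paper offers no proof of Lemma~\ref{CZlemma} at all: it is stated as a bare fact, implicitly justified by direct $4\times 4$ matrix verification, since $\mathrm{CZ}$ being Clifford makes the claim standard. So there is no paper argument to compare against; what your write-up adds is structure. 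By observing that conjugation $\Phi(P)=\mathrm{CZ}\,P\,\mathrm{CZ}$ is a multiplicative involution, you reduce the fifteen nontrivial cases to four generator computations plus Pauli algebra, and your closing remark about invariance under exchanging the two tensor factors (reflecting the control--target symmetry of $\mathrm{CZ}$) explains why the six listed pairs, read together with $\leftrightarrow$, genuinely exhaust all cases needed in the analysis of the entanglement layers. This is a cleaner and more illuminating route than brute-force matrix multiplication, and it would be a reasonable proof to include were the paper to supply one.
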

	
\section{\label{proofTH1}Proof of Lemma~\ref{TH1}}
\begin{proof}
For the  Hamiltonian of Eq.~\eqref{toyH}, the average of the squared norm of the gradient can be described as
	\begin{align}\nonumber
		\mathop{\mathbb{E}}\limits_{\boldsymbol{\theta}}\Vert\nabla_{\boldsymbol{\theta}}{C}\Vert_{2}^{2}
		&=\sum_{q=1}^{2L}{\sum_{n=1}^{N}{ \mathop{\mathbb{E}}\limits_{\boldsymbol{\theta}}\left(\frac{\partial C}{\partial{\theta_{q, n}}}\right)^{2}}}
		\\
		&=\sum_{q=1}^{2L} {\sum_{n=1}^{N} {\mathop{\mathbb{E}}\limits_{\boldsymbol{\theta}}\left(\sum_{i=1}^{N-1}{\frac{\partial C_{i}}{\partial{\theta_{q, n}}}}\right)^{2}}}\nonumber
		\\
		\label{totalsum}
		&=\sum_{i=1}^{N-1} {\sum_{q=1}^{2L} {\sum_{n=1}^{N} {\mathop{\mathbb{E}}\limits_{\boldsymbol{\theta}}\left(\frac{\partial C_{i}}{\partial{\theta_{q, n}}}\right)^{2}}}}+\sum_{i\ne j=1}^{N-1} {\sum_{q=1}^{2L} {\sum_{n=1}^{N} {\mathop{\mathbb{E}}\limits_{\boldsymbol{\theta}}\left(\frac{\partial C_{i}}{\partial{\theta_{q, n}}}\frac{\partial C_{j}}{\partial{\theta_{q, n}}}\right)}}},
		\end{align}
where $C_{i}= \mathrm{Tr}\left[ Z_{i}Z_{i+1}U( \boldsymbol{\theta} )\rho U^{\dagger}( \boldsymbol{\theta} )\right]$.
Thus, we turn to analyze the lower bound of  Eq.~\eqref{totalsum}.
		
For clarity, we first introduce notation. As illustrated in Fig.~\ref{fig:setup}(a), the $L$-layer parameterized unitary $U( \boldsymbol{\theta} )$ can be expressed as the product of $L$ sequential unitaries
	\begin{equation}\nonumber
		U( \boldsymbol{\theta} )=U_{L}( \boldsymbol{\theta}_{2L},\boldsymbol{\theta}_{2L-1}) \cdots U_{1}( \boldsymbol{\theta}_{2},\boldsymbol{\theta}_{1} ),
		\end{equation}
with
		\begin{align*}
		U_{l}( \boldsymbol{\theta}_{2l},\boldsymbol{\theta}_{2l-1})&=R_{2l}(\boldsymbol{\theta}_{2l})R_{2l-1}(\boldsymbol{\theta}_{2l-1})\mathrm{CZ}_{l},
		\\
		R_{2l}(\boldsymbol{\theta}_{2l})&=e^{- \frac{i}{2}\theta_{2l,1}Y}\otimes\cdots\otimes e^{- \frac{i}{2}\theta_{2l,N}Y},
		\\
		R_{2l-1}(\boldsymbol{\theta}_{2l-1})&=e^{- \frac{i}{2}\theta_{2l-1,1}X}\otimes\cdots\otimes e^{- \frac{i}{2}\theta_{2l-1,N}X}.
		\end{align*}
Here, for  $l \in \left[L\right]$, $\mathrm{CZ}_{l}$ denotes the $l$th entanglement layer composed of arbitrarily many $\mathrm{CZ}$ gates, and $R_{2l}(\boldsymbol{\theta}_{2l})$ and  $R_{2l-1}(\boldsymbol{\theta}_{2l-1})$ denote the $l$th  $R_{Y}$ and $R_{X}$ rotation layer, respectively.  Denote $\rho_{0} = \left| 0\right\rangle\left\langle 0\right|$, and for $l \in \left[L\right]$, let

\begin{equation*}
    \left\{
\begin{aligned}
	&\rho_{2l} = R_{2l}(\boldsymbol{\theta}_{2l})\rho_{2l-1} R_{2l}^{\dagger}(\boldsymbol{\theta}_{2l}), \\
		&\rho_{2l-1} = R_{2l-1}(\boldsymbol{\theta}_{2l-1})\mathrm{CZ}_{l}\rho_{2l-2}\mathrm{CZ}^{\dagger}_{l} R_{2l-1}^{\dagger}(\boldsymbol{\theta}_{2l-1}).
	\end{aligned}
		\right.
\end{equation*}	
		
Now we evaluate the lower bound of the first summation in Eq.~\eqref{totalsum}.  Note that
		\begin{align}
		\mathop{\mathbb{E}}\limits_{\boldsymbol{\theta}}\left(\frac{\partial C_{i}}{\partial{\theta_{q, n}}}\right)^{2}
		&=\mathop{\mathbb{E}}\limits_{\boldsymbol{\theta}}\left(\frac{\partial }{\partial{\theta_{q, n}}}\mathrm{Tr}\left[ Z_{i}Z_{i+1}U( \boldsymbol{\theta} )\rho_{0} U^{\dagger}( \boldsymbol{\theta} )\right]\right)^{2}\nonumber
		\\
		&=\mathop{\mathbb{E}}\limits_{\boldsymbol{\theta}}\left(\frac{\partial }{\partial{\theta_{q, n}}}\mathrm{Tr}\left[ Z_{i}Z_{i+1}\rho_{2L}\right]\right)^{2}\nonumber
		\\
		&=\mathop{\mathbb{E}}\limits_{\boldsymbol{\theta}_{[2L-1]}}\mathop{\mathbb{E}}\limits_{\boldsymbol{\theta}_{2L}}\left(\frac{\partial }{\partial{\theta_{q, n}}}\mathrm{Tr}\left[ Z_{i}Z_{i+1}R_{2L}(\boldsymbol{\theta}_{2L})\rho_{2L-1} R_{2L}^{\dagger}(\boldsymbol{\theta}_{2L})\right]\right)^{2}\label{partial&Tr}
		\\
		&=\mathop{\mathbb{E}}\limits_{\boldsymbol{\theta}_{[2L-1]}}\mathop{\mathbb{E}}\limits_{\boldsymbol{\theta}_{2L}}{\mathrm{Tr}^{2}\left[ Z_{i}Z_{i+1}R_{2L}(\boldsymbol{\theta}_{2L})\frac{\partial {\rho_{2L-1}} }{\partial{\theta_{q, n}}} R_{2L}^{\dagger}(\boldsymbol{\theta}_{2L})\right]}\label{Tr&partial}
		\\
		&\geq\alpha^{2}\mathop{\mathbb{E}}\limits_{\boldsymbol{\theta}_{[2L-1]}}{\mathrm{Tr}^{2}\left[ Z_{i}Z_{i+1}\frac{\partial{\rho_{2L-1}} }{\partial{\theta_{q, n}}} \right]}\label{R2L}
		\\
		&=\alpha^{2}\mathop{\mathbb{E}}\limits_{\boldsymbol{\theta}_{[2L-2]}}\mathop{\mathbb{E}}\limits_{\boldsymbol{\theta}_{2L-1}}{\mathrm{Tr}^{2}\left[ Z_{i}Z_{i+1}R_{2L-1}(\boldsymbol{\theta}_{2L-1})\mathrm{CZ}_{L}\frac{\partial{\rho_{2L-2}}}{\partial{\theta_{q, n}}}\mathrm{CZ}^{\dagger}_{L} R_{2L-1}^{\dagger}(\boldsymbol{\theta}_{2L-1}) \right]}\nonumber
		\\
		&\geq\alpha^{4}\mathop{\mathbb{E}}\limits_{\boldsymbol{\theta}_{[2L-2]}}\left(\frac{\partial}{\partial{\theta_{q, n}}}\mathrm{Tr}\left[ \mathrm{CZ}^{\dagger}_{L} Z_{i}Z_{i+1}\mathrm{CZ}_{L}{\rho_{2L-2}} \right]\right)^{2}\label{R2L-1}
		\\
		&=\alpha^{4}\mathop{\mathbb{E}}\limits_{\boldsymbol{\theta}_{[2L-2]}}\left(\frac{\partial}{\partial{\theta_{q, n}}}\mathrm{Tr}\left[ Z_{i}Z_{i+1}{\rho_{2L-2}} \right]\right)^{2}\label{CZL}.
		\end{align}
Here,  Eq.~\eqref{Tr&partial} is due to the linearity of the trace operation, which is frequently utilized in the proof. Eq.~\eqref{R2L} is obtained by first employing Eq.~\eqref{secondpowercorollary1} to take the expectations over $\theta_{2L,i}$ and $ \theta_{2L,i+1}$ producing the coefficient $\alpha^2$, and then leveraging  Eq.~\eqref{firstpowercorollary3} to take the expectations over the  other  $N-2$ components in $\boldsymbol{\theta}_{2L}$.  Following a similar analysis,  Eq.~\eqref{R2L-1}  can be derived. Eq.~\eqref{CZL} is due to Lemma~\ref{CZlemma}, that is, the tensor product $Z_{i}Z_{i+1}$ remains unchanged after applying arbitrarily many $\mathrm{CZ}$ gates.
				
We repeat the techniques for deriving Eqs.~\eqref{partial&Tr}-\eqref{CZL} to calculate the expectations over $\boldsymbol{\theta}_{2L-1}, \dots, \boldsymbol{\theta}_{q+1}$ and obtain
		\begin{equation}\label{R2L-1toq}
		\mathop{\mathbb{E}}\limits_{\boldsymbol{\theta}}\left(\frac{\partial C_{i}}{\partial{\theta_{q, n}}}\right)^{2}\geq\alpha^{2(2L-q)}\mathop{\mathbb{E}}\limits_{\boldsymbol{\theta}_{[q]}}\left(\frac{\partial}{\partial{\theta_{q, n}}}\mathrm{Tr}\left[ Z_{i}Z_{i+1}{\rho_{q}} \right]\right)^{2}.
		\end{equation}
		
Next, we proceed upon the index $n$, the ordinal number of the qubit where the trainable parameter $\theta_{q, n}$ lies. In the case where $n\in \left\lbrace i, i+1\right\rbrace $, we have
		\begin{align}
		\mathop{\mathbb{E}}\limits_{\boldsymbol{\theta}}\left(\frac{\partial C_{i}}{\partial{\theta_{q, n}}}\right)^{2}
		&\geq\alpha^{2(2L-q)}\cdot\alpha\beta\mathop{\mathbb{E}}\limits_{\boldsymbol{\theta}_{[q-1]}}{\mathrm{Tr}^{2}\left[ Z_{i}Z_{i+1}{\rho_{q-1}} \right]}\label{Rq}
		\\
		&\geq\alpha^{2(2L-q)+1}\beta\cdot\alpha^{2(q-1)}\,{\mathrm{Tr}^{2}\left[ Z_{i}Z_{i+1}{\rho_{0}} \right]}\label{R1toq-1},
		\end{align}
where $\beta$ in Eq.~\eqref{Rq} is obtained by employing  Eq.~\eqref{secondpowercorollary2} to take the expectation over $\theta_{q,n}$. It is worth pointing out that to persist the observable $Z_{i}Z_{i+1}$ in the following analysis,  we keep the item with coefficient $\beta$ rather than $\alpha$ in the inequality.  As for the other $N-1$ components in $\boldsymbol{\theta}_{q}$, the analysis is the same as that in Eq.~\eqref{R2L} or \eqref{R2L-1}. Following a similar derivation to Eq.~\eqref{R2L-1toq}, Eq.~\eqref{R1toq-1} can be obtained.
		
In the case where $n\notin \left\lbrace i, i+1\right\rbrace $, when taking the expectation over $ \theta_{q, n}$, by Eq.~\eqref{firstpowercorollary4} we have
		\begin{equation}\label{key18}
		\mathop{\mathbb{E}}\limits_{\boldsymbol{\theta}_{[q]}}\left(\frac{\partial}{\partial{\theta_{q, n}}}\mathrm{Tr}\left[ Z_{i}Z_{i+1}{\rho_{q}} \right]\right)^{2} = 0.
		\end{equation}
		
Therefore, combining Eq.~\eqref{R1toq-1}  and Eq.~\eqref{key18}, it can be seen that
		\begin{equation}\label{71}
		\mathop{\mathbb{E}}\limits_{\boldsymbol{\theta}}\left(\frac{\partial C_{i}}{\partial{\theta_{q, n}}}\right)^{2} \geq \left\{
		\begin{aligned}
		& \alpha^{4L-1}\beta, \  n\in \left\lbrace i, i+1\right\rbrace, \\
		& 0, \quad\qquad\,  \mathrm{otherwise}.
		\end{aligned}
		\right.
		\end{equation}
and the first summation in  Eq.~\eqref{totalsum} is lower bounded by
		\begin{equation}\label{quadsum}
		\sum_{i=1}^{N-1} {\sum_{q=1}^{2L} {\sum_{n=1}^{N} {\mathop{\mathbb{E}}\limits_{\boldsymbol{\theta}}\left(\frac{\partial C_{i}}{\partial{\theta_{q, n}}}\right)^{2}}}}\geq (N-1)\times 2L \times 2\alpha^{4L-1}\beta.
		\end{equation}
		
		Then, we analyze the second summation in Eq.~\eqref{totalsum}. Note that
		\begin{equation}\nonumber
		\mathop{\mathbb{E}}\limits_{\boldsymbol{\theta}}\left(\frac{\partial C_{i}}{\partial{\theta_{q, n}}}\frac{\partial C_{j}}{\partial{\theta_{q, n}}}\right)=\mathop{\mathbb{E}}\limits_{\boldsymbol{\theta}}\left(\frac{\partial }{\partial{\theta_{q, n}}}\mathrm{Tr}\left[ Z_{i}Z_{i+1}\rho_{2L}\right]\frac{\partial }{\partial{\theta_{q, n}}}\mathrm{Tr}\left[ Z_{j}Z_{j+1}\rho_{2L}\right]\right).
		\end{equation}
		
We first consider the case where $\left\lbrace i, i+1\right\rbrace \cap \left\lbrace j, j+1\right\rbrace = \emptyset $.
		\begin{align}
		\label{0cross2L}
		\mathop{\mathbb{E}}\limits_{\boldsymbol{\theta}}\left(\frac{\partial C_{i}}{\partial{\theta_{q, n}}}\frac{\partial C_{j}}{\partial{\theta_{q, n}}}\right)&=\gamma^{4(2L-q)}
		\mathop{\mathbb{E}}\limits_{\boldsymbol{\theta}_{[q]}}\left(\frac{\partial }{\partial{\theta_{q, n}}}\mathrm{Tr}\left[ Z_{i}Z_{i+1}\rho_{q}\right]\frac{\partial }{\partial{\theta_{q, n}}}\mathrm{Tr}\left[ Z_{j}Z_{j+1}\rho_{q}\right]\right)
		\\&=0\label{0cross}.	
		\end{align}
In regard to the expectation over $\boldsymbol{\theta}_{2L}$ in Eq.~\eqref{0cross2L}, the coefficient $\gamma$ appears each time when we take the expectation over $\theta_{2L,i}$, $ \theta_{2L,i+1} $, $\theta_{2L,j}$ or $ \theta_{2L,j+1}$ by employing Eq.~\eqref{firstpowercorollary1}. In addition, Eq.~\eqref{firstpowercorollary3} is utilized to calculate the expectations over the other components in $\boldsymbol{\theta}_{2L}$. Similarly, we can take the expectations over $\boldsymbol{\theta}_{2L-1},\dots,\boldsymbol{\theta}_{q+1}$, during which the calculation of each $\mathrm{CZ}_l$ layer is the same as that in Eq.~\eqref{CZL}. When taking the expectation over $ \theta_{q, n} $, Eq.~\eqref{0cross} can be obtained by employing either Eq.~\eqref{firstpowercorollary2} for $n\in \left\lbrace i, i+1, j, j+1\right\rbrace$ or Eq.~\eqref{firstpowercorollary4}, otherwise.
		
Next, we consider the case where  $i=j+1$ or $i+1=j$. Let us consider the case $i+1=j$ in detail, and the other case can be followed in a similar way. Note that
		\begin{align}
		\mathop{\mathbb{E}}\limits_{\boldsymbol{\theta}}\left(\frac{\partial C_{i}}{\partial{\theta_{q, n}}}\frac{\partial C_{i+1}}{\partial{\theta_{q, n}}}\right)=&\mathop{\mathbb{E}}\limits_{\boldsymbol{\theta}_{[2L]}}\left(\frac{\partial }{\partial{\theta_{q, n}}}\mathrm{Tr}\left[ Z_{i}Z_{i+1}\rho_{2L}\right]\frac{\partial }{\partial{\theta_{q, n}}}\mathrm{Tr}\left[ Z_{i+1}Z_{i+2}\rho_{2L}\right]\right)\label{1crossoriginal}
		\\
		\begin{split}\label{1cross2L}
		=&\gamma^{2}\alpha\mathop{\mathbb{E}}\limits_{\boldsymbol{\theta}_{[2L-1]}}\left(\frac{\partial }{\partial{\theta_{q, n}}}\mathrm{Tr}\left[ Z_{i}Z_{i+1}\rho_{2L-1}\right]\frac{\partial }{\partial{\theta_{q, n}}}\mathrm{Tr}\left[ Z_{i+1}Z_{i+2}\rho_{2L-1}\right]\right)
		\\
		&+\gamma^{2}\beta\mathop{\mathbb{E}}\limits_{\boldsymbol{\theta}_{[2L-1]}}\left(\frac{\partial }{\partial{\theta_{q, n}}}\mathrm{Tr}\left[ Z_{i}X_{i+1}\rho_{2L-1}\right]\frac{\partial }{\partial{\theta_{q, n}}}\mathrm{Tr}\left[ X_{i+1}Z_{i+2}\rho_{2L-1}\right]\right)
		\end{split}
		\\
		\begin{split}\label{1cross2L-1}
		=&\gamma^{4}\alpha^{2}\mathop{\mathbb{E}}\limits_{\boldsymbol{\theta}_{[2L-2]}}
		\left(
		\frac{\partial }{\partial{\theta_{q, n}}}\mathrm{Tr}\left[ \mathrm{CZ}^{\dagger}_{L}Z_{i}Z_{i+1}\mathrm{CZ}_{L}\rho_{2L-2}\right]\frac{\partial }{\partial{\theta_{q, n}}}\mathrm{Tr}\left[ \mathrm{CZ}^{\dagger}_{L}Z_{i+1}Z_{i+2}\mathrm{CZ}_{L}\rho_{2L-2}\right]
		\right)
		\\
		&+\gamma^{4}\alpha\beta\mathop{\mathbb{E}}\limits_{\boldsymbol{\theta}_{[2L-2]}}
		\left(
		\frac{\partial }{\partial{\theta_{q, n}}}\mathrm{Tr}\left[ \mathrm{CZ}^{\dagger}_{L}Z_{i}Y_{i+1}\mathrm{CZ}_{L}\rho_{2L-2}\right]\frac{\partial }{\partial{\theta_{q, n}}}\mathrm{Tr}\left[ \mathrm{CZ}^{\dagger}_{L}Y_{i+1}Z_{i+2}\mathrm{CZ}_{L}\rho_{2L-2}\right]
		\right)
		\\
		&+\gamma^{4}\beta\mathop{\mathbb{E}}\limits_{\boldsymbol{\theta}_{[2L-2]}}
		\left(
		\frac{\partial }{\partial{\theta_{q, n}}}\mathrm{Tr}\left[\mathrm{CZ}^{\dagger}_{L} Z_{i}X_{i+1}\mathrm{CZ}_{L}\rho_{2L-2}\right]\frac{\partial }{\partial{\theta_{q, n}}}\mathrm{Tr}\left[\mathrm{CZ}^{\dagger}_{L} X_{i+1}Z_{i+2}\mathrm{CZ}_{L}\rho_{2L-2}\right]
		\right).
		\end{split}
		\end{align}
Here, Eq.~\eqref{1cross2L} is obtained by taking the expectations over $\theta_{2L,i+1}$ by Eq.~\eqref{secondpowerlemma1}, $\theta_{2L,i}$ and $\theta_{2L,i+2}$ by Eq.~\eqref{firstpowercorollary1}, and the other $N-3$ components by Eq.~\eqref{firstpowercorollary3}. Similarly, Eq.~\eqref{1cross2L-1} can be derived.
		
It is worth pointing out that after employing Lemmas~\ref{firstpowerlemma} and \ref{secondpowerlemma} to calculate the expectations over the trainable parameters in rotation layers, except for  the terms related to the original observables $Z_iZ_{i+1}$ and $Z_{i+1}Z_{i+2}$, other terms related to new observables may be generated.  For example, after taking the expectation over $\boldsymbol{\theta}_{2L}$, the original  observable $Z_{i}Z_{i+1}$ in Eq.~\eqref{1crossoriginal}  generates two terms related to the observable $Z_{i}Z_{i+1}$ and $Z_{i}X_{i+1}$, respectively, in Eq.~\eqref{1cross2L}. It is clear that the generated new observables are always $N$-qubit Pauli strings as defined in Eq.~\eqref{paulitensor}, and the corresponding  coefficients are monomials of $\gamma$, $\alpha$ and $\beta$, which are always positive.  Moreover, according to Lemma~\ref{CZlemma}, after applying an entanglement layer composed of $\mathrm{CZ}$ gates, the generated observables are still $N$-qubit Pauli strings, where, for example, the observable $Z_iZ_{i+1}$ remains unchanged. Thus, whenever we finish the calculations with respect to a block, there are always  terms related to the original observables $Z_{i}Z_{i+1}$ and $Z_{i+1}Z_{i+2}$, and some other generated terms related to observables in the form of  $N$-qubit Pauli strings. We point out that all the generated terms are  non-negative, and we come back to its proof later. In the following, the terms with the original observables $Z_{i}Z_{i+1}$ and $Z_{i+1}Z_{i+2}$ are kept only for further analysis. Thus, we have
		\begin{align}
		\label{1cross2L-2}
		\mathop{\mathbb{E}}\limits_{\boldsymbol{\theta}}\left(\frac{\partial C_{i}}{\partial{\theta_{q, n}}}\frac{\partial C_{i+1}}{\partial{\theta_{q, n}}}\right)&\geq\gamma^{4}\alpha^{2}\mathop{\mathbb{E}}\limits_{\boldsymbol{\theta}_{[2L-2]}}\left(
		\frac{\partial }{\partial{\theta_{q, n}}}\mathrm{Tr}\left[ Z_{i}Z_{i+1}\rho_{2L-2}\right]\frac{\partial }{\partial{\theta_{q, n}}}\mathrm{Tr}\left[ Z_{i+1}Z_{i+2}\rho_{2L-2}\right]\right)
		\\
		\label{1crosstoq+1}
		&\geq\gamma^{2(2L-q)}\alpha^{2L-q}\mathop{\mathbb{E}}\limits_{\boldsymbol{\theta}_{[q]}}\left(\frac{\partial }{\partial{\theta_{q, n}}}\mathrm{Tr}\left[ Z_{i}Z_{i+1}\rho_{q}\right]\frac{\partial }{\partial{\theta_{q, n}}}\mathrm{Tr}\left[ Z_{i+1}Z_{i+2}\rho_{q}\right]\right),
		\end{align}
where Eq.~\eqref{1crosstoq+1} is obtained by repeating the same analysis as that in Eq.~\eqref{1cross2L-2}  for  the expectations over $\boldsymbol{\theta}_{2L-2}, \dots, \boldsymbol{\theta}_{q+1}$.
		
Then, we proceed upon the index $n$. When $n=i+1$, we have
		\begin{align}
		&\mathop{\mathbb{E}}\limits_{\boldsymbol{\theta}_{[q]}}\left(\frac{\partial }{\partial{\theta_{q, n}}}\mathrm{Tr}\left[ Z_{i}Z_{i+1}\rho_{q}\right]\frac{\partial }{\partial{\theta_{q, n}}}\mathrm{Tr}\left[ Z_{i+1}Z_{i+2}\rho_{q}\right]\right)\nonumber
		\\
		\geq&\gamma^{2}\beta\mathop{\mathbb{E}}\limits_{\boldsymbol{\theta}_{[q-1]}}\left(\mathrm{Tr}\left[ Z_{i}Z_{i+1}\rho_{q-1}\right]\mathrm{Tr}\left[ Z_{i+1}Z_{i+2}\rho_{q-1}\right]\right)\label{1crossq}
		\\
		\geq&\gamma^{2}\beta\cdot\gamma^{2(q-1)}\alpha^{q-1}\mathrm{Tr}\left[ Z_{i}Z_{i+1}\rho_{0}\right]\mathrm{Tr}\left[ Z_{i+1}Z_{i+2}\rho_{0}\right],\label{1cross0}
		\end{align}
where in Eq.~\eqref{1crossq}, the coefficient $\beta$ comes from Eq.~\eqref{secondpowerlemma2} for the expectation over  $\theta_{q,n}$, and the coefficient $\gamma^{2}$ comes from Eq.~\eqref{firstpowercorollary1} for $\theta_{q,n-1}$ and $\theta_{q,n+1}$. Eq.~\eqref{1cross0}  is obtained in the same way as Eq.~\eqref{1crosstoq+1}.
		
When $n\ne i+1$, by employing either Eq.~\eqref{firstpowercorollary2} for $n\in \left\lbrace i, i+2\right\rbrace$ or Eq.~\eqref{firstpowercorollary4}, otherwise, to take the expectation over $ \theta_{q, n} $, we have
		\begin{equation}\label{1cross00}
		\mathop{\mathbb{E}}\limits_{\boldsymbol{\theta}_{[q]}}\left(\frac{\partial }{\partial{\theta_{q, n}}}\mathrm{Tr}\left[ Z_{i}Z_{i+1}\rho_{q}\right]\frac{\partial }{\partial{\theta_{q, n}}}\mathrm{Tr}\left[ Z_{i+1}Z_{i+2}\rho_{q}\right]\right)=0.
		\end{equation}
		
Now we prove the non-negativity of the generated terms. Without loss of generality, we take the inequality in Eq.~\eqref{1cross0} as an example. The gap between the interested term \eqref{1crossoriginal} and the final term \eqref{1cross0} can be  expressed exactly as a linear combination of all generated terms
		\begin{equation}\label{ZZnon-neg}
		\sum_{k}c_{k}\mathrm{Tr}\left[ \boldsymbol{\sigma}_{\boldsymbol{g}_{i}^{k}}\rho_{0}\right] \mathrm{Tr}\left[ \boldsymbol{\sigma}_{\boldsymbol{g}_{i+1}^{k}}\rho_{0}\right],
		\end{equation}
  where $c_{k}$ is the positive coefficient, $\boldsymbol{\sigma}_{\boldsymbol{g}_{i}^{k}}$ and $\boldsymbol{\sigma}_{\boldsymbol{g}_{i+1}^{k}}$ denote the $N$-qubit Pauli strings generated from $Z_{i}Z_{i+1}$ and $Z_{i+1}Z_{i+2}$, but not themselves, respectively. With the choice of the initial state $\rho_{0}=\left| 0\right\rangle\left\langle 0\right|$, other terms like Eq.~\eqref{ZZnon-neg}  are always non-negative.
		
Therefore, from Eqs.~\eqref{1crosstoq+1}, \eqref{1cross0}, and \eqref{1cross00}, we have
\begin{equation*}
    \mathop{\mathbb{E}}\limits_{\boldsymbol{\theta}}\left(\frac{\partial C_{i}}{\partial{\theta_{q, n}}}\frac{\partial C_{j}}{\partial{\theta_{q, n}}}\right) \geq \left\{
		\begin{aligned}
		& \gamma^{4L}\alpha^{2L-1}\beta, \  n=i=j+1,
		\\
		& \gamma^{4L}\alpha^{2L-1}\beta, \  n=i+1=j,
		\\
		& 0, \qquad\qquad\ \ \mathrm{otherwise},
		\end{aligned}
		\right.
\end{equation*}
and the second summation in Eq.~\eqref{totalsum} is lower bounded by
		\begin{equation}\label{crosssum}
		\sum_{i\ne j=1}^{N-1} {\sum_{q=1}^{2L} {\sum_{n=1}^{N} {\mathop{\mathbb{E}}\limits_{\boldsymbol{\theta}}\left(\frac{\partial C_{i}}{\partial{\theta_{q, n}}}\frac{\partial C_{j}}{\partial{\theta_{q, n}}}\right)}}}\geq 2(N-2)\times 2L\gamma^{4L}\alpha^{2L-1}\beta.
		\end{equation}
		
Combining Eq.~\eqref{quadsum} and Eq.~\eqref{crosssum}, we have
		\begin{equation}
		\mathop{\mathbb{E}}\limits_{\boldsymbol{\theta}}\Vert\nabla_{\boldsymbol{\theta}}{C}\Vert_{2}^{2}\geq 4(N-1)L\alpha^{4L-1}\beta+4(N-2)L\gamma^{4L}\alpha^{2L-1}\beta
		\geq 4(2N-3)L\gamma^{8L-2}\beta\label{TH11proof},
		\end{equation}
where the second inequality is derived by using the inequality
		\begin{equation}\nonumber
		\alpha=\frac{1}{4}\left( 2+\frac{\sin {2a\pi}}{a\pi}\right)\geq \gamma^{2}=\left( \frac{\sin {a\pi}}{a\pi}\right) ^{2},\ a\in\left( 0, 1\right).		
		\end{equation}
		
Finally, we  determine  the hyperparameter $a$. Let $x$ denote $(a\pi)^2$. It can be verified  that  for $a\in \left(0, 1 \right)$,
		\begin{align*}
			&\gamma\geq 1-\frac{1}{6}a^{2}\pi^{2}=1-\frac{1}{6}x,
			\\
			&\beta\geq \frac{1}{3}a^{2}\pi^{2}-\frac{1}{15}a^{4}\pi^{4}=\frac{1}{3}x-\frac{1}{15}x^{2}.
		\end{align*}
In order to ensure the positivity of the right sides of the two inequalities above, we further set $a\in \left(0, \frac{\sqrt{5}}{\pi} \right)$, i.e., $x\in \left(0, 5 \right)$. Then, the lower bound in Eq.~\eqref{TH11}, that is, Eq.~\eqref{TH11proof} can be further bounded by
		\begin{equation}\nonumber
		4\left(2N-3\right)L\left(\frac{1}{3}x-\frac{1}{15}x^{2}\right) {\left(1-\frac{1}{6}x\right)}^{8L-2}
		\triangleq{g}\bm{(}N,L,x\bm{)}.
		\end{equation}
		
It can be verified that $g\bm{(}N,L,x\bm{)}$ attains its maximum when
		\begin{equation}\nonumber
		x=\frac{40L+7-\sqrt{1600L^{2}-400L+49}}{16L}.
		\end{equation}
		Thus, we choose the hyperparameter as
		\begin{equation}\label{keya1}
		a=\frac{1}{4\pi}\sqrt{\frac{40L+7-\sqrt{1600L^{2}-400L+49}}{L}},
		\end{equation}
which has the same scaling as $\frac{1}{\sqrt{L}}$, that is, $a\in \mathrm{\Theta}\left(\frac{1}{\sqrt{L}}\right)$. Moreover, with the selected $a$ in Eq.~\eqref{keya1},
		$g\bm{(}N,L,a\bm{)}$ can be lower bounded by
		\begin{equation}\nonumber
		g\bm{(}N,L,a\bm{)}\geq e^{-1}(2N-3),
		\end{equation}
which is obtained by the fact that the function $\frac{1}{2N-3}g\bm{(}N,L,a\bm{)}$ decreases monotonically with $L$ and converges to $e^{-1}$.
	\end{proof}
	
\section{\label{proofTH2}Proof of Theorem~\ref{TH2}}
We continue to utilize the notation in Appendix~\ref{proofTH1}. In addition, for each term $\boldsymbol{\sigma}_{\boldsymbol{i}}$ in Eq.~\eqref{generalH}, denote $S^{\boldsymbol{i}}$ as the number of nonzero components in the vector $\boldsymbol{i}=\left( i_{1}, i_{2}, \cdots, i_{N}\right) $. According to the definition of $S$-local cost function in the form of Eq.~\eqref{generalH}, we have $S^{\boldsymbol{i}}\le S$. The number of elements being $j$ in vector $\boldsymbol{i} $ is denoted by $S^{\boldsymbol{i}}_{j}$, for $j\in\left\lbrace 1, 2, 3\right\rbrace $. Thus, we have
\begin{equation}\label{Si&S}
	S^{\boldsymbol{i}}=S^{\boldsymbol{i}}_{1}+S^{\boldsymbol{i}}_{2}+S^{\boldsymbol{i}}_{3}\le S.
\end{equation}
	
We introduce $I_{S^{\boldsymbol{i}}}\triangleq\left\lbrace n\,|\,i_{n}\ne 0, n\in\left[N \right]  \right\rbrace$ to denote the set of qubits upon which the term $\boldsymbol{\sigma}_{\boldsymbol{i}}$ acts non-trivially. Similarly, we define $I_{S^{\boldsymbol{i}}_{j}}\triangleq \{ n|i_{n}=j, n\in\left[N \right]\}$ for $j\in \left\lbrace 1, 2, 3\right\rbrace$.
	
Moreover, for $\boldsymbol{i}=( i_{1}, i_{2}, \cdots, i_{N})$, $ i_{j}\in \left\lbrace   0,1,2,3\right\rbrace $, $ j \in [N]$, let  $\boldsymbol{3:i}$ denote the $N$-dimensional vector after replacing all nonzero components in $\boldsymbol{i}$ with $3$, and $\boldsymbol{3:i;1}$ denote the $N$-dimensional vector after replacing all $1s$ in $\boldsymbol{i}$ with $3$.
	
\begin{proof}
For the general Hamiltonian of Eq.~\eqref{generalH}, the expectation of the squared norm of the gradient of the cost function can be described as
		\begin{align}\nonumber
		\mathop{\mathbb{E}}\limits_{\boldsymbol{\theta}}\Vert\nabla_{\boldsymbol{\theta}}{C}\Vert_{2}^{2}
		&=\sum_{q=1}^{2L}{\sum_{n=1}^{N}{ \mathop{\mathbb{E}}\limits_{\boldsymbol{\theta}}\left(\frac{\partial C}{\partial{\theta_{q, n}}}\right)^{2}}}
		\\
		&=\sum_{q=1}^{2L} {\sum_{n=1}^{N} {\mathop{\mathbb{E}}\limits_{\boldsymbol{\theta}}\left(\sum_{\boldsymbol{i}\in \mathcal{N}}{\frac{\partial C_{\boldsymbol{i}}}{\partial{\theta_{q, n}}}}\right)^{2}}}\nonumber
		\\
		\label{totalsumH}
		&=\sum_{\boldsymbol{i}\in \mathcal{N}} {\sum_{q=1}^{2L} {\sum_{n=1}^{N} {\mathop{\mathbb{E}}\limits_{\boldsymbol{\theta}}\left(\frac{\partial C_{\boldsymbol{i}}}{\partial{\theta_{q, n}}}\right)^{2}}}}
		+\sum_{\boldsymbol{i}\ne \boldsymbol{j}\in \mathcal{N}} {\sum_{q=1}^{2L} {\sum_{n=1}^{N} {\mathop{\mathbb{E}}\limits_{\boldsymbol{\theta}}\left(\frac{\partial C_{\boldsymbol{i}}}{\partial{\theta_{q, n}}}\frac{\partial C_{\boldsymbol{j}}}{\partial{\theta_{q, n}}}\right)}}},
		\end{align}
where $C_{\boldsymbol{i}}= \mathrm{Tr}\left[ \boldsymbol{\sigma}_{\boldsymbol{i}}U( \boldsymbol{\theta} )\rho U^{\dagger}( \boldsymbol{\theta} )\right]$. Thus, we turn to analyze the lower bound of Eq.~\eqref{totalsumH}.
		
For the first summation in Eq.~\eqref{totalsumH}, we have
		\begin{equation}\nonumber
		\sum_{\boldsymbol{i}\in \mathcal{N}} {\sum_{q=1}^{2L} {\sum_{n=1}^{N} {\mathop{\mathbb{E}}\limits_{\boldsymbol{\theta}}\left(\frac{\partial C_{\boldsymbol{i}}}{\partial{\theta_{q, n}}}\right)^{2}}}}\geq\sum_{\boldsymbol{i}\in \mathcal{N}} {\sum_{q=1}^{2L-2} {\sum_{n=1}^{N} {\mathop{\mathbb{E}}\limits_{\boldsymbol{\theta}}\left(\frac{\partial C_{\boldsymbol{i}}}{\partial{\theta_{q, n}}}\right)^{2}}}}.
		\end{equation}
Note that, for $q\in\left[2L-2\right]$,
		\begin{align}\nonumber
		\mathop{\mathbb{E}}\limits_{\boldsymbol{\theta}}\left(\frac{\partial C_{\boldsymbol{i}}}{\partial{\theta_{q, n}}}\right)^{2}
		&=\mathop{\mathbb{E}}\limits_{\boldsymbol{\theta}}\left(\frac{\partial }{\partial{\theta_{q, n}}}\mathrm{Tr}\left[ \boldsymbol{\sigma}_{\boldsymbol{i}}\rho_{2L} \right]\right)^{2}
		\\\nonumber
		&=\mathop{\mathbb{E}}\limits_{\boldsymbol{\theta}_{[2L-1]}}\mathop{\mathbb{E}}\limits_{\boldsymbol{\theta}_{2L}}\left(\frac{\partial }{\partial{\theta_{q, n}}}\mathrm{Tr}\left[\boldsymbol{\sigma}_{\boldsymbol{i}}R_{2L}(\boldsymbol{\theta}_{2L})\rho_{2L-1} R_{2L}^{\dagger}(\boldsymbol{\theta}_{2L})\right]\right)^{2}
		\\
		&\geq\alpha^{S^{\boldsymbol{i}}_{3}}\beta^{S^{\boldsymbol{i}}_{1}}\mathop{\mathbb{E}}\limits_{\boldsymbol{\theta}_{[2L-1]}}\left(\frac{\partial }{\partial{\theta_{q, n}}}\mathrm{Tr}\left[\boldsymbol{\sigma}_{\boldsymbol{3:i;1}}\rho_{2L-1}\right]\right)^{2}\label{HR2L}
		\\
		&\geq\alpha^{S^{\boldsymbol{i}}_{1}+2S^{\boldsymbol{i}}_{3}}\beta^{S^{\boldsymbol{i}}_{1}+S^{\boldsymbol{i}}_{2}}\mathop{\mathbb{E}}\limits_{\boldsymbol{\theta}_{[2L-2]}}\left(\frac{\partial}{\partial{\theta_{q, n}}}\mathrm{Tr}\left[ \mathrm{CZ}^{\dagger}_{L}\boldsymbol{\sigma}_{\boldsymbol{3:i}}\mathrm{CZ}_{L}{\rho_{2L-2}}  \right]\right)^{2}\label{HR2L-1}
		\\
		&=\alpha^{S^{\boldsymbol{i}}_{1}+2S^{\boldsymbol{i}}_{3}}\beta^{S^{\boldsymbol{i}}_{1}+S^{\boldsymbol{i}}_{2}}\mathop{\mathbb{E}}\limits_{\boldsymbol{\theta}_{[2L-2]}}\left(\frac{\partial}{\partial{\theta_{q, n}}}\mathrm{Tr}\left[ \boldsymbol{\sigma}_{\boldsymbol{3:i}}{\rho_{2L-2}} \right]\right)^{2}\label{HCZL},
		\end{align}
where  Eq.~\eqref{HR2L} is obtained by using Eq.~\eqref{secondpowercorollary1} up to $S^{\boldsymbol{i}}_{3}+S^{\boldsymbol{i}}_{1}$ times with respect to $\theta_{2L,n}, n\in I_{S^{\boldsymbol{i}}_{3}} \cup I_{S^{\boldsymbol{i}}_{1}}$, and  using Eq.~\eqref{firstpowercorollary3} up to $N-S^{\boldsymbol{i}}_{3}-S^{\boldsymbol{i}}_{1}$ times with respect to the other components in $\boldsymbol{\theta}_{2L}$. The appearance of  the coefficient $\alpha^{S^{\boldsymbol{i}}_{3}}\beta^{S^{\boldsymbol{i}}_{1}}$ is due to the fact that each time when Eq.~\eqref{secondpowercorollary1} is employed we  only keep the item with more Pauli $Z$ tensors in the corresponding $N$-qubit Pauli string observable. Eq.~\eqref{HR2L-1} is derived by using Eq.~\eqref{secondpowercorollary1} up to $S^{\boldsymbol{i}}$ times with respect to $\theta_{2L-1,n}, n\in I_{S^{\boldsymbol{i}}}$ with the appearance of the coefficient $\alpha^{S^{\boldsymbol{i}}_{1}+S^{\boldsymbol{i}}_{3}}\beta^{S^{\boldsymbol{i}}_{2}}$, and  using Eq.~\eqref{firstpowercorollary3} up to $N-S^{\boldsymbol{i}}$ times with respect to the remain components in $\boldsymbol{\theta}_{2L-1}$. Eq.~\eqref{HCZL} follows the same reasoning behind Eq.~\eqref{CZL}.
		
After taking the expectations over $\boldsymbol{\theta}_{2L}$ and $\boldsymbol{\theta}_{2L-1}$,  the observable remained in Eq.~\eqref{HCZL} is  $\boldsymbol{\sigma}_{\boldsymbol{3:i}}$, which is an $N$-qubit tensor product composed of identity and at most $S$ Pauli $Z$ operators. Thus, to bound the expectation term in Eq.~\eqref{HCZL}, we can employ the same technique used for the interested term \eqref{71} related to the original observable $Z_{i}Z_{i+1}$, specifically,
\begin{equation*}
    \mathop{\mathbb{E}}\limits_{\boldsymbol{\theta}_{[2L-2]}}\left(\frac{\partial}{\partial{\theta_{q, n}}}\mathrm{Tr}\left[ \boldsymbol{\sigma}_{\boldsymbol{3:i}}{\rho_{2L-2}} \right]\right)^{2} \geq \left\{
		\begin{aligned}
		& \alpha^{2S^{\boldsymbol{i}}(L-1)-1}\beta,\ n\in I_{S^{\boldsymbol{i}}}, \\
		& 0, \qquad\qquad\quad\ \  \mathrm{otherwise}.
		\end{aligned}
		\right.
\end{equation*}
		
Thus, the first summation in Eq.~\eqref{totalsumH} is lower bounded by
		\begin{align}\nonumber
		\sum_{\boldsymbol{i}\in \mathcal{N}} {\sum_{q=1}^{2L} {\sum_{n=1}^{N} {\mathop{\mathbb{E}}\limits_{\boldsymbol{\theta}}\left(\frac{\partial C_{\boldsymbol{i}}}{\partial{\theta_{q, n}}}\right)^{2}}}}
		&\geq(2L-2)\sum_{\boldsymbol{i}\in \mathcal{N}}{S^{\boldsymbol{i}}\alpha^{2S^{\boldsymbol{i}}(L-1)+S^{\boldsymbol{i}}_{1}+2S^{\boldsymbol{i}}_{3}-1}\beta^{S^{\boldsymbol{i}}_{1}+S^{\boldsymbol{i}}_{2}+1}}
		\\
		&\geq(2L-2)\sum_{\boldsymbol{i}\in \mathcal{N}}{S^{\boldsymbol{i}}\alpha^{2SL-1}\beta^{S+1}}\label{expscale}
		\\
		&\geq|\mathcal{N}|(2L-2)\alpha^{2SL-1}\beta^{S+1}\label{quadsumH}.
		\end{align}
Here, Eq.~\eqref{expscale} is due to Eq.~\eqref{Si&S}, and Eq.~\eqref{quadsumH} is due to  the non-triviality of each term in the general Hamiltonian of Eq.~\eqref{generalH}, that is, $S^{\boldsymbol{i}}\geq 1$, for all $\boldsymbol{i}\in \mathcal{N}$.
		
Next, we calculate the second summation in Eq.~\eqref{totalsumH}. Note that
		\begin{equation}\label{interestedd}
		\mathop{\mathbb{E}}\limits_{\boldsymbol{\theta}}\left(\frac{\partial C_{\boldsymbol{i}}}{\partial{\theta_{q, n}}}\frac{\partial C_{\boldsymbol{j}}}{\partial{\theta_{q, n}}}\right)
		=\mathop{\mathbb{E}}\limits_{\boldsymbol{\theta}}\left(\frac{\partial }{\partial{\theta_{q, n}}}\mathrm{Tr}\left[ \boldsymbol{\sigma}_{\boldsymbol{i}}\rho_{2L}\right]\frac{\partial }{\partial{\theta_{q, n}}}\mathrm{Tr}\left[ \boldsymbol{\sigma}_{\boldsymbol{j}}\rho_{2L}\right]\right).
		\end{equation}
		
We first consider the case where $I_{S^{\boldsymbol{i}}} \cap I_{S^{\boldsymbol{j}}} = \emptyset $. Similar to obtaining Eqs.~\eqref{0cross2L} and \eqref{0cross}, we have
		\begin{equation}
		\mathop{\mathbb{E}}\limits_{\boldsymbol{\theta}}\left(\frac{\partial C_{\boldsymbol{i}}}{\partial{\theta_{q, n}}}\frac{\partial C_{\boldsymbol{j}}}{\partial{\theta_{q, n}}}\right)=0.
		\end{equation}
		
Then we consider the  case where $I_{S^{\boldsymbol{i}}} \cap I_{S^{\boldsymbol{j}}} \ne \emptyset $. Following the technique used for the case $i+1=j$ and the form of Eq.~\eqref{ZZnon-neg} in Appendix~\ref{proofTH1}, Eq.~\eqref{interestedd} can be rewritten as a sum of non-negative terms
		\begin{equation}
		\mathop{\mathbb{E}}\limits_{\boldsymbol{\theta}}\left(\frac{\partial C_{\boldsymbol{i}}}{\partial{\theta_{q, n}}}\frac{\partial C_{\boldsymbol{j}}}{\partial{\theta_{q, n}}}\right)
		=\sum_{k}c_{k}\mathrm{Tr}\left[ \boldsymbol{\sigma}_{\boldsymbol{g}_{\boldsymbol{i}}^{k}}\rho_{0}\right] \mathrm{Tr}\left[ \boldsymbol{\sigma}_{\boldsymbol{g}_{\boldsymbol{j}}^{k}}\rho_{0}\right]
		\geq 0,
		\end{equation}
where $c_{k}$ denotes the positive coefficient, $\boldsymbol{\sigma}_{\boldsymbol{g}_{\boldsymbol{i}}^{k}}$ and $\boldsymbol{\sigma}_{\boldsymbol{g}_{\boldsymbol{j}}^{k}}$ denote the $N$-qubit Pauli strings generated from $\boldsymbol{\sigma}_{\boldsymbol{i}}$ and $\boldsymbol{\sigma}_{\boldsymbol{j}}$, respectively. The non-negativity is owing to the chosen initial state $\rho_{0}=\left| 0\rangle\langle0\right|$.
		
Therefore, the second summation in Eq.~\eqref{totalsumH} is non-negative, i.e.,
		\begin{equation}\label{crosssumH}
		\sum_{\boldsymbol{i}\ne \boldsymbol{j}\in \mathcal{N}} {\sum_{q=1}^{2L} {\sum_{n=1}^{N} {\mathop{\mathbb{E}}\limits_{\boldsymbol{\theta}}\left(\frac{\partial C_{\boldsymbol{i}}}{\partial{\theta_{q, n}}}\frac{\partial C_{\boldsymbol{j}}}{\partial{\theta_{q, n}}}\right)}}}\geq 0.
		\end{equation}
		
Combining Eq.~\eqref{quadsumH} and Eq.~\eqref{crosssumH}, we have  Eq.~\eqref{TH21}.
		
Finally, we determine the hyperparameter $a$ by maximizing the lower bound in Eq.~\eqref{TH21} denoted by $$G\bm{(}S,L, a\bm{)}\triangleq2|\mathcal{N}|(L-1)\alpha^{2SL-1}\beta^{S+1}.$$
		
It can be verified that $G\bm{(}S,L,a\bm{)}$ attains its maximum when $a$ is chosen as the solution of the following equation
		\begin{equation}\nonumber
		\frac{\sin {2a\pi}}{2a\pi}=\frac{S(2L-1)-2}{S(2L+1)},
		\end{equation}
and the solution  $a$ has the same scaling as $\frac{1}{\sqrt{L}}$, that is, $a\in \mathrm{\Theta}(\frac{1}{\sqrt{L}})$. Moreover, with the selected $a$, we have
		\begin{equation}\nonumber
		G\bm{(}S,L, a\bm{)}\geq 2|\mathcal{N}| \left(\frac{S+1}{e S}\right)^{S+1}\frac{L-1}{(2L+1)^{S+1}},
		\end{equation}
which is obtained by the fact that the function ${\left[ \frac{2SL-1}{S(2L+1)}\right] }^{2SL-1}$ decreases monotonically with $L$ and converges to $\frac{1}{e^{S+1}}$, when $S\geq 1$.
\end{proof}

\section{\label{proofTH3}Proof of Lemma~\ref{TH3}}
\begin{proof}
Let us first prove Eq.~\eqref{expectation1} concerning the average of the partial derivative of the cost function with respect to $\theta_{q,n}$.
		
For the Hamiltonian of Eq.~\eqref{toyH},
		\begin{equation}\nonumber
		\mathop{\mathbb{E}}\limits_{\boldsymbol{\theta}}\partial_{\theta_{q, n}}{C}=\sum_{i=1}^{N-1}\mathop{\mathbb{E}}
		\limits_{\boldsymbol{\theta}}\partial_{\theta_{q,n}}C_i,
		\end{equation}
where $C_{i}= \mathrm{Tr}\left[ Z_{i}Z_{i+1}U( \boldsymbol{\theta} )\rho U^{\dagger}( \boldsymbol{\theta} )\right]$. To proceed, assume that $\theta_{q, n}$ lies in the $l$th block. Then, we have
		\begin{align}
		\mathop{\mathbb{E}}\limits_{\boldsymbol{\theta}}\partial_{\theta_{q, n}}{C_{i}}&=\mathop{\mathbb{E}}\limits_{\boldsymbol{\theta}_{[2L]}}\partial_{\theta_{q, n}}{\mathrm{Tr}\left[ Z_{i}Z_{i+1}\rho_{2L}\right]}\nonumber
		\\
		&=\gamma^{2}\mathop{\mathbb{E}}\limits_{\boldsymbol{\theta}_{[2L-1]}}\partial_{\theta_{q, n}}{\mathrm{Tr}\left[ Z_{i}Z_{i+1}\rho_{2L-1}\right]}\label{ER2L}
		\\
		&=\gamma^{4}\mathop{\mathbb{E}}\limits_{\boldsymbol{\theta}_{[2L-2]}}\partial_{\theta_{q, n}}{\mathrm{Tr}\left[ \mathrm{CZ}^{\dagger}_{L}Z_{i}Z_{i+1}\mathrm{CZ}_{L}{\rho_{2L-2}}\right]}\label{ER2L-1}
		\\
		&=\gamma^{4}\mathop{\mathbb{E}}\limits_{\boldsymbol{\theta}_{[2L-2]}}\partial_{\theta_{q, n}}{\mathrm{Tr}\left[ Z_{i}Z_{i+1}\rho_{2L-2}\right]}\label{ER2L-2}
		\\
		&=\gamma^{4(L-l)}\mathop{\mathbb{E}}\limits_{\boldsymbol{\theta}_{[2l]}}\partial_{\theta_{q, n}}{\mathrm{Tr}\left[ Z_{i}Z_{i+1}\rho_{2l}\right]}.\label{ER2l}
		\end{align}
Here, the appearance of $\gamma^2$ in Eq.~\eqref{ER2L} is due to taking the expectations over $\theta_{2L,i}$ and $ \theta_{2L,i+1}$ by employing Eq.~\eqref{Efirstpower1} and taking the expectations over the other $N-2$ parameters in $\boldsymbol{\theta}_{2L}$ by utilizing Eq.~\eqref{Efirstpower3}. Following the same reasoning, Eq.~\eqref{ER2L-1} is derived. In addition, Eq.~\eqref{ER2L-2} is due to Lemma~\ref{CZlemma}.  Eqs. \eqref{ER2L}-\eqref{ER2L-2}  comprise a complete calculation for the $L$th block in Fig.~\ref{fig:setup}(a). For the following $L-l-1$ blocks, we repeat the same procedures and obtain Eq.~\eqref{ER2l}.
		
Now we proceed upon the index $n$. For $n\notin \left\lbrace i, i+1\right\rbrace $, from Eq.~\eqref{Efirstpower4}, we have
		\begin{equation}\nonumber
		\mathop{\mathbb{E}}\limits_{\theta_{q, n}}\partial_{\theta_{q, n}}{\mathrm{Tr}\left[ Z_{i}Z_{i+1}\rho_{q}\right]}=0,
		\end{equation}
which results in $\mathop{\mathbb{E}}\limits_{\boldsymbol{\theta}}\partial_{\theta_{q, n}}{C}=0$.
		
For $n\in \left\lbrace i, i+1\right\rbrace $, we first consider the case where $q=2l$. Note that by employing Eq.~\eqref{Efirstpower2} for $\theta_{q, n}$, we have
		\begin{equation}\label{2l}
		\mathop{\mathbb{E}}\limits_{\boldsymbol{\theta}_{2l}}\partial_{\theta_{2l, n}}{\mathrm{Tr}\left[ Z_{i}Z_{i+1}\rho_{2l}\right]}= \left\{
		\begin{aligned}
		& -\gamma^{2}\mathrm{Tr}\left[ X_{i}Z_{i+1}\rho_{2l-1}\right],\ n=i \\
		& -\gamma^{2}\mathrm{Tr}\left[ Z_{i}X_{i+1}\rho_{2l-1}\right],\ n=i+1.
		\end{aligned}
		\right.
		\end{equation}
Moreover, by employing Eq.~\eqref{Efirstpower1}, we have
		\begin{equation}\label{2l-1}
		\left\{
		\begin{aligned}
		\mathop{\mathbb{E}}\limits_{\boldsymbol{\theta}_{2l-1}}\mathrm{Tr}\left[ X_{i}Z_{i+1}\rho_{2l-1}\right]=& \gamma\mathrm{Tr}\left[\mathrm{CZ}^{\dagger}_{l} X_{i}Z_{i+1}\mathrm{CZ}_{l}{\rho_{2l-2}}\right],\ n=i \\
		\mathop{\mathbb{E}}\limits_{\boldsymbol{\theta}_{2l-1}}\mathrm{Tr}\left[ Z_{i}X_{i+1}\rho_{2l-1}\right]=& \gamma\mathrm{Tr}\left[\mathrm{CZ}^{\dagger}_{l} Z_{i}X_{i+1}\mathrm{CZ}_{l}{\rho_{2l-2}}\right],\ n=i+1.
		\end{aligned}
		\right.
		\end{equation}
Combining Eq.~\eqref{2l} and Eq.~\eqref{2l-1}, we obtain
		\begin{equation}
		\mathop{\mathbb{E}}\limits_{\boldsymbol{\theta}_{2l-1}}\mathop{\mathbb{E}}\limits_{\boldsymbol{\theta}_{2l}}\partial_{\theta_{2l, n}}{\mathrm{Tr}\left[ Z_{i}Z_{i+1}\rho_{2l}\right]}
		=\left\{
		\begin{aligned}
		&-\gamma^{3}\mathrm{Tr}\left[\mathrm{CZ}^{\dagger}_{l} X_{i}Z_{i+1}\mathrm{CZ}_{l}{\rho_{2l-2}}\right],\ n=i \\
		&-\gamma^{3}\mathrm{Tr}\left[\mathrm{CZ}^{\dagger}_{l} Z_{i}X_{i+1}\mathrm{CZ}_{l}{\rho_{2l-2}}\right],\ n=i+1.
		\end{aligned}\label{q=2l}
		\right.
		\end{equation}
		
Similarly, for $n\in \left\lbrace i, i+1\right\rbrace $ and $q=2l-1$, we have
		\begin{align}\nonumber
		\mathop{\mathbb{E}}\limits_{\boldsymbol{\theta}_{2l-1}}\mathop{\mathbb{E}}\limits_{\boldsymbol{\theta}_{2l}}\partial_{\theta_{2l-1, n}}{\mathrm{Tr}\left[ Z_{i}Z_{i+1}\rho_{2l}\right]}
		&=\gamma^{2}\mathop{\mathbb{E}}\limits_{\boldsymbol{\theta}_{2l-1}}\partial_{\theta_{2l-1, n}}{\mathrm{Tr}\left[ Z_{i}Z_{i+1}\rho_{2l-1}\right]}
		\\
		&=\left\{
		\begin{aligned}
		&\gamma^{4}\mathrm{Tr}\left[\mathrm{CZ}^{\dagger}_{l} Y_{i}Z_{i+1}\mathrm{CZ}_{l}{\rho_{2l-2}}\right],\ n=i \\
		&\gamma^{4}\mathrm{Tr}\left[\mathrm{CZ}^{\dagger}_{l} Z_{i}Y_{i+1}\mathrm{CZ}_{l}{\rho_{2l-2}}\right],\ n=i+1.
		\end{aligned}\label{q=2l-1}
		\right.
		\end{align}
		
According to Lemma~\ref{CZlemma}, after applying the $l$th entanglement layer, the generated observables in Eqs.~\eqref{q=2l} and \eqref{q=2l-1} are all  $N$-qubit Pauli strings with the $n$th tensor component unchanged and being either Pauli $X$ or Pauli $Y$.
		
Subsequently, from Eqs.~\eqref{Efirstpower1} and \eqref{Efirstpower3}, taking the expectations over $\boldsymbol{\theta}_{q-1}, \dots, \boldsymbol{\theta}_{1}$ does not change the observables any longer. Moreover, applying the entanglement layers does not change the $n$th tensor component of the observables in Eqs.~\eqref{q=2l} and \eqref{q=2l-1}, which is either Pauli $X$ or Pauli $Y$. Thus, for $n\in \left\lbrace i, i+1\right\rbrace $, we have
		\begin{equation}\label{0expectation}
		\mathop{\mathbb{E}}\limits_{\boldsymbol{\theta}_{[2l]}}\partial_{\theta_{q, n}}{\mathrm{Tr}\left[ Z_{i}Z_{i+1}\rho_{2l}\right]}=c\,\mathrm{Tr}\left[ \boldsymbol{\sigma}_{\boldsymbol{g}_{i}}{\rho_{0}}\right],
		\end{equation}
where $c$ is a positive coefficient, $\boldsymbol{\sigma}_{\boldsymbol{g}_{i}}$ denotes the final Pauli string generated from $Z_{i}Z_{i+1}$, whose $n$th tensor component is Pauli $X$ for $q=2l$ and Pauli $Y$ for $q=2l-1$. Moreover, with the choice of $\rho_{0}=\left| 0\right\rangle\left\langle 0\right|$, Eq.~\eqref{0expectation} is equal to zero, which results in $\mathop{\mathbb{E}}\limits_{\boldsymbol{\theta}}\partial_{\theta_{q, n}}{C}=0$.
		
Therefore, we have proved Eq.~\eqref{expectation1}.  The proof of Eqs.~\eqref{variance}-\eqref{TH38} can be straightforwardly derived from the proof in Appendix~\ref{proofTH1}.
\end{proof}
	
\section{\label{proofTH4}Proof of Theorem~\ref{TH4}}
\begin{proof}
The proof of Eqs.~\eqref{TH41} and \eqref{TH44} can be straightforwardly derived from the proof in Appendix~\ref{proofTH2}. We now focus on proving Eq.~\eqref{TH42} when the number of the elements in vector $\boldsymbol{i}$ belonging to $\left\lbrace 1, 2  \right\rbrace$ is not equal to 1 for all $\boldsymbol{i}\in \mathcal{N}$.
		
Following the notation in Appendix~\ref{proofTH2}, we restate Eq.~\eqref{TH42}. Specifically, if $S^{\boldsymbol{i}}_{1}+S^{\boldsymbol{i}}_{2}\ne 1$ for all $\boldsymbol{i}\in \mathcal{N}$, for an arbitrary trainable parameter $\theta_{q, n}$,
	\begin{equation}
		\mathop{\mathbb{E}}\limits_{\boldsymbol{\theta}}\partial_{\theta_{q, n}}{C}=\sum_{\boldsymbol{i}\in \mathcal{N}}\mathop{\mathbb{E}}\limits_{\boldsymbol{\theta}}\partial_{\theta_{q, n}}{C_{\boldsymbol{i}}}=0,
	\end{equation}
where $C_{\boldsymbol{i}}= \mathrm{Tr}\left[ \boldsymbol{\sigma}_{\boldsymbol{i}}U( \boldsymbol{\theta} )\rho U^{\dagger}( \boldsymbol{\theta} )\right]$. To prove it, we assume that $\theta_{q, n}$ lies in the $l$th block.
		
First, if  $S^{\boldsymbol{i}}_{1}+S^{\boldsymbol{i}}_{2}= 0$, we can directly generalize the proof in Appendix~\ref{proofTH3} to obtain
	\begin{equation}
		\mathop{\mathbb{E}}\limits_{\boldsymbol{\theta}}\partial_{\theta_{q, n}}{C_{\boldsymbol{i}}}=0.
	\end{equation}
		
Next, if $S^{\boldsymbol{i}}_{1}+S^{\boldsymbol{i}}_{2}\geq2$, note that
	\begin{align}
		\mathop{\mathbb{E}}\limits_{\boldsymbol{\theta}}\partial_{\theta_{q, n}}{C_{\boldsymbol{i}}}&=\mathop{\mathbb{E}}\limits_{\boldsymbol{\theta}_{[2L]}}\partial_{\theta_{q, n}}{\mathrm{Tr}\left[ \boldsymbol{\sigma}_{\boldsymbol{i}}\rho_{2L}\right]}\nonumber
		\\
		&=\gamma^{S^{\boldsymbol{i}}_{1}+S^{\boldsymbol{i}}_{3}}\mathop{\mathbb{E}}\limits_{\boldsymbol{\theta}_{[2L-1]}}\partial_{\theta_{q, n}}{\mathrm{Tr}\left[ \boldsymbol{\sigma}_{\boldsymbol{i}}\rho_{2L-1}\right]}\label{EER2L}
		\\
		&=\gamma^{S^{\boldsymbol{i}}+S^{\boldsymbol{i}}_{3}}\mathop{\mathbb{E}}\limits_{\boldsymbol{\theta}_{[2L-2]}}\partial_{\theta_{q, n}}{\mathrm{Tr}\left[ \mathrm{CZ}^{\dagger}_{L}\boldsymbol{\sigma}_{\boldsymbol{i}}\mathrm{CZ}_{L}{\rho_{2L-2}}\right]}\label{EER2L-1}
		\\
		&=\gamma^{(S^{\boldsymbol{i}}+S^{\boldsymbol{i}}_{3})(L-l)}\mathop{\mathbb{E}}\limits_{\boldsymbol{\theta}_{[2l]}}\partial_{\theta_{q, n}}{\mathrm{Tr}\left[ \boldsymbol{\sigma}_{\boldsymbol{g}_{\boldsymbol{i},2l}}\rho_{2l}\right]}\label{EER2l}.
	\end{align}
 Here, Eqs.~\eqref{EER2L} and \eqref{EER2L-1} are derived from the same reasoning as Eqs.~\eqref{ER2L} and \eqref{ER2L-1}, respectively. According to Lemma~\ref{CZlemma}, after applying the entanglement layer in  Eq.~\eqref{EER2L-1}, the position set of the tensor components being X or Y does not change, being equal to $I_{S^{\boldsymbol{i}}_{1}}\cup I_{S^{\boldsymbol{i}}_{2}}$. Then we  repeat the above procedures for the following  $L-l-1$ blocks and obtain  Eq.~\eqref{EER2l}, where $\boldsymbol{\sigma}_{\boldsymbol{g}_{\boldsymbol{i},2l}}$ denotes the generated observable after calculating the last $L-l$ blocks with the invariant set $I_{S^{\boldsymbol{g}_{\boldsymbol{i},2l}}_{1}}\cup I_{S^{\boldsymbol{g}_{\boldsymbol{i},2l}}_{2}}=I_{S^{\boldsymbol{i}}_{1}}\cup I_{S^{\boldsymbol{i}}_{2}}$. By now, $S^{\boldsymbol{g_{i,2l}}}_1+S^{\boldsymbol{g_{i,2l}}}_2=S^{\boldsymbol{i}}_{1}+S^{\boldsymbol{i}}_{2}\geq2$.
		
For the $l$th block where $\theta_{q,n}$ lies, from Lemma~\ref{firstpowerlemma}, when taking the expectations over $\boldsymbol{\theta}_{2l}$ and $\boldsymbol{\theta}_{2l-1}$, the number of the tensor components being $X$ or $Y$ changes, which either increases or decreases by at most 1. In fact, the possible change only happens when employing Eq.~\eqref{Efirstpower2}. Moreover, according to Lemma~\ref{CZlemma}, after calculating the $l$th block, we have
\begin{equation*}
     S^{\boldsymbol{g_{i,2(l-1)}}}_1+S^{\boldsymbol{g_{i,2(l-1)}}}_2\geq S^{\boldsymbol{i}}_{1}+S^{\boldsymbol{i}}_{2}-1\geq1.
\end{equation*}	
		
During the calculations of the expectations over $\boldsymbol{\theta}_{[2(l-1)]}$, from Lemmas~\ref{firstpowerlemma} and \ref{CZlemma}, the number of the tensor components being $X$ or $Y$ does not change any longer. Since
\begin{equation*}  \mathop{\mathbb{E}}\limits_{\boldsymbol{\theta}}\partial_{\theta_{q, n}}{C_{\boldsymbol{i}}}\propto \mathrm{Tr}\left[\boldsymbol{\sigma_{g_{i,0}}}\rho_0\right],
\end{equation*}
and at least one of the tensor components in $\boldsymbol{\sigma_{g_{i,0}}}$ is $X$ or $Y$, with the selected $\rho_0=|0\rangle\langle0|$, we have
		\begin{equation}\label{key}
		\mathop{\mathbb{E}}\limits_{\boldsymbol{\theta}}\partial_{\theta_{q, n}}{C_{\boldsymbol{i}}}=0.
		\end{equation}
		
Thus, Eq.~\eqref{TH42} is proved and so is Eq.~\eqref{TH43}.
\end{proof}

\end{document}